\theoremstyle{definition}
\newtheorem{problem}{Problem}
\title{A Graph Width Perspective on Partially Ordered Hamiltonian Paths}
\author{Jesse Beisegel}{Institute of Mathematics, Brandenburg University of Technology, Cottbus, Germany}{jesse.beisegel@b-tu.de}{https://orcid.org/0000-0002-8760-0169}{}
\author{Katharina Klost}{Institute of Computer Science, Freie Universität Berlin, Germany}{katharina.klost@fu-berlin.de}{https://orcid.org/0000-0002-9884-3297}{}
\author{Kristin Knorr}{Institute of Computer Science, Freie Universität Berlin, Germany}{kristin.knorr@fu-berlin.de}{https://orcid.org/0000-0003-4239-424X}{}
\author{Fabienne Ratajczak}{Institute of Mathematics, Brandenburg University of Technology, Cottbus, Germany}{fabienne.ratajczak@b-tu.de}{https://orcid.org/0000-0002-5823-1771}{}
\author{Robert Scheffler}{Institute of Mathematics, Brandenburg University of Technology, Cottbus, Germany}{robert.scheffler@b-tu.de}{https://orcid.org/0000-0001-6007-4202}{}
\authorrunning{J. Beisegel, K. Klost, K. Knorr, F. Ratajczak, and R. Scheffler}
\tikzstyle{vertex}=[draw, circle, fill=black, inner sep=1.5pt]
\renewcommand{\P}{\ensuremath{\mathsf{P}}}
\newcommand{\NP}{\ensuremath{\mathsf{NP}}}
\newcommand{\XP}{\ensuremath{\mathsf{XP}}{}}
\newcommand{\FPT}{\ensuremath{\mathsf{FPT}}}
\newcommand{\W}{\ensuremath{\mathsf{W[1]}}}
\newcommand{\MSO}{\ensuremath{\mathsf{MSO}}}
\newcommand{\cP}{\ensuremath{\mathcal{P}}}
\newcommand{\A}{\ensuremath{\mathcal{A}}}
\renewcommand{\O}{\ensuremath{\mathcal{O}}}
\newcommand{\G}{\ensuremath{\mathcal{G}}}
\newcommand{\T}{\ensuremath{\mathcal{T}}}
\newcommand{\N}{\ensuremath{\mathbb{N}}}
\newcommand{\R}{\ensuremath{\mathcal{R}}}
\newcommand{\pip}{{\pi'}}
\renewcommand{\phi}{\varphi}
\newcommand{\paramfont}{\slshape}
\newcommand{\param}[1]{{\paramfont #1}}
\crefname{claim}{Claim}{Claims}
\Crefname{claim}{Claim}{Claims}
\theoremstyle{claimstyle}
\newcommand{\pef}[1]{(P\ref{#1})}
\keywords{Hamiltonian path, partial order, graph width parameter, parameterized complexity}
\theoremstyle{plain}
\theoremstyle{definition}
\theoremstyle{claimstyle}
\newenvironment{theorem*}
\def\expandafter\thetheorem\expandafter{\thetheorem}\theorem}
\newenvironment{lemma*}
\def\expandafter\thelemma\expandafter{\thelemma}\lemma}
\newenvironment{observation*}
\def\expandafter\theobservation\expandafter{\theobservation}\observation}
\newenvironment{definition*}
\def\expandafter\thedefinition\expandafter{\thedefinition}\definition}
\newenvironment{corollary*}
\def\expandafter\thecorollary\expandafter{\thecorollary}\corollary}
\renewenvironment{claim*}
\def\expandafter\theclaim\expandafter{\theclaim}\claim}
\newenvironment{problem*}{\expandafter\def\expandafter\theproblem\expandafter{\theproblem}\problem}
 {\endproblem}
 \crefname{claim*}{Claim}{Claims}
\Crefname{claim*}{Claim}{Claims}
\begin{document}

\maketitle

\begin{abstract}
We consider the problem of finding a Hamiltonian path with precedence constraints in the form of a partial order on the vertex set. This problem is known as \textsc{Partially Ordered Hamiltonian Path Problem} (POHPP). Here, we study the complexity for graph width parameters for which the ordinary \textsc{Hamiltonian Path} problem is in \FPT. We show that  POHPP is \NP-complete for graphs of pathwidth~4. We complement this result by giving polynomial-time algorithms for graphs of pathwidth~3 and treewidth~2. Furthermore, we show that POHPP is \NP-hard for graphs of clique cover number~2 and \W-hard for some distance-to-\G~parameters, including distance to path and distance to clique. In addition, we present \XP{} and \FPT{} algorithms for parameters such as distance to block and feedback edge set number.
\end{abstract}

\section{Introduction}

\subparagraph*{Hamiltonian Paths and Cycles with Precedence} 

For some applications of the well-known \textsc{Traveling Salesman Problem} (TSP) it is necessary to add additional precedence constraints to the vertices which ensure that some vertices are visited before others in a tour. Examples are \emph{Pick-up and Delivery}  Problems~\cite{parragh2008survey,parragh2008survey2} or the \emph{Dial-a-Ride} problem~\cite{psaraftis1980dynamic}, where goods or people have to be picked up before they can be brought to their destination.

For that reason, both the cycle and the path variant of the TSP have been considered together with precedence constraints. The cycle variant is known as \textsc{Traveling Salesman Problem with Precedence Constraints} (TSP-PC) and has been studied, e.g., in \cite{ahmed2001travelling,bianco1994exact}. The path variant, known as the \textsc{Sequential Ordering Problem} (SOP) or the \textsc{Minimum Setup Scheduling Problem}, has been studied, e.g., in~\cite{ascheuer1993cutting,colbourn1985minimizing,escudero1988inexact,escudero1988implementation}. 

Of course, all these problems are \NP-complete and research in these topics has mainly been focused on heuristic algorithms and integer-programming approaches. Furthermore, these problems are defined over complete graphs with an additional cost function. The unweighted variants \textsc{Hamiltonian Path} and \textsc{Hamiltonian Cycle} with precedence constraints for non-complete graphs have not received the same level of attention for a long time. Results have been only given for the very restricted variants where one or both endpoints of the Hamiltonian path are fixed. For these problems, polynomial-time algorithms have been presented for several graph classes including (proper) interval graphs~\cite{asdre2010fixed,asdre2010polynomial,li2017linear,mertzios2010optimal}, distance-hereditary graphs~\cite{hsieh2004efficient,yeh1998path}, and rectangular grid graphs~\cite{itai1982hamilton}.

To overcome this lack of research, Beisegel et al.~\cite{beisegel2024computing} introduced the following problem.

\begin{problem}{\textsc{Partially Ordered Hamiltonian Path Problem} (POHPP)}
\begin{description}
\item[\textbf{Instance:}] A graph $G$, a partial order $\pi$ on the vertex set of $G$.
\item[\textbf{Question:}]
Is there an ordered Hamiltonian path $(v_1, \dots, v_n)$ in $G$ such that for all $i, j \in \{1,\dots,n\}$ it holds that if $(v_i,v_j) \in \pi$, then $i \leq j$?
 \end{description}
\end{problem}

They also introduced the edge-weighted variant \textsc{Minimum Partially Ordered Hamiltonian Path Problem} (MinPOHPP). The authors show that  POHPP is already \NP-hard for complete bipartite graphs and complete split graphs -- graph classes where \textsc{Hamiltonian Path} is trivial.
They also show that  POHPP is \W-hard when parameterized by the \param{width} of the partial order, i.e., the largest number of pairwise incomparable elements. Furthermore, they show that the \XP{} algorithm for that parametrization presented in the 1980s by Coulbourn and Pulleyblank~\cite{colbourn1985minimizing} is asymptotically optimal -- assuming the Exponential Time Hypothesis (ETH). They improve the algorithm to \FPT{} time if the problem is parameterized by the partial order's \param{distance to linear order}. Finally, the authors present a polynomial-time algorithm for MinPOHPP on outerplanar graphs.

\subparagraph*{Graph Width Parameters} 

Since many graph problems are \NP-hard, legions of researchers have been trying to find tractable instances of these problems. One approach is the idea to consider \emph{graph classes}, i.e., subsets of the set of graphs. Another approach are \emph{graph width parameters}. In essence, such a parameter is a mapping from the set of graphs to the integers. The idea is that graphs mapped to large values are in some sense more complex than graphs mapped to small values. Graph width parameters can also been seen as infinite families of graph classes since for every integer $k$ the graphs of parameter value $k$ form a graph class.

One of the first graph width parameters considered to solve \NP-hard problems was \param{bandwidth}~\cite{monien1980bounding}. Later, \param{treewidth} -- probably the most-famous graph width parameter -- gained much attention. Many problems that are \NP-hard for general graphs can be solved in polynomial time when the \param{treewidth} is bounded. In particular, Courcelle~\cite{courcelle1990monadic,courcelle1992monadic} showed in his famous theorem that every problem expressible in \emph{monadic second order logic} ($\MSO_2$) is solvable in \FPT{} time when parameterized by \param{treewidth}. One disadvantage of \param{treewidth} is the fact that the parameter is \emph{sparse}, i.e., the number of edges of a graph with $n$ vertices and \param{treewidth}~$k$ is bounded by $\O(kn)$. Nevertheless, there are graph classes with many edges which allow efficient algorithms. To overcome this problem, \param{cliquewidth} has been considered, a parameter that generalizes \param{treewidth}. Courcelle et al.~\cite{courcelle2000linear} showed for \param{cliquewidth} that every problem expressible in $\MSO_1$ is solvable in \FPT{} time. In this logic, quantification is only allowed over sets of vertices while in $\MSO_2$ one can also quantify over sets of edges. Besides \param{treewidth} and \param{cliquewidth}, several other width parameters have been introduced as is demonstrated in a survey from 2008~\cite{hlineny2008width}. Since then, research on this topic has continued intensively and led to parameters such as \param{mim-width}~\cite{vatshelle2012new}, \param{twin-width}~\cite{bonnet2022twin-width}, and the \param{tree-independence number}~\cite{dallard2024treewidth}.

In many cases, graph width parameters are strongly related to certain graph classes. If there is a graph class where many problems can be solved efficiently, then it is a common approach to introduce parameters that describe how far away a graph is from this graph class. In some sense, \param{treewidth} describes how tree-like a graph is. An approach used for graph classes that can be defined by intersection models is to generalize these models to a whole hierarchy. This approach was, e.g., used for interval graphs~\cite{beisegel2024simultaneous,fellows2009parameterized,francis2015maximum,jiang20212parameterized}. Another way to define such graph width parameters is to consider the smallest number of vertices or edges that have to be removed such that the resulting graph is in the class. We refer to these parameters as \param{(vertex) distance to $\G$} and \param{edge distance to $\G$} where $\G$ is the respective graph class.\footnote{Alternative terms for these parameters are \param{$\G$ vertex deletion number} and \param{$\G$ edge deletion number}.} Some of these parameters have their own names. The \param{distance to edgeless} is called the \param{vertex cover number}. The \param{distance to forest} is called \param{feedback vertex set number} while the \param{edge distance to forest} is called \param{feedback edge set number} or \param{circuit rank}. Here, we will also consider a special variant of these distance parameters that is motivated by the parameter \param{twin-cover number} introduced by Ganian~\cite{ganian2011twin-cover}. We say a graph has \param{distance to $\G$ module(s)}\footnote{If the class $\G$ contains only connected graphs, then we use \param{distance to $\G$ module} otherwise we use \param{distance to $\G$ modules}.} if there is a set $W \subseteq V(G)$ of $k$ vertices such that $G - W$ is in $\G$ and every component of $G - W$ forms a module in $G$. Using this terminology, the \param {twin-cover number} is equal to the \param{distance to cluster modules}.

\subparagraph*{Intersections}

\textsc{Hamiltonian Path} and \textsc{Hamiltonian Cycle} belong to the most-famous \NP-hard graph problems and, hence, have been studied for a wide range of graph classes and graph width parameters.\footnote{In general, \textsc{Hamiltonian Path} seems to lead a shadowy existence since many positive and negative algorithmic results are only given for its more popular sibling \textsc{Hamiltonian Cycle}.}

One of the first results for width parameters were polynomial-time algorithms for \textsc{Hamiltonian Cycle} and TSP on graphs of bounded \param{bandwidth}~\cite{lawler1985traveling,monien1980bounding}. Since both \textsc{Hamiltonian Cycle} and \textsc{Hamiltonian Path} are expressible in $\MSO_2$, Courcelle's theorem~\cite{courcelle1990monadic,courcelle1992monadic} implies \FPT{} algorithms for these problems when parameterized by \param{treewidth}. The running time bounds depending on the \param{treewidth}~$k$ given by Courcelle's theorem are quite bad. However, there are also \FPT{} algorithms with single exponential dependency on $k$~\cite{cygan2022solving}. Probably, these results cannot be transfered to \param{cliquewidth}. In contrast to $\MSO_2$, both \textsc{Hamiltonian Cycle} and \textsc{Hamiltonian Path} are not expressible in $\MSO_1$~\cite[Corollary~5.3.5]{ebbinghaus1995finite}. In fact, \textsc{Hamiltonian Cycle} is \W-hard when parameterized by \param{cliquewidth}~\cite{fomin2010intractability} and -- unless $\FPT = \W$ -- it is not solvable in \FPT{} time. Furthermore, it was shown that -- unless the ETH fails -- the problem cannot be solved in $f(k) n^{o(k)}$ time where $k$ is the \param{cliquewidth} and $n$ is the number of vertices~\cite{fomin2018clique-width}. This lower bound is matched by an $f(k) n^{\O(k)}$ algorithm given by Bergougnoux et al.~\cite{bergougnoux2020optimal}. \XP{} algorithms for \param{cliquewidth} with worse exponents have been presented earlier by Wanke~\cite{wanke1994k-nlc} (for the equivalent NLC-width) and by Espelage et al.~\cite{espelage2001solve}. 

As \FPT{} algorithms for \textsc{Hamiltonian Path} and \textsc{Hamiltonian Cycle} parameterized by \param{cliquewidth} are highly unlikely, the research focused on upper bounds of \param{cliquewidth}. \FPT{} algorithms were presented for \param{neighborhood diversity}~\cite{lampis2012algorithmic}, \param{distance to cluster}~\cite{doucha2012cluster,jansen2013power}, \param{modular width}~\cite{gajarsky2013parameterized} and \param{split matching width}~\cite{saether2016between,saether2017solving}. Besides this, parameterized algorithms have also been given for graph width parameters incomparable to \param{cliquewidth} and \param{treewidth}. Examples are \FPT{} algorithms for \param{distance to proper interval graphs}~\cite{golovach2020graph} and for two parameters that describe the distance to Dirac's condition on the existence of Hamiltonian paths~\cite{jansen2019hamiltonicity}. Most recently, \XP{} and \FPT{} algorithms for the \param{independence number} have been presented~\cite{fomin2024hamiltonicity,jedlickova2024hamiltonian}.

\subparagraph*{Our Contribution} 

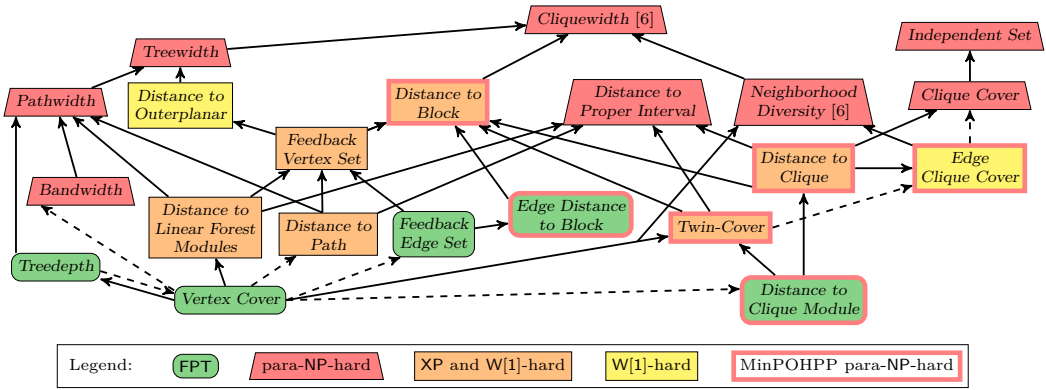
\begin{figure}
\centering
\resizebox{\textwidth}{!}{
\begin{tikzpicture}[xscale=2.5,yscale=1]
  %\usetikzlibrary{shadows,fadings}
  \tikzset{np/.style={draw, trapezium, trapezium angle=77.5, fill=red!50!white}}
  \tikzset{fpt/.style={draw,rectangle, fill=green!30!lightgray,rounded corners,
  }}
  \tikzset{xpw/.style={draw,rectangle,fill=orange!50!white}}
  \tikzset{xp/.style={draw,rectangle,fill=yellow!20!white}}
  \tikzset{w/.style={draw,rectangle,fill=yellow!70!white}}
  \tikzset{fptnp/.style={draw,rectangle, fill=green!30!lightgray,  draw=red!50!white, line width=2pt,
  %double=red!90!white, double distance =1pt,
  rounded corners
  }}
  \tikzset{minnp/.style={draw=red!50!white, line width=2pt}}
  \tikzset{xpnp/.style={draw,rectangle,fill=yellow!20!white, draw=red!50!white, line width=2pt, %double=red!90!white, double distance =1pt
  }}
  \tikzset{wnp/.style={draw,rectangle,fill=yellow!70!white,  draw=red!50!white, line width=2pt, % double=red!90!white, double distance =1pt
  }} 
  \tikzset{xpwnp/.style={draw,rectangle,fill=orange!50!white,  draw=red!50!white, line width=2pt,
  %double=red!90!white, double distance =1pt
  }}
  \tikzset{open/.style={draw,rectangle}}
%  \tikzset{thickbox/.style={draw,rectangle,ultra thick}}
  \scriptsize
  \paramfont
  %\node[roundedbox] (vc) at (0.5,-0.75) {\mathstrut vertex cover};
  %\node[roundedbox] (td) at (0.5,0) {\mathstrut treedepth};
  %\node[fpt] (ml) at (-0.75,-0.25) {\mathstrut Max Leaf};
  \node[fpt] (vc) at (-0.95,0) {\mathstrut Vertex Cover};
  \node[xpw, align=center] (d2path) at (-0.36,1) {\mathstrut Distance to \\ Path};
  %\node[xpw, align=center] (d2path) at (-1.25,1) {\mathstrut Distance to \\ Path};
  \node[xpwnp, align=center] (d2clique) at (2.5,2) {\mathstrut Distance to \\ Clique};
  %\node[xpwnp, align=center] (d2cluster) at (2.5,0.8) {\mathstrut Distance to \\ Cluster};
  \node[xpwnp, align=center] (d2block) at (0.3,3) {\mathstrut Distance to \\ Block};
  %\node[wnp, align=center] (d2sc) at (0.3,3.5) {\mathstrut Distance to \\ Strictly Chordal};
  \node[fptnp, align=center] (ed2block) at (1.1,1.3) {\mathstrut Edge Distance \\ to Block};
  %\node[fptnp, align=center] (ed2block) at (1.1,1.3) {\mathstrut Edge Distance \\ to Block};
  \node[xpwnp, align=center] (tc) at (2,1.1) {\mathstrut Twin-Cover};
  %\node[fptnp, align=center] (d2clusterm) at (1.5,0) {\mathstrut Distance to \\ Cluster Modules};
  \node[wnp, align=center] (ecc) at (3.5,2) {\mathstrut Edge \\ Clique Cover};
  \node[np, align=center] (nd) at (2.5,3) {\mathstrut Neighborhood \\ Diversity~{\normalfont \cite{beisegel2024computing}}};
  \node[np, align=center] (cc) at (3.5,3.1) {\mathstrut Clique Cover};
  \node[np, align=center] (is) at (3.5,4) {\mathstrut Independent Set};
  %\node[np, align=center] (dom) at (3.5,5) {\mathstrut Domination};
  \node[fpt] (td) at (-2,0.5) {\mathstrut Treedepth};
  \node[fpt, align=center] (fes) at (0.275,1) {\mathstrut Feedback \\ Edge Set};
  %\node[fpt, align=center] (fes) at (0.275,1) {\mathstrut Feedback \\ Edge Set};
  \node[np] (bw) at (-1.85,1.65) {\mathstrut Bandwidth};
  \node[np] (pw) at (-2,3) {\mathstrut Pathwidth};
  \node[np] (tw) at (-1.25,3.75) {\mathstrut Treewidth};
 % \node[xpw, align=center] (d2lf) at (-1,1) {\mathstrut Distance to \\ Linear Forest};
  \node[xpw, align=center] (fvs) at (-0.4,2.3) {\mathstrut Feedback \\ Vertex Set};
  \node[w, align=center] (d2op) at (-1.25,2.95) {\mathstrut Distance to \\ Outerplanar};
  \node[np, align=center] (d2pi) at (1.5,3) {\mathstrut Distance to \\ Proper Interval};
 % \node[xp, align=center] (ed2op) at (-0.5,2) {\mathstrut Edge Distance \\ to Outerplanar};
  %\node[np, align=center] (d2planar) at (0,4) {\mathstrut Distance to \\ Planar};
  \node[np, align=center] (cw) at (1.25,4.25) {\mathstrut Cliquewidth~{\normalfont \cite{beisegel2024computing}}};
  \node[fptnp, align=center] (d2cm) at (2.5,0) {\mathstrut Distance to \\ Clique Module};
 % \node[xpwnp, align=center] (d2bm) at (1.5,1) {\mathstrut Distance to \\ Block Modules};
 % \node[xpnp, align=center] (d2cbm) at (1.5,-0.1) {\mathstrut Distance to \\ Con. Block \\ Module};
  \node[xpw, align=center] (d2lfm) at (-1.1,1.1) {\mathstrut Distance to \\ Linear Forest \\ Modules};
   %\node[xpw, align=center] (d2lfm) at (-0.5,1) {\mathstrut Distance to \\ Linear Forest \\ Modules};
 % \node[xp, align=center] (d2pathm) at (1.5,-2) {\mathstrut Distance to \\ Path Module};
 % \node[xp, align=center] (d2copm) at (0.6,-0.25) {\mathstrut Distance to \\ Con. Outerpl. \\ Module};
   \draw[thick,-stealth'] (bw) -- (pw);
   \draw[thick,-stealth'] (pw) -- (tw);
 %  \draw[thick,-stealth'] (d2op) -- (d2planar);
 %  \draw[thick,-stealth'] (fvs) -- (d2op);
 %  \draw[thick,-stealth'] (d2lf) -- (fvs);
 %  \draw[thick,-stealth'] (d2lf.180) -- (pw.340);
%   \draw[thick,-stealth'] (d2path.150) -- (d2lf);
%   \draw[thick,-stealth'] (vc.163) -- (d2lf.215);
   %\draw[thick,-stealth'] (vc.16) -- (d2clusterm);
   \draw[thick,-stealth'] (vc.0) -- (tc);
   %\draw[thick,-stealth'] (d2clusterm.11) -- (d2cluster);
 %  \draw[thick,-stealth'] (tc) -- (d2cluster);
   \draw[thick,-stealth'] (vc.180) -- (td.345);
   \draw[thick,-stealth'] (td.160) -- (pw.-160);
%   \draw[thick,-stealth'] (fes) -- (fvs);
%   \draw[thick,-stealth'] (fvs.45) -- (d2block);
%   \draw[thick,-stealth'] (d2clique) -- (d2cluster);
%   \draw[thick,-stealth'] (d2cluster) -- (d2block);
   \draw[thick,-stealth'] (d2clique) -- (cc);
   \draw[thick,-stealth'] (d2clique) -- (ecc);
   \draw[thick,-stealth'] (d2op) -- (tw);
   \draw[thick,-stealth'] (ecc) -- (nd);
   \draw[thick,-stealth'] (cc) -- (is);
  % \draw[thick,-stealth'] (is) -- (dom);
   \draw[thick,-stealth'] (d2block) -- (cw);
   \draw[thick,-stealth'] (tw) -- (cw);
   \draw[thick,-stealth'] (nd) -- (cw);
   %\draw[thick,-stealth'] (ml.180) -- (bw);
   %\draw[thick,-stealth'] (ml.17) -- (fes);
   %\draw[thick,-stealth'] (ml) -- (d2lf);
   \draw[thick,-stealth'] (fes) -- (ed2block);
   \draw[thick,-stealth'] (ed2block.166) -- (d2block);
%   \draw[thick,-stealth'] (ed2op) -- (d2op);
%   \draw[thick,-stealth'] (fes) -- (ed2op.-18);
   \draw[thick,-stealth'] (d2cm) -- (d2clique);
   %\draw[thick,-stealth'] (d2cm) -- (d2clusterm);
   \draw[thick,-stealth'] (d2cm) -- (tc);
   \draw[thick,-stealth'] (d2path) -- (d2pi);
%   \draw[thick,-stealth'] (d2bm) -- (d2block);
   %\draw[thick,-stealth'] (d2clusterm) -- (d2bm);
 %  \draw[thick,-stealth'] (tc) -- (d2bm);
 %  \draw[thick,-stealth'] (d2pathm) -- (d2lfm);
%   \draw[thick,-stealth'] (d2lfm) -- (d2lf);
%   \draw[thick,-stealth'] (d2lfm.20) -- (d2bm.200);
%   \draw[thick,-stealth'] (d2pathm) -- (d2path);
%   \draw[thick,-stealth'] (d2pathm) -- (d2cbm);
%   \draw[thick,-stealth'] (d2cm) -- (d2cbm.-10);
%   \draw[thick,-stealth'] (d2cbm) -- (d2bm);
%   \draw[thick,-stealth'] (d2pathm) -- (d2copm);
%   \draw[thick,-stealth'] (d2copm.180) -- (d2op);
%   \draw[thick,-stealth'] (d2lf.20) -- (d2pi);
 %  \draw[thick,-stealth'] (d2cluster) -- (d2pi);
  % \draw[thick,-stealth'] (d2block) -- (d2sc);
 %  \draw[thick,-stealth'] (d2sc) -- (cw);
%   \draw[thick,stealth'-stealth'] (d2clusterm) -- (tc);
  % \draw[thick,dashed,-stealth'] (tc) -- (ecc);
  % \draw[thick,dashed,-stealth'] (vc) -- (fes);
   \draw[thick,dashed,-stealth'] (td.355) -- (vc.173);
   %\draw[thick,dashed,-stealth'] (vc) -- (d2path);
   \draw[thick,dashed,-stealth'] (vc.0) -- (d2cm.170);
   %\draw[thick,dashed,-stealth'] (vc.160) -- (ml);
   \draw[thick,dashed,-stealth'] (vc.170) -- (bw.200);
   \draw[thick,dashed,-stealth'] (vc.0) -- (fes.215);
   %\draw[thick,dashed,-stealth'] (fes.215) -- (ml.5);
%   \draw[thick,dashed,-stealth'] (d2cluster) -- (ecc);
   \draw[thick,dashed,-stealth'] (ecc) -- (cc);
 %  \draw[thick,dashed,-stealth'] (vc) -- (d2pathm);

   \draw[thick, -stealth'] (d2path.105) -- (fvs);
   \draw[thick, -stealth'] (fvs) -- (d2block);
   \draw[thick, -stealth'] (d2lfm) -- (fvs);
   \draw[thick, -stealth'] (d2lfm) -- (d2pi);
   \draw[thick, -stealth'] (fvs) -- (d2block);
   \draw[thick, -stealth'] (tc) -- (d2block.-30);
   \draw[thick, dashed, -stealth'] (tc.0) -- (ecc);
   \draw[thick, -stealth'] (d2clique.200) -- (d2block.-20);
   \draw[thick, -stealth'] (d2lfm) -- (pw);
   \draw[thick, -stealth'] (tc) -- (d2pi);
   \draw[thick, -stealth'] (d2clique) -- (d2pi);
   \draw[thick, -stealth'] (fes) -- (fvs);
   \draw[thick, -stealth'] (d2path.105) -- (pw);
   \draw[thick, -stealth'] (fvs) -- (d2op);
   \draw[thick, dashed, -stealth'] (vc) -- (d2path);
   \draw[thick, -stealth'] (vc) -- (d2lfm);
   \draw[thick, -stealth'] (1.5,0.89) -- (nd.200);

   \node[draw=black] at (0.75,-1) {\begin{tikzpicture}
         \normalfont
       \node (Legend) at (-2,-1) {Legend:};
       \node[fpt] (FPT) [right=0.5cm of Legend] {\FPT};
       \node[np] (NP) [right=0.5cm of FPT] {para-\NP-hard};
       \node[xpw] (XPW) [right=0.5cm of NP] {\XP{} and \W-hard};
       \node[w] (W) [right=0.5cm of XPW] {\W-hard};
       \node[minnp] (MINNP) [right=0.5cm of W] {MinPOHPP para-\NP-hard};
   \end{tikzpicture}};
   %\draw (-2.3,-0.6) rectangle (3.3,-1.4);
   
   %\end{scope}
 \end{tikzpicture}
}
\caption{Diagram illustrating the complexity results for (Min)POHPP for different graph width parameters. A directed solid edge from parameter $P$ to parameter $Q$ means that a bounded value of $P$ implies a bounded value for $Q$. A directed dashed edge implies that this relation does not hold in general but for traceable graphs, i.e., graphs having a Hamiltonian path. If a directed solid path from $P$ to $Q$ is missing, then parameter $Q$ is unbounded for the graphs of bounded $P$. The same holds for the traceable graphs if there is also no path using dashed edges.}\label{fig:parameters}
\end{figure}

So far, there seems to be no research on the parameterized complexity of \textsc{Hamiltonian Path} with precedence constraints in the context of graph width parameters. To change this, we study the complexity of (Min)POHPP for graph width parameters where \textsc{Hamiltonian Path} can be solved in \FPT{} time (see \cref{fig:parameters} for an overview). It follows directly from the \NP-completeness of  POHPP on complete bipartite graphs~\cite{beisegel2024computing} that the problem is para-\NP-hard for several parameters for which \textsc{Hamiltonian Path} can be solved in \FPT{} time or \XP{} time.

\begin{observation*}
     POHPP is \NP-complete on graphs of \param{cliquewidth}~$2$, \param{neighborhood diversity}~$2$, \param{modular width}~$2$ and \param{split matching width}~$1$.
\end{observation*}

Note that graphs of \param{cliquewidth}~1 and \param{modular width}~1 are edgeless while graphs of \param{neighborhood diversity}~1 are either edgeless or cliques. So in all these cases,  POHPP is trivial.

In \cref{sec:treewidth} we will extend this observation to several \param{treewidth}-like parameters. To this end, we show that  POHPP is \NP-complete on rectangular grid graphs of height~7 and on proper interval graphs of clique number~5. The latter implies that  POHPP is \NP-complete for graphs of \param{bandwidth}, \param{pathwidth}, and \param{treewidth}~4. We complement these results by showing that  MinPOHPP can be solved in polynomial time for graphs of \param{pathwidth}~3 and of \param{treewidth}~2, leaving the case of \param{treewidth}~3 open. Furthermore, we present a simple argument why  MinPOHPP is in \FPT{} for \param{treedepth}.

\Cref{sec:d2p} is dedicated to \param{distance to $\G$} parameters where $\G$ is a sparse graph class. We show that  POHPP is \W-hard for \param{distance to path} and \param{distance to linear forest modules}. For MinPOHPP, we give an \FPT{} algorithm for \param{feedback edge set number} and an \XP{} algorithm for \param{feedback vertex set number}. We also give an \FPT{} algorithm for planar graphs parameterized by the number of vertices that lie in the interior of the outer face.

In \cref{sec:d2c}, we consider graph width parameters that are bounded for cliques. We show that  POHPP is \NP-complete for graphs of \param{clique cover number}~2. Furthermore, we prove \W-hardness for \param{distance to clique} and \param{distance to cluster modules} aka \param{twin cover number}. On the positive side, we present an \XP{} algorithm for POHPP when parameterized by \param{distance to block} and \FPT{} algorithms when parameterized by \param{edge distance to block} or \param{distance to clique module}.

\section{Preliminaries}

\subparagraph{General Notation and Partial Orders} A \emph{partial order} $ \pi $ on a set $X$ is a reflexive, antisymmetric and transitive relation on $X$. The tuple $(X, \pi)$ is then called a \emph{partially ordered set}. We also denote $(x,y) \in \pi$ by $x \prec_\pi y$ if $x \neq y$. If it is clear which partial order is meant, then we sometimes omit the index. A \emph{minimal element} of a partial order $\pi$ on $X$ is an element $x \in X$ for which there is no element $y \in X$ with $y \prec_\pi x$. The \emph{reflexive and transitive closure} of a relation $\R$ is the smallest relation $\R'$ such that $\R \subseteq \R'$ and $\R'$ is reflexive and transitive.

A \emph{linear ordering} of a finite set $X$ is a bijection $\sigma: X \rightarrow \{1,2,\dots,|X|\}$. We will often refer to linear orderings simply as orderings. Furthermore, we will denote an ordering by a tuple $(x_1, \ldots, x_n)$ which means that $\sigma(x_i) = i$. Given two elements $x$ and $y$ in $X$, we say that $x$ is \emph{to the left} (resp. \emph{to the right}) of $y$ if $\sigma(x)<\sigma(y)$ (resp. $\sigma(x)>\sigma(y)$) and we denote this by $x \prec_{\sigma}y$ (resp.  $x \succ_{\sigma}y$). A \emph{linear extension} of a partial order $\pi$ is a linear ordering $\sigma$ of $X$ that fulfills all conditions of $\pi$, i.e., if $x \prec_\pi y$, then $x \prec_\sigma y$.

For $n \in \N$, the notation $[n]$ refers to the set $\{i \in \N \mid 1 \leq i \leq n\}\).

\subparagraph{Graphs and Graph Classes}
All the graphs considered here are finite. For the standard notation of graphs we refer to the book of Diestel~\cite{diestel}.

A vertex $v$ of a connected graph $G$ is a \emph{cut vertex} if $G - v$ is not connected. If $G$ does not contain a cut vertex, then $G$ is \emph{2-connected}. A \emph{block} of a graph is an inclusion-maximal 2-connected induced subgraph. The \emph{block-cut tree} $\mathcal{T}$ of $G$ is the bipartite graph that contains a vertex for every cut vertex of $G$ and a vertex for every block of $G$ and the vertex of block $B$ is adjacent to the vertex of a cut vertex $v$ in $\mathcal{T}$ if $B$ contains $v$.

A \emph{Hamiltonian path} of a graph $G$ is a path that contains all the vertices of $G$. A graph is \emph{traceable} if it has a Hamiltonian path. Here, we only consider \emph{ordered} Hamiltonian paths, i.e., one of the two possible orderings of the path is fixed. Given a partial order $\pi$ on a graph's vertex set, an ordered Hamiltonian path is a \emph{$\pi$-extending Hamiltonian path} if its order is a linear extension of $\pi$.

A graph is a \emph{linear forest} if all its connected components are paths. A graph is a \emph{cluster graph} if all its components are cliques. A graph is a \emph{block graph} if its blocks are cliques. A graph $G$ is a \emph{proper interval graph} if it has a vertex ordering $(v_1, \dots, v_n)$ such that for all edges $v_iv_k$ and every $j \in \{i, \dots, k\}$ it holds that $v_iv_j$ and $v_jv_k$ are in $E(G)$. We call such an ordering a \emph{proper interval ordering}.

A graph is \emph{planar} if it has a crossing-free embedding in the plane, and together with this embedding it is called a \emph{plane graph}. For a plane graph $G$ we call the regions of $\mathbb{R}^2 \setminus G$ the \emph{faces} of $G$. Every plane graph has exactly one unbounded face which is called the \emph{outer face}. A graph is called \emph{outerplanar} if it has a crossing-free embedding such that all of the vertices belong to the outer face and such an embedding is also called \emph{outerplanar}. A \(h \times w\) grid graph is a graph with vertex set \(\{1,\dots h\}\times \{1,\dots w\}\) and edges between each pair of vertices with hamming distance one. We call $w$ the \emph{width} of the graph and $h$ the \emph{height} of the graph.

\subparagraph{Graph Width Parameters}
A \emph{tree decomposition} of a graph $G$ is a pair $(T,\{X_t\}_{t\in V(T)})$ consisting of a tree $T$ and a mapping assigning to each node $t\in V(T)$ a set $X_t\subseteq V(G)$ (called \emph{bag}) such that the union of all the bags equals $V(G)$, for every edge $uv\in E(G)$ there exists a bag $X_t$ such that $u,v\in X_t$, and for every vertex $v\in V(G)$ the bags containing $v$ form a subtree of $T$. The \emph{width} of a tree decomposition is the maximum size of a bag minus~1. The \param{treewidth} of a graph $G$ is the minimal width of a tree decomposition of $G$. The terms \emph{path decomposition} and \param{pathwidth} are defined accordingly, where the tree $T$ is replaced by a path. We then simply use the ordering $X_1, \dots, X_k$  of the bags in that path to describe the decomposition. A \emph{treedepth decomposition} of a graph consists of a rooted forest $F$ where every vertex of $G$ is mapped to a vertex in $F$ such that two vertices that are adjacent in $G$ have to form ancestor and descendant in $F$. The \param{treedepth} of a graph $G$ is the minimal height of a treedepth decomposition.

The \param{bandwidth} of a graph $G$ is the smallest integer $k$ such that there exists a vertex ordering $\sigma$ of $G$ where $|\sigma(u) - \sigma(v)| \leq k$ for every edge $uv \in E(G)$. A \emph{clique cover} of a graph $G$ is a partition of $V(G)$ into cliques. The \param{clique cover number} of $G$ is the minimal size of a clique cover of $G$.

Let $\G$ be a graph class. The \param{distance to $\G$} of a graph $G$ is the minimal number of vertices that need to be removed from $G$ so that the resulting graph belongs to $\G$. Similarly, the \param{edge distance to $\G$} of a graph $G$ is the minimal number of edges that need to be removed from $G$ so that the resulting graph belongs to $\G$. A \emph{module} of a graph is a subset $M \subseteq V(G)$ such that  for all vertices $u, v \in M$ it holds that every neighbor of $u$ outside of $M$ is also a neighbor of $v$. The \param{distance to $\G$ module(s)} is the minimal number of vertices that need to be removed from $G$ so that the resulting graph is in $\G$ and every component of the resulting graph is a module in $G$. Several of the distance parameters have got their own name. So \param{vertex cover number} is equivalent to \param{distance to edgeless}, \param{feedback vertex (edge) set number} is equivalent to \param{(edge) distance to tree} and \param{twin-cover number}~\cite{ganian2011twin-cover} is equivalent to \param{distance to cluster modules}.

\subparagraph{Complexity} We will use the following two problems in reductions.

\begin{problem*}{\textsc{3-Satisfiability} (3-SAT)}
\begin{description}
\item[\textbf{Instance:}] A boolean formula $\Phi = C_1 \land \dots \land C_m$ with $C_i = \ell_i^1 \lor \ell_i^2 \lor \ell_i^3$ where $\ell_i^j$ is some $x_j$ or some $\overline{x}_j$.
\item[\textbf{Question:}]
Is there a fulfilling assignment of $\Phi$?
 \end{description}
\end{problem*}

It is well-known that the 3-SAT problem is \NP-complete~\cite{karp1972reducibility}.

\begin{problem*}{\textsc{Multicolored Clique Problem} (MCP)}
\begin{description}
\item[\textbf{Instance:}] A graph $G$ with a proper coloring by $k$ colors.
\item[\textbf{Question:}]
Is there a clique $C$ in $G$ such that $C$ contains exactly one vertex of each color?
 \end{description}
\end{problem*}

The MCP was shown to be \W-hard by Pietrzak~\cite{pietrzak2003parameterized} and independently by Fellows et al.~\cite{fellows2009parameterized}. Note that this also holds if all color classes have the same size.

\section{Bandwidth, Pathwidth, Treewidth, and Treedepth}\label{sec:treewidth}
In this section we consider the computational complexity of (Min)POHPP when parameterized by \param{bandwidth}, \param{pathwidth}, \param{treewidth} and \param{treedepth}. 
On the hardness side of things, we show that POHPP and thus also MinPOHPP is \NP{}-complete for \param{bandwidth}, \param{pathwidth} and \param{treewidth} at least \(4\).
We contrast these hardness results with polynomial time algorithms for \param{pathwith} at most 3 and \param{treewidth} at most 2.
For \param{treedepth} at most \(k\), we show that a double exponential time algorithm in \(k\), independent of \(n\) exists.
\subsection{Hardness}
In this section we show that  POHPP is \(\NP\)-complete for graphs of \param{bandwidth}, \param{pathwidth} and \param{treewidth} at most \(4\). 
This is shown, by showing \NP-completeness for proper interval graphs of clique number \(5\) (\cref{thm:np-interval}).
One might wonder if there are more structured graph classes with bounded \param{bandwidth} for which the problem becomes easier. 
For a very structured subclass, namely \(h\times w\) grid graphs, we show in \cref{thm:np-grid} that the problem is \NP-complete for \(\min\{h,w\}\geq 7\).

\begin{theorem}\label{thm:np-interval}
     POHPP is \NP-complete on proper interval graphs of clique number~5.
\end{theorem}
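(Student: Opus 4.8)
The plan is to reduce from \textsc{3-SAT}, which is \NP-complete. Membership in \NP{} is immediate: a linear ordering of $V(G)$ is a valid certificate, since one checks in polynomial time that consecutive vertices are adjacent (so that the ordering is a Hamiltonian path) and that no constraint of $\pi$ is violated (so that it is a linear extension). Hence the whole difficulty lies in hardness, and the central observation is that on a connected proper interval graph the proper interval ordering $(v_1,\dots,v_n)$ is itself a Hamiltonian path, so unconstrained Hamiltonicity is trivial and \emph{all} of the hardness must be produced by $\pi$. I would instead exploit a structural rigidity: by the umbrella property, an edge $v_iv_k$ forces $v_i,\dots,v_k$ to be pairwise adjacent, so clique number $5$ caps the span of every edge in the proper interval ordering by $\omega-1=4$. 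Consequently any Hamiltonian path, read as a walk on the positions $1,\dots,n$, advances in steps of size at most $4$; it can neither skip far ahead nor jump back across the graph, and is thus forced to sweep through the ordering essentially from left to right with only local reorderings inside a window of width $5$.

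Given a formula $\Phi=C_1\wedge\dots\wedge C_m$ over variables $x_1,\dots,x_N$, I would build $G$ as a single ``fat path'': a concatenation, along one proper interval ordering, of constant-size blocks, each a column of up to five mutually adjacent vertices linked to the next column so that the umbrella property holds, no six vertices are pairwise adjacent, and at least one column is a $5$-clique, giving clique number exactly $5$. The blocks are a \emph{variable gadget} for each $x_j$ and a \emph{clause gadget} for each $C_i$. The five available tracks in a column are precisely what forces the use of clique number $5$ rather than $4$: they give each gadget exactly two locally feasible ways for the sweeping path to thread through it, which we read as $x_j=\mathtt{true}$ or $x_j=\mathtt{false}$, while still admitting a completing traversal of a clause gadget precisely when one of its literals is set true.

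The partial order $\pi$ then does two things. Inside a variable gadget it is compatible with both canonical threadings, so the truth value is a free binary choice; across gadgets it carries this choice to the clauses. Concretely, for each clause $C_i=\ell_i^1\vee\ell_i^2\vee\ell_i^3$ I would add precedence constraints between literal-representing vertices of the variable gadgets and guard vertices of the clause gadget of $C_i$ so that, combined with the forced bounded-step sweep, the clause gadget can be fully traversed if and only if at least one literal of $C_i$ was threaded in its satisfying way; consistency of the value of $x_j$ across its occurrences is enforced by further precedence constraints. Taking the reflexive and transitive closure of all imposed pairs yields a genuine partial order. Correctness is then shown in both directions: from a satisfying assignment I exhibit the Hamiltonian path by threading each variable gadget according to its value and completing each clause gadget through the track freed by a true literal; conversely, from any $\pi$-extending Hamiltonian path the bounded-step rigidity forces a canonical threading of every gadget, which reads off an assignment, and the clause constraints guarantee that it satisfies $\Phi$.

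The main obstacle I expect is the simultaneous control of three competing requirements: keeping $G$ a proper interval graph of clique number exactly $5$ (which severely limits the local freedom available per column), designing the precedence constraints so that the \emph{only} feasible sweeps are the intended canonical threadings with no ``cheating'' path that respects $\pi$ yet corresponds to no assignment, and ensuring the imposed relation stays acyclic so that its transitive closure is antisymmetric and hence a partial order. In particular, the backward direction --- proving that bounded step size together with $\pi$ leaves no escape route --- is the delicate case analysis, and tuning the clause gadget so that width $5$ suffices (and is needed) is where the construction must be engineered most carefully.
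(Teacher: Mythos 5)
Your \NP-membership argument and your span bound are both fine: in a proper interval ordering of a graph with clique number~$5$, every edge $v_iv_k$ forces $v_i,\dots,v_k$ to be a clique, so $k-i\leq 4$ and every step of a Hamiltonian path moves at most $4$ positions. But the inference on which you build the entire reduction --- that the path is therefore ``forced to sweep through the ordering essentially from left to right with only local reorderings inside a window of width~5'' --- is false. A bounded step size bounds how far a single edge can jump, not the global shape of the path: the path may drift rightwards for a long stretch and then drift all the way back leftwards, four positions at a time. The only true consequence is that each cut between positions $p$ and $p+1$ can be crossed a bounded number of times, since all crossing edges live on at most $8$ vertices near the cut; so the path makes $\O(1)$ passes over any point, not one pass. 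This is not a presentational gap but the crux: if your monotone-sweep claim were true, the reduction you sketch could not work at all. In a single left-to-right sweep, every vertex of a variable gadget is visited before every vertex of a later clause gadget, so any precedence constraint from a variable vertex to a clause vertex is automatically satisfied, and any constraint in the reverse direction is automatically violated. Cross-gadget constraints would thus be vacuous or infeasible, all nontrivial interaction would be confined to windows of width~$5$, and the instance would be decidable by a local dynamic program --- contradicting the \NP-hardness you are trying to prove.

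The paper's proof exploits exactly the phenomenon your premise denies: multiple passes. Besides variable and clause gadgets it adds a \emph{backbone} path running alongside the whole construction and two extra terminals $s'$ and $t'$, and the partial order forces $s$ to come first, the entire backbone together with $s'$ and $t'$ to come after $t$, and each literal vertex to come after its variable vertex. Any $\pi$-extending Hamiltonian path then decomposes into three passes: a first pass $\cP_{s,t}$ from $s$ to $t$ that cannot touch the backbone, a return pass along the backbone from $t$ to $s'$, and a third pass from $s'$ to $t'$ collecting everything left over. Since $\{x_i,\overline{x}_i,r_i\}$ separates the left part of the graph from the right part and the three passes are internally vertex-disjoint, the first pass takes exactly one of $x_i,\overline{x}_i$ --- this choice \emph{is} the truth assignment --- and it must cross each clause gadget through some literal vertex, which the literal constraints permit only if the corresponding variable vertex was already visited. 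In other words, ``which pass visits a vertex'' is the information carrier, and a one-pass construction has no such carrier. Closing the gap in your proposal requires abandoning the monotone-sweep premise and instead designing the partial order to create and control these extra passes.
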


\begin{proof}
We present a reduction from 3-SAT to  POHPP on proper interval graphs of clique number~5. Let $\Phi$ be a 3-SAT formula with the variables $x_1, \dots, x_n$ and the clauses $c_1, \dots c_m$. Let $\ell_i^1$, $\ell_i^2$, and $\ell_i^3$ be the literals of $c_i$. Note that these literals may be negated. 

In the following we construct an instance $(G,\pi)$ of  POHPP (see \cref{fig:unit-interval}). We use the following subgraphs for $G$.

\begin{description}
    \item[Start Gadget] The start gadget $S$ consists of the two adjacent vertices $s$ and $s'$.
    \item[Variable Gadget] For each $x_1, \dots, x_n$, there is a variable gadget $X_i$ that consists of the adjacent vertices $x_i$ and $\overline{x}_i$. We call these vertices \emph{variable vertices}.
    \item[Clause Gadget] For each clause $c_i$, there is a clause gadget $C_i$ consisting of the vertices $a^1_i$, $a^2_i$, $\ell^1_i$, $\ell^2_i$, $\ell^3_i$, $b^1_i$, and $b^2_i$. The vertices $\ell^1_i$, $\ell^2_i$ and $\ell^3_i$ form a clique. Similarly $a^1_i$ and $a^2_i$ as well as $b^1_i$ and $b^2_i$ form cliques of size~2, respectively. The vertices $a^1_i$ and $b^1_i$ are adjacent to all of $\ell^1_i$, $\ell^2_i$, and $\ell^3_i$. Vertex $a^2_i$ is only adjacent to $\ell^2_i$ and $\ell^3_i$, while $b^2_i$ is only adjacent to $\ell^1_i$ and $\ell^3_i$ (see two examples given in the rounded-corner boxes in \cref{fig:unit-interval}).
    \item[End Gadget] The end gadget $T$ consists of the two adjacent vertices $t$ and $t'$.
    \item[Backbone] The backbone $B$ is a path consisting of the vertices $\{r_i \mid 0 \leq i \leq n\}$ and $\{u_i,v_i,w_i \mid 1 \leq i \leq m\}$ in the following order: $(r_0, r_1 \dots, r_n, u_1, v_1, w_1, \dots, u_n, v_n, w_n)$.
\end{description}

To complete the construction of $G$, we explain how these gadgets are combined. We order the gadgets in the following way $S, X_1, \dots, X_n, C_1, \dots, C_m, T$. For every gadget, we define \emph{entry vertices} and \emph{exit vertices}. For start, variable and end gadgets, all vertices are entry and exit vertices. For clause gadgets, $a_i^1$ and $a_i^2$ are the entry vertices, while $b_i^1$ and $b_i^2$ are the exit vertices. The exit vertices of a gadget are completely adjacent to the entry vertices of the succeeding gadget. The backbone is connected to all other gadgets in the following way. Vertex $r_0$ is adjacent to $s$, $s'$, $x_1^1$ and $\overline{x}_1$. Vertex $r_i$ with $1 \leq i < n$ is adjacent to $x_i$, $\overline{x}_i$, $x_{i+1}$, and $\overline{x}_{i+1}$. The vertex $r_n$ is adjacent to $x_n$, $\overline{x}_n$ as well as $a_1^1$ and $a_1^2$. Vertex $u_i$ is adjacent to $a_i^1$, $a_i^2$ as well as $\ell_i^1$, $\ell_i^2$, and $\ell_i^3$. Vertex $w_i$ is adjacent to $\ell_i^1$, $\ell_i^2$, and $\ell_i^3$ as well as $b_i^1$ and $b_i^2$. Vertex $w_i$ is adjacent to $b_i^1$, $b_i^2$, $a_{i+1}^1$, and $a_{i+1}^2$ except from $w_n$ which is adjacent to $b_n^1$, $b_n^2$ as well as $t$ and $t'$.

To prove that $G$ is a proper interval graph, we consider the following vertex ordering:
\begin{align*}
\sigma = (s, s', r_0, x_1, \overline{x}_1, r_1, x_2, \dots, x_n, \overline{x}_n, r_n, a_1^2, a_1^1, u_1, \ell_1^2, \ell_1^3, \ell_1^1, v_1, b_1^1, b_1^2, w_1, a_2^2, a_2^1,\\ \dots, b_m^1, b_m^2, w_m, t, t').
\end{align*}
It can easily be checked that this ordering is a proper interval ordering. Furthermore, the \param{bandwidth} of $\sigma$ is 4. Since the \param{clique number} of a proper interval graph is equal to its \param{bandwidth} plus~1~\cite[Theorem~4.1]{kaplan1996pathwidth}, we know that the \param{clique number} of $G$ is at most~5.

The partial order $\pi$ on the vertex set of $G$ is the reflexive transitive closure of the relation containing the following constraints:

\begin{enumerate}[(P1)]
    \item $s \prec v$ for every $v \in V(G) \setminus \{s\}$,\label{ui:p1}
    \item $t \prec v$ for every vertex $v$ in the backbone $B$,\label{ui:p2}
    \item $t \prec s' \prec t'$,\label{ui:p3}
    \item $x_i \prec \ell_j^k$ if the $k$-th literal in $c_j$ is $x_i$,\label{ui:p4}
    \item $\overline{x}_i \prec \ell_j^k$ if the $k$-th literal in $c_j$ is $\overline{x}_i$.\label{ui:p5}
\end{enumerate}

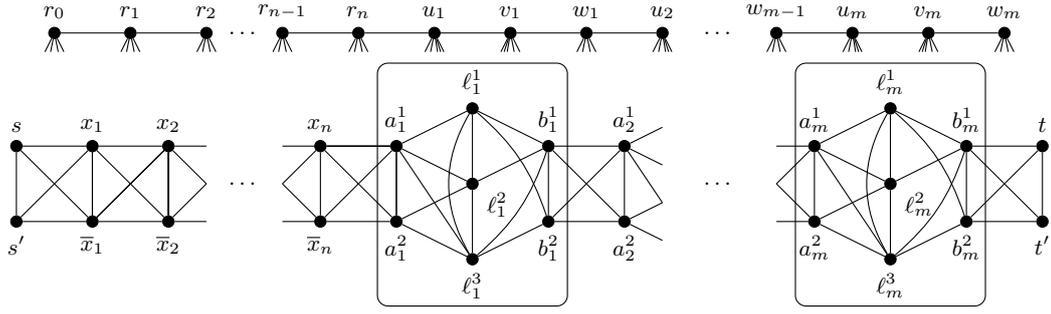
\begin{figure}
    \centering
        \begin{tikzpicture}
        \footnotesize
        %\draw[rounded corners = 5mm] (-0.75,-1.5) rectangle (11.75,2.5);

        \begin{scope}
        \node[vertex, label=90:$s$] (s) at (0,1) {};
        \node[vertex, label=-90:$s'$] (s') at (0,0) {};
        %\node[vertex, label=0:$z$] (z) at (5.5,3.25) {};
        \node[vertex, label=90:$x_1$] (x1) at (1,1) {};
        \node[vertex, label=-90:$\overline{x}_1$] (nx1) at (1,0) {};
        \node[vertex, label=90:$x_2$] (x2) at (2,1) {};
        \node[vertex, label=-90:$\overline{x}_2$] (nx2) at (2,0) {};
        \node[vertex, label=90:$x_n$] (xn) at (4,1) {};
        \node[vertex, label=-90:$\overline{x}_n$] (nxn) at (4,0) {};

        %\node[vertex, label=90:$r_0$] (r0) at (0.5,1.5) {};
        %\node[vertex, label=90:$r_1$] (r1) at (1.5,1.5) {};
        %\node[vertex, label=90:$r_2$] (r2) at (2.5,1.5) {};
        %\node[vertex, label=90:$r_{n-1}$] (rn-1) at (3.5,1.5) {};
        %\node[vertex, label=90:$r_n$] (rn) at (4.5,1.5) {};
        %\node[vertex, label=90:$u_1$] (u1) at (5.5,1.5) {};

        %\draw (5,2.5) -- (z) -- (5.5,2.5) -- (z) -- (6,2.5); 

        \draw (s) -- (x1) -- (nx1) -- (s);
        \draw (s') -- (x1) -- (nx1) -- (s') -- (s);
        \draw (x1) -- (x2) --
               (nx2) -- (nx1) -- (x2) -- (nx1) -- (x1) -- (nx2) -- (x2);
        \draw (x2) --+ (0.5, 0);
        \draw (nx2) --+ (0.5, 0);
        \draw (x2) --+ (0.5, -0.5);
        \draw (nx2) --+ (0.5, 0.5);
        
        \node at (3,0.5) {$\dots$};
        \node at (3,2.5) {$\dots$};

        \node at (9.25,0.5) {$\dots$};
        \node at (9.25,2.5) {$\dots$};

        \draw (xn) -- (nxn);
        \draw (xn) --+ (-0.5,0);
        \draw (nxn) --+ (-0.5, 0);
        \draw (xn) --+ (-0.5, -0.5);
        \draw (nxn) --+ (-0.5, 0.5);

        \node[vertex, label=90:$a_1^1$] (a11) at (5,1) {};
        \node[vertex, label=-90:$a_1^2$] (a12) at (5,0) {};

        \draw(xn) -- (a11) -- (a12) -- (nxn) -- (a11) -- (xn) -- (a12);

        \node[vertex, label=90:$r_0$] (r0) at (0.5,2.5) {};
        \draw (r0) --+ (0.05,-0.25);
        \draw (r0) --+ (0.125,-0.25);
        \draw (r0) --+ (-0.125,-0.25);
        \draw (r0) --+ (-0.05,-0.25);
        \node[vertex, label=90:$r_1$] (r1) at (1.5,2.5) {};
        \draw (r1) --+ (0.05,-0.25);
        \draw (r1) --+ (0.125,-0.25);
        \draw (r1) --+ (-0.125,-0.25);
        \draw (r1) --+ (-0.05,-0.25);
        \node[vertex, label=90:$r_2$] (r2) at (2.5,2.5) {};
        \draw (r2) --+ (0.05,-0.25);
        \draw (r2) --+ (0.125,-0.25);
        \draw (r2) --+ (-0.125,-0.25);
        \draw (r2) --+ (-0.05,-0.25);
        
        \draw (r0) -- (r1) -- (r2);
        
        \end{scope}

        \begin{scope}
        %\node[vertex, label=90:$a_1^1$] (a11) at (5,1) {};
        %\node[vertex, label=-90:$a_1^2$] (a12) at (5,0) {};
        \node[vertex, label=90:{$\ell_1^1$}] (x11) at (6,1.5) {};
        \node[vertex, label=-4:$\ell_1^2$] (x12) at (6,0.5) {};
        \node[vertex, label=-90:$\ell_1^3$] (x13) at (6,-0.5) {};
        \node[vertex, label=90:$b_1^1$] (C11) at (7,1) {};
        \node[vertex, label=-90:$b_1^2$] (C12) at (7,0) {};

        \draw (a11) -- (a12);
        \draw (a12) -- (x12);
        \draw (a12) -- (x13);
        \draw (a11) -- (x11);
        \draw (a11) -- (x12);
        \draw (a11) -- (x13);
        
        \draw (x11) -- (x12) -- (x13);
        \draw[bend left] (x13) to (x11);

        \draw[rounded corners] (4.75,2.1) rectangle (7.25,-1.12);
        \draw[rounded corners] (10.25,2.1) rectangle (12.75,-1.12);

        \node[vertex, label=90:$a_2^1$] (a21) at (8,1) {};
        \node[vertex, label=-90:$a_2^2$] (a22) at (8,0) {};

        %\draw (a22) -- (a21);
        \draw (C11) -- (C12) -- (a21) -- (a22) -- (C11) -- (a21);
        \draw (C12) -- (a22);
        
        \draw (C11) -- (x11);
        \draw (C11) -- (x12);
        \draw[bend angle=15, bend left] (C11) to (x13);
        \draw[bend angle=15, bend right]  (C12) to (x11);
        \draw (C12) -- (x13);
        
        \draw (a21) --+ (0.5,0.25);
        \draw (a21) --+ (0.5,-0.25);
        \draw (a21) --+ (0.5,-0.75);
        \draw (a22) --+ (0.5,-0.25);
        \draw (a22) --+ (0.5,0.25);

        %\node[vertex, label=90:{$x_3^1 = \xi(C_3)$}] (x31) at (10,1.5) {};
        %\node[vertex, label=0:$x_3^2$] (x32) at (10,0.5) {};
        %\node[vertex, label=-90:{$x_3^3 = \lnot\xi(C_2)$}] (x33) at (10,-0.5) {};
        %\node[vertex, label=90:$c_3^1$] (C31) at (11,1) {};
        %\node[vertex, label=-90:$c_3^2$] (C32) at (11,0) {};

        %\draw (x31) -- (x32) -- (x33);
        %\draw[bend left] (x33) to (x31);

        %\draw (C21) -- (x22);
        %\draw (C21) -- (x23);
        %%\draw (C21) -- (x21);
        %\draw (C22) -- (x22);
        %\draw (C22) -- (x23);
        %\draw (C22) -- (C21);
        
        %\draw (C21) -- (x31);
        %\draw (C21) -- (x32);
        %\draw (C22) -- (x31);
        %\draw (C22) -- (x32);
        %\draw (C22) -- (x33);

        %\draw (C31) -- (x32);
        %\draw (C31) -- (x33);
        %%\draw (C31) -- (x31);
        %\draw (C32) -- (x32);
        %\draw (C32) -- (x33);
       % \draw (C32) -- (C31);
        
        \begin{scope}[xshift=-2.5cm]
        \node[vertex, label=90:$a_{m}^1$] (am1) at (13,1) {};
        \node[vertex, label=-90:$a_{m}^2$] (am2) at (13,0) {};
        \node[vertex, label=90:$\ell_{m}^1$] (xm1) at (14,1.5) {};
        \node[vertex, label=-4:$\ell_{m}^2$] (xm2) at (14,0.5) {};
        \node[vertex, label=-90:{$\ell_{m}^3$}] (xm3) at (14,-0.5) {};
        \node[vertex, label=90:$b_m^1$] (cm1) at (15,1) {};
        \node[vertex, label=-90:$b_m^2$] (cm2) at (15,0) {};
        \node[vertex, label=90:$t$] (t) at (16,1) {};
        \node[vertex, label=-90:$t'$] (t') at (16,0) {};

        \node[vertex, label=90:$w_{m-1}$] (wm-1) at (12.5,2.5) {};
        \draw (wm-1) --+ (0.05,-0.25);
        \draw (wm-1) --+ (0.125,-0.25);
        \draw (wm-1) --+ (-0.125,-0.25);
        \draw (wm-1) --+ (-0.05,-0.25);
        \node[vertex, label=90:$u_m$] (um) at (13.5,2.5) {};
        \draw (um) --+ (-0.05,-0.25);
        \draw (um) --+ (-0.125,-0.25);
        \draw (um) --+ (0.125,-0.25);
        \draw (um) --+ (0.0825,-0.25);
        \draw (um) --+ (0.04167,-0.25);
        \node[vertex, label=90:$v_m$] (vm) at (14.5,2.5) {};
        \draw (vm) --+ (0.05,-0.25);
        \draw (vm) --+ (0.125,-0.25);
        \draw (vm) --+ (-0.125,-0.25);
        \draw (vm) --+ (-0.0825,-0.25);
        \draw (vm) --+ (-0.04167,-0.25);
        \node[vertex, label=90:$w_m$] (wm) at (15.5,2.5) {};
        \draw (wm) --+ (0.05,-0.25);
        \draw (wm) --+ (0.125,-0.25);
        \draw (wm) --+ (-0.125,-0.25);
        \draw (wm) --+ (-0.05,-0.25);

        \draw (wm-1) -- (um) -- (vm) -- (wm);
        \end{scope}
        %\node at (8,0.5) {$\dots$};

        %\draw(xn) -- (Cm1) -- (Cm2) -- (nxn) -- (C11) -- (xn) -- (C12);

        \draw (am1) --+ (-0.5,0.0);
        \draw (am1) --+ (-0.5,-0.5);
        %\draw (am1) --+ (-0.5,-0.75);
        \draw (am2) --+ (-0.5,-0.0);
        \draw (am2) --+ (-0.5,0.5);
        
        \draw (am1) -- (xm1);
        \draw (am1) -- (xm2);
        \draw (am1) -- (xm3);
        \draw (am2) -- (xm2);
        \draw (am2) -- (xm3);
        \draw (am2) -- (am1);

        \draw (xm1) -- (xm2) -- (xm3);
        \draw[bend left] (xm3) to (xm1);

        \draw (cm1) -- (xm1);
        \draw (cm1) -- (xm2);
        \draw[bend angle=15, bend left] (cm1) to (xm3);
        \draw[bend angle=15, bend right] (cm2) to (xm1);
        \draw (cm2) -- (xm3);

        \draw (cm2) -- (cm1) -- (t) -- (cm2) -- (t') -- (t);
        \draw (cm1) -- (t');

        \node[vertex, label=90:$r_{n-1}$] (rn-1) at (3.5,2.5) {};
        \draw (rn-1) --+ (0.05,-0.25);
        \draw (rn-1) --+ (0.125,-0.25);
        \draw (rn-1) --+ (-0.125,-0.25);
        \draw (rn-1) --+ (-0.05,-0.25);
        \node[vertex, label=90:$r_n$] (rn) at (4.5,2.5) {};
        \draw (rn) --+ (0.05,-0.25);
        \draw (rn) --+ (0.125,-0.25);
        \draw (rn) --+ (-0.125,-0.25);
        \draw (rn) --+ (-0.05,-0.25);
        \node[vertex, label=90:$u_1$] (u1) at (5.5,2.5) {};
        \draw (u1) --+ (-0.05,-0.25);
        \draw (u1) --+ (-0.125,-0.25);
        \draw (u1) --+ (0.125,-0.25);
        \draw (u1) --+ (0.0825,-0.25);
        \draw (u1) --+ (0.04167,-0.25);
        \node[vertex, label=90:$v_1$] (v1) at (6.5,2.5) {};
        \draw (v1) --+ (0.05,-0.25);
        \draw (v1) --+ (0.125,-0.25);
        \draw (v1) --+ (-0.125,-0.25);
        \draw (v1) --+ (-0.0825,-0.25);
        \draw (v1) --+ (-0.04167,-0.25);
        \node[vertex, label=90:$w_1$] (w1) at (7.5,2.5) {};
        \draw (w1) --+ (-0.05,-0.25);
        \draw (w1) --+ (-0.125,-0.25);
        \draw (w1) --+ (0.125,-0.25);
        \draw (w1) --+ (0.05,-0.25);
        %\draw (w1) --+ (0.04167,-0.25);
        \node[vertex, label=90:$u_2$] (u2) at (8.5,2.5) {};
        \draw (u2) --+ (-0.05,-0.25);
        \draw (u2) --+ (-0.125,-0.25);
        \draw (u2) --+ (0.125,-0.25);
        \draw (u2) --+ (0.0825,-0.25);
        \draw (u2) --+ (0.04167,-0.25);
        
        \draw (rn-1) -- (rn) -- (u1) -- (v1) -- (w1) -- (u2);
        \end{scope}
    \end{tikzpicture}
    \caption{Complete construction of the proof of \cref{thm:np-interval}. The boxes mark the clause gadgets.}
    \label{fig:unit-interval}
\end{figure}

\subparagraph*{From a $\pi$-extending path to a satisfying assignment} Let $\cP$ be a $\pi$-extending Hamiltonian path of $G$. Due to \pef{ui:p1}, the path $\cP$ starts with $s$. Let $\cP_{s,t}$ be the subpath of $\cP$ starting in $s$ and ending in $t$. %The following claim shows how we can define an assignment of $\Phi$.

\begin{claim*}
    For any $i \in \{1,\dots,n\}$, $\cP_{s,t}$ contains exactly one of the two vertices $x_i$ and $\overline{x}_i$.  
\end{claim*}

\begin{claimproof}
    Due to \pef{ui:p2}, the path $\cP_{s,t}$ cannot contain any of the vertices in the backbone. Any path between $s$ and $t$ that does not contain those vertices has to contain at least one of the variable vertices $x_i$ and $\overline{x}_i$ for every $i \in \{1,\dots,n\}$.

    Let $\cP_{t,s'}$ be the subpath of $\cP$ between $t$ and $s'$ and $\cP_{s',t'}$ be the subpath of $\cP$ between $s'$ and $t'$. Due to \pef{ui:p1} and \pef{ui:p3}, vertex $s$ is to the left of $t$ in $\cP$, $t$ is to the left of $s'$ in $\cP$ and $s'$ is to the left of $t'$ in $\cP$. Therefore, $\cP_{s,t}$, $\cP_{t,s'}$, and $\cP_{s',t'}$ do not share an inner vertex. All three paths have to use one of the vertices $x_i$, $\overline{x}_i$ and $r_i$. Hence, $\cP_{s,t}$ cannot contain both $x_i$ and $\overline{x}_i$.
\end{claimproof}

Due to this claim, we can define an assignment $\A$ of $\Phi$ in the following way: A variable $x_i$ is set to true in $\A$ if and only if the vertex $x_i$ is contained in $\cP_{s,t}$. Since $\cP_{s,t}$ cannot contain any of the $u_i$ or $v_i$, it is clear that $\cP_{s,t}$ must contain at least one of the vertices $\ell_i^1$, $\ell_i^2$, and $\ell_i^3$ for any $i \in \{1,\dots,m\}$. As those vertices can only be visited by $\cP$ if their respective variable vertex is to the left of them in $\cP$, it follows that $\A$ is an fulfilling assignment of $\Phi$.

\subparagraph*{From a satisfying assignment to a $\pi$-extending Hamiltonian path} Let $\A$ be a satisfying assignment of $\Phi$. We define the ordered Hamiltonian path $\cP$ as follows. We start in $s$. Then we successively visit the variable vertices in such way that we visit $x_i$ if the variable $x_i$ is set to true in $\A$ and otherwise we visit $\overline{x}_i$.  Afterwards, we visit $a_1^1$. As $\A$ is a fulfilling assignment, there is at least one of the vertices $\ell_1^1$, $\ell_1^2$, and $\ell_1^3$ that can be visited as the next vertex. We choose one of these vertices as next vertex and then visit $b_1^1$. We repeat this process for each $i \in \{2, \dots, m\}$. Afterwards, we visit vertex $t$. As the backbone vertices were only restricted by \pef{ui:p2}, we now can use the complete backbone in decreasing order to reach vertex $s'$. Next we visit all the remaining variable vertices in increasing order.
In the clause gadget for \(c_1\), we first visit $a_1^2$. 
Then, we visit all previously unvisited literal vertices in an order that visits \(\ell_1^2\) or \(\ell_1 ^3\) first and \(\ell_1^1\) or \(\ell_1^3\) last, followed by \(b_1^2\).
We repeat this procedure for all $i \in \{2, \dots, m\}$ and finally visit $t'$. The resulting path is a $\pi$-extending Hamiltonian path of $G$.
\end{proof}

The \param{bandwidth} of a proper interval graph is equal to its clique number minus~1~\cite[Theorem~4.1]{kaplan1996pathwidth}. Hence, the following holds.

\begin{theorem}\label{cor:NPhard_pathwidth}
     POHPP is \NP-complete on graphs of \param{bandwidth}, \param{pathwidth} or \param{treewidth}~4.
    \end{theorem}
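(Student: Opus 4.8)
The plan is to obtain this statement as a direct corollary of \cref{thm:np-interval}, using the cited identity between clique number and bandwidth for proper interval graphs. The instances $(G,\pi)$ built in the reduction of \cref{thm:np-interval} are proper interval graphs of clique number~$5$. By \cite[Theorem~4.1]{kaplan1996pathwidth}, the \param{bandwidth} of a proper interval graph equals its clique number minus~$1$, so these graphs have \param{bandwidth} exactly $5-1 = 4$. Since the reduction is polynomial and the hardness construction is unchanged, this immediately yields \NP-hardness for graphs of \param{bandwidth}~$4$, and no new gadgetry is needed.

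To cover \param{pathwidth} and \param{treewidth} I would invoke the standard chain of inequalities $\mathrm{tw}(G) \le \mathrm{pw}(G) \le \mathrm{bw}(G)$, valid for every graph $G$ (a bandwidth-$k$ ordering gives a path decomposition of width $k$ using bags of $k+1$ consecutive vertices, and a path decomposition is in particular a tree decomposition). Applied to our instances, this gives \param{pathwidth} and \param{treewidth} at most~$4$. To argue that the values are exactly~$4$ rather than merely bounded, I would use the complementary lower bound that any graph containing a clique of size $\omega$ has \param{treewidth} at least $\omega - 1$ (the clique must lie inside a single bag); as our graphs have clique number~$5$, this forces \param{treewidth}, and hence \param{pathwidth}, to be at least~$4$. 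Combining the two bounds pins all three parameters to~$4$ on the constructed instances.

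Membership in \NP{} is immediate and can be stated in one line: given a candidate ordered sequence of vertices, one checks in polynomial time that consecutive vertices are adjacent, that every vertex appears exactly once, and that the order respects every constraint of $\pi$. Together with the transferred hardness, this establishes \NP-completeness on graphs of \param{bandwidth}, \param{pathwidth}, or \param{treewidth}~$4$. I expect no serious obstacle here; the only point requiring care is verifying that the parameters land exactly on the value~$4$ (as opposed to being bounded by~$4$), which the two-sided estimate above resolves, while the hardness itself is inherited verbatim from \cref{thm:np-interval}.
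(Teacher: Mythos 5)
Your proposal is correct and follows essentially the same route as the paper: inherit hardness from \cref{thm:np-interval} via the Kaplan--Shamir identity (bandwidth of a proper interval graph equals clique number minus one) and the standard chain $\mathrm{tw}(G) \le \mathrm{pw}(G) \le \mathrm{bw}(G)$. Your extra two-sided argument pinning the parameters to exactly~$4$ (using the clique of size~$5$ in the construction as a lower bound) is a small refinement the paper leaves implicit, but it does not change the approach.
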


We now build on the ideas presented in \cref{thm:np-interval} to show that  POHPP is \NP-complete on grid graphs of height at least 7.

\begin{theorem*}\label{thm:np-grid}
   POHPP is \NP-complete on grid graphs of height $7$.
\end{theorem*}

\begin{proof}
    We present a reduction from 3-SAT to  POHPP that shares the main ideas of the proof of \cref{thm:np-interval}.
    Let \(\Phi\) be a 3-SAT formula over variables \(x_1,\dots, x_n\) with clauses \(c_1,\dots, c_m\).
    We construct an instance \((G,\pi)\) for  POHPP where \(G\) is an \(7 \times (5n + 4m + 6)\) grid graph.
    In order to be able to argue about different parts of \(G\), it is conceptually subdivided into different gadgets. We also name some special vertices within these gadgets.
    For a gadget \(Z\), we use \(Z[a,b]\) to denote the vertex in the \(a\)th row and \(b\)th column of the gadget. The colors in parathesis for each gadget refer to the colors used in \cref{fig:grid_gadgets}.
    \begin{description}
        \item[Start gadget \(S\) (rose):] A \(7\times 3\) grid graph. We denote the vertex \(S[3,3]\) by \(s\).
        \item[Variable switch gadgets \(Y_i\) (blue):] A \(7\times 1\) grid graph for \(i=1,\dots n\).
        \(Y_i\) is assigned to variable \(x_i\). 
        \item[Variable gadget \(X_i\) (yellow):] A \(7\times 4\) grid graph for \(i=1,\dots, n\).
        \(X_i\) is assigned to variable \(x_i\).
        We call the vertices \(X_i[3,2]\) and \(X_i[3,3]\) the \emph{negative variable vertices} of \(X_i\). The vertices \(X_i[5,2]\) and \(X_i[5,3]\) are the \emph{positive variable vertices}. 
        \item[Middle gadget \(M\) (red):]  A \(7\times 2\) grid graph.
        \item[Clause switch gadget \(D_j\) (light blue):]  A \(7\times 2\) grid graph \(D_j\) assigned to \(c_j\) for \(j=1,\dots, m\).
        \item[Clause gadget \(C_j\) (green):]  A \(7\times 2\) grid graph  assigned to \(c_j\) for \(j=1,\dots m\).
        The vertices \(C_j[a+2,1]\) and  \(C_j[a+2,2]\) are assigned to the literal \(\ell_j^a\).
        \item[End gadget \(T\) (rose):] A \(7\times 1\) grid graph. We denote the vertex \(T[5,1]\) by \(t\).
    \end{description}
    \(G\) is made up of the gadgets in the following order: \(S, Y_1,X_1,\dots, Y_n, X_n, M, D_1,C_1,\dots, D_m, \allowbreak C_m, T\), see \cref{fig:grid_gadgets}.
    The partial order \(\pi\) is the reflexive, transitive closure of the  following constraints:
    \begin{enumerate}[(P1)]
   \item $s\prec v$ for all \(v\in V(G) \setminus \{s\}\). \label{item:grid_start}
        \item $t\prec v$ for all \(v\in S\setminus\{s\}\).
        \item $t\prec v$ for all vertices \(v\) in Rows~\(1,2,6\) and \(7\).
        \item $t\prec v$ if \(v\) is in Row~\(4\) of some \(X_i\)
        \item $u\prec v$ if \(u\) is a negative variable vertex in \(X_i\) and \(v\) in \(C_j\) is assigned to a literal \(\overline{x}_i\). \label{item:grid_lit1}
        \item $u\prec v$ if \(u\) is a positive variable vertex in \(X_i\) and \(v\) in \(C_j\) is assigned to a literal \(x_i\).\label{item:grid_lit2}
    \end{enumerate}
    We call \pef{item:grid_lit1} and \pef{item:grid_lit2} \emph{literal constraints}.
    \begin{figure}
        \centering
        \includegraphics[width=\textwidth]{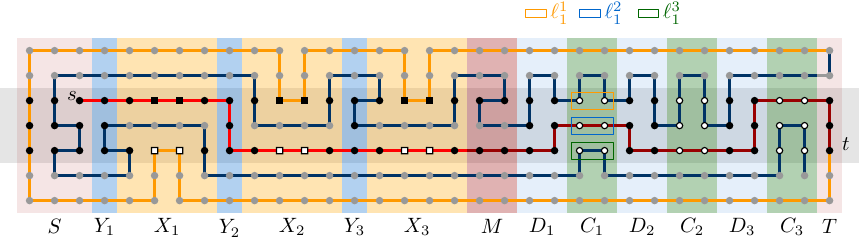}
        \caption{The gadgets for \cref{thm:np-grid} combined to represent a formula. The gray vertices come after \(t\) in the partial order \(\pi\).
        A black square is negative variable vertex, a white square is a positive variable vertex.
        A white disk marks a literal vertex.
        }
        \label{fig:grid_gadgets}
    \end{figure}

Now we show that the instance described above has a $\pi$-extending Hamiltonian path if and only if, the given formula \(\Phi\) is satisfiable.

\subparagraph*{From a $\pi$-extending Hamiltonian path to a satisfying assignment}

    Let \(\mathcal{P}\) be a \(\pi\)-extending Hamiltonian path of \(G\).
     By \pef{item:grid_start}, \(\cP\) starts in \(s\).
    Let \(\mathcal{P}_{s,t}\) be the prefix of \(\mathcal{P}\) ending in \(t\).
    Observe that by the definition of \(\pi\), all vertices that are in Rows \(1,2,6\) or \(7\) or in Row~4 of some variable gadget cannot lie on \(\mathcal{P}_{s,t}\).

    \begin{claim*}\label{claim:pos_neg}
    For any $i \in \{1,\dots,n\}$, $\cP_{s,t}$ either contains both positive or both negative variable vertices of \(X_i\).
\end{claim*}
\begin{claimproof}
Assume for a contradiction that there is an \(i\), such that one of \ \(X_i[3,2]\) or \(X_i[3,3]\) and one of \(X_i[5,2]\) or \(X_i[5,3]\) lie on \(\cP_{s,t}\).
As rows \(1,2,4,6,7\) of \(X_i\) are all after \(t\) in any linear extension of \(\pi\), a valid prefix path \(\cP_{s,t}\) cannot switch rows within \(X_i\).
In particular, it can only cross  \(X_i\) in one row.
\(\cP_{s,t}\) thus first visits either the positive or the negative variable vertices and then exits \(X_i\) on the right and then reenters it from the right, see \cref{fig:grid-assignment} for an illustration.
After traversing \(X_i\) twice, all vertices of \(X_i\) that can be visited before \(t\) are already on the path.
Thus, there are no possible vertices left to cross \(X_i\) again and the path cannot reach \(t\), a contradiction.
\end{claimproof}
\begin{figure}
    \centering
      \includegraphics[page=2]{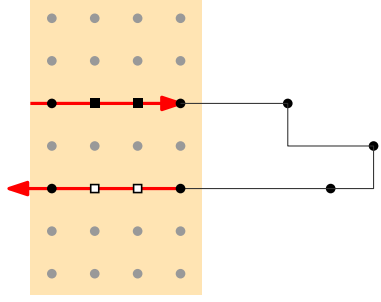}
    \caption{The path \(\cP_{s,t}\) is stuck when traversing both row \(3\) and row \(5\) of \(X_i\)}
    \label{fig:grid-assignment}
\end{figure}

\cref{claim:pos_neg} allows us to define an assignment of the variables of \(\Phi\) as follows.
Set \(x_i\) to false if and only if \(\cP_{s,t}\) traverses \(X_i\) through the negative variable vertices.
Consider a clause gadget \(C_j\) and a vertex \(v\) in \(C_j\) that is visited by \(\cP_{s,t}\).
As only vertices in  rows \(3,4\) or \(5\) are on \(\cP_{s,t}\), the vertex \(v\) has an assigned literal.
Let \(\ell_j^a\) be the literal assigned to \(v\).
If \(\ell_j^a = x_i\) for some \(i\), then by \pef{item:grid_lit2}, the positive variable vertices of \(x_i\) are before \(v\) on \(\cP_{s,t}\). 
Thus the assignment specified above sets \(x_i=1\), satisfying \(c_j\).
If \(\ell_j^a = \overline{x_i}\), by \pef{item:grid_lit1}, the negative variable vertices of \(x_i\) are visited before \(v\) on \(\cP_{s,t}\). 
As in this case \(x_i=0\), the literal \(\ell_j^a\) satisfies \(c_j\).

\begin{figure}
    \centering
    \includegraphics[page=4]{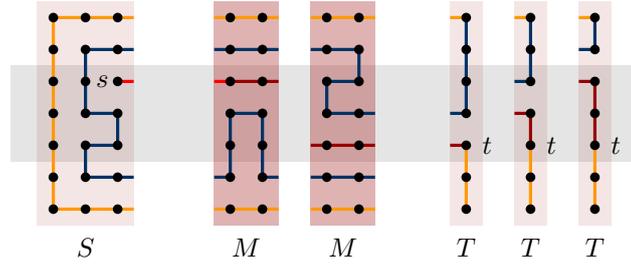}
    \caption{The path \(\cP\) in the start, middle  and end gadget}
    \label{fig:grid_start}
\end{figure}
\begin{figure}
    \centering
    \includegraphics[page=5]{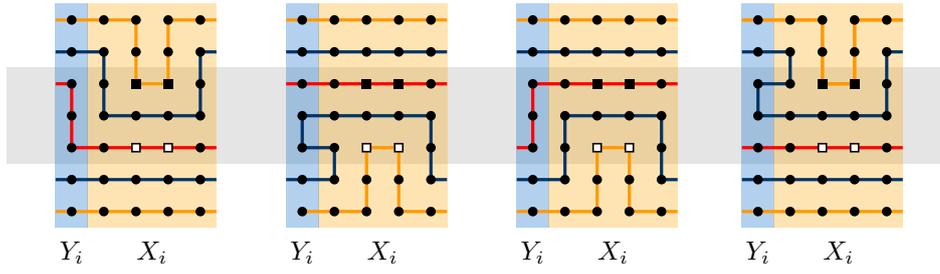}
    \caption{The path \(\cP\) in the variable and variable switch gadgets}
    \label{fig:grid_variable}
\end{figure}
\begin{figure}
    \centering
    \includegraphics[page=6, width=\textwidth]{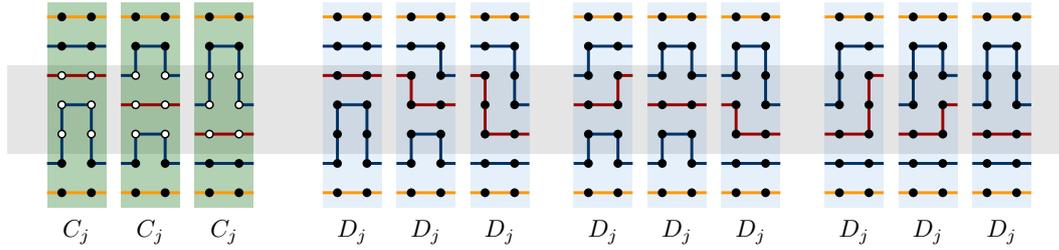}
    \caption{The path \(\cP\) in the  clause and  clause switch gadgets}
    \label{fig:grid_clause}
\end{figure}

\subparagraph*{From a satisfying assignment to a $\pi$-extending Hamiltonian path}
Assume that \(\Phi\) has a satisfying assignment. 
Then we show that there is a Hamiltonian path \(\cP\) that is valid for \(\pi\).
We describe the path in three parts.
The first part \(\cP_{s,t}\) is a valid prefix that connects \(s\) and \(t\).
This part is marked in red in \cref{fig:grid_gadgets,fig:grid_start,fig:grid_variable,fig:grid_clause}.
The second part (\(\cP_2\)) of the path visits the remaining variable vertices.
This part is drawn in orange in the figures.
%We will see that this is necessary to make sure that \(\lambda\) satisfies the literal constraints.
In the last part (\(\cP_3)\), all remaining vertices, in particular those in the clause gadgets are visited (see the blue path in the figures).

The path \(\cP_{s,t}\) starts in \(s\) and then enters \(Y_1\).
If \(x_1 = 0\), then it stays in row~\(3\) and crosses \(X_1\) through the negative variable vertices.
In the other case, it visits the vertices $Y_1[3,1]$, $Y_1[4,1]$, and $Y_1[5,1]$  and then crosses \(X_1\) through the positive variable vertices.
The same process is repeated for all remaining variables.
If \(x_{i-1}\) and \(x_{i}\) have the same assignment, then \(Y_i\) is simply crossed in row~\(3\) or \(5\).
In the other case, the three vertices in Row 3, 4, and 5 of \(Y_i\) are added to \(\cP_{s,t}\).
This is continued until \(M\) is reached. \(M\) is crossed without changing the row.
Let \(\ell_j^a\) be an arbitrary literal that satisfies the clause \(c_j\), then \(\cP_{s,t}\) crosses the clause gadget \(C_j\) in the vertices assigned to \(\ell_j^a\).
If this row changes between consecutive clauses, \(D_{j}\) is used to get the path to the correct row. 
If the path is moving up, this is done in the second column of \(D_j\). If the path is moving down, this is done in the first column. 
In \(T\) the path moves directly down to \(t\).
To see that \(\cP_{s,t}\) is the prefix of a \(\pi\)-extending Hamiltonian path, we first note that only vertices \(v\) with \(v\prec t\) are visited.
Furthermore, if a vertex in a clause gadget is visited that is assigned to a negative literal, then the corresponding negative variable vertices were visited before. 
Analogously, all positive variable vertices are visited before vertices assigned to a positive literal.
This makes sure that the literal constraints involving the visited literal vertices are satisfied.

For the subpath \(\cP_2\), all vertices in variable gadgets that are not on \(\cP_{s,t}\) are visited.
First, \(\cP_2\) visits \(T[6,1]\) and \(T[7,1]\) (see \cref{fig:grid_start} on the right side).
Then it crosses all clause and switch clause gadgets as well as the middle gadget in row~\(7\), until it reaches \(X_n[7,4]\).
The variable gadgets are now visited in the following way:
For each variable, if \(x_i=0\) in the satisfying assignment, then \(\cP_2\) visits the positive variable vertices as depicted by the orange path in the middle two pictures of \cref{fig:grid_variable}.
In the other case, it stays in Row~\(7\), see the leftmost and rightmost figure in \cref{fig:grid_variable}.
Afterwards, \(\cP_2\) continues through \(S\), as indicated by the orange curve in the left figure  \cref{fig:grid_start}.
Then it follows a similar pattern in the top to reach \(T[1,1]\), visiting the negative variable vertices of variables \(x_i\) with \(x_i=1\) as shown by the orange curves in the left and right picture of \cref{fig:grid_variable}.
After leaving \(X_n[1,4]\) all remaining vertices in Row~\(1\) are visited until \(\cP_2\) reaches \(T[1,1]\).

In the final phase, all remaining vertices are visited. Recall that this part of the path is called \(\cP_3\) and it is drawn in blue in all figures.
For the clause and clause switch gadgets, \(\cP_3\) first visits all vertices above \(\cP_{s,t}\) maintaining the following invariant:
If \(\cP_{s,t}\) crosses from one gadget to the other in Row~\(k\), then \(\cP_3\) crosses from one gadget to the other in Row~\(k-1\).
In  \cref{fig:grid_clause} all possible realizations of \(\cP_{s,t}\) within the clause and clause switch gadgets are shown, together with a path \(\cP_3\).
For variable and variable switch gadgets, the invariant for the path is that it enters \(X_i\) and leaves \(Y_i\) in Row~\(2\). 
The realization of \(\cP_3\) for all possible assignments can be seen in \cref{fig:grid_variable}.
The middle gadget \(M\) is used to make the change between these two invariants. In the two middle parts of  \cref{fig:grid_start} one can see the two possible realizations of \(\cP\) in \(M\).
The blue path in the left picture in \cref{fig:grid_start} shows how the path continues in \(S\).

Finally the remaining vertices in \(S\) and below \(\cP_{s,t}\) are visited in a similar way but with a slightly different invariant. 
Each block \((Y_i,X_i)\) for a variable, as well as all remaining gadgets are entered and left in row~\(6\).
The path ends in \(C_m[6,2]\).
See \cref{fig:grid_variable} for possible realizations maintaining this invariant for all possible assignments of the variables.

By maintaining the invariants mentioned above, it is clear, that the path \(\cP\) that results from combining \(\cP_{s,t}, \cP_2\) and \(\cP_3\) is indeed a Hamiltonian path.
It is left to argue that the path \(\cP\) is indeed \(\pi\)-extending.
We have already argued that \(\cP_{s,t}\) is a prefix of a \(\pi\)-extending Hamiltonian path.
Therefore, the first four types of constraints of $\pi$ as well as the literal constraints for all variable vertices on \(\cP_{s,t}\) are fulfilled.
To see that the remaining literal constraints are satisfied, we argue as follows. If some vertex assigned to a literal is visited after \(\cP_{s,t}\), then all remaining positive and negative variable vertices have already been visited in $\cP_2$.
\end{proof}

\begin{theorem}
    POHPP is \NP-complete on grid graphs of height \(h\) for  \(h \geq 7\).
\end{theorem}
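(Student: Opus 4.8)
The plan is to prove hardness by padding the height-$7$ construction of \cref{thm:np-grid} with additional rows; membership in \NP{} is immediate, since a candidate Hamiltonian path and its compatibility with $\pi$ can be checked in polynomial time. Fix an arbitrary $h \geq 7$. Given a 3-SAT formula $\Phi$, I would build exactly the gadgets of \cref{thm:np-grid} on the top seven rows of an $h \times (5n+4m+6)$ grid graph $G$, keep the constraints \pef{item:grid_start}--\pef{item:grid_lit2} unchanged, and add one new family: $t \prec v$ for every vertex $v$ lying in a row among $8,\dots,h$. As before, every vertex outside Rows $3,4,5$ comes after $t$ in every linear extension of $\pi$.

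With this setup the forward direction carries over verbatim. Since all vertices in Rows $1,2$, in Row~$4$ of the variable gadgets, and now in all of Rows $6,\dots,h$ lie after $t$, any prefix $\cP_{s,t}$ of a $\pi$-extending Hamiltonian path is again confined to Rows $3,4,5$. Hence \cref{claim:pos_neg} and the extraction of a satisfying assignment from $\cP_{s,t}$ hold without change, because that argument inspects only the behaviour of the prefix inside the variable and clause gadgets in the three middle rows, and the newly added rows give the prefix no additional room.

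For the backward direction I would reuse the path of \cref{thm:np-grid} on Rows $1$--$7$ and append a traversal of the new rows as the final segment of the path. The key observation is that Rows $8,\dots,h$ contain neither variable nor literal vertices, so they are subject only to the constraints $t \prec v$; consequently they may be visited at the very end, after $\cP_2$ (which collects the remaining variable vertices) and after $\cP_3$, without endangering the variable-before-literal ordering forced by \pef{item:grid_lit1} and \pef{item:grid_lit2}. These rows form an $(h-7)\times(5n+4m+6)$ subgrid attached below Row~$7$, and a boustrophedon started at one of its top corners covers it completely. Crucially, since this snake is the \emph{tail} of the whole path, its second endpoint is free, so no parity condition on $h-7$ or on the width is needed.

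The main obstacle is the splice between the old path and this tail: the path of \cref{thm:np-grid} ends inside Row~$6$ (at $C_m[6,2]$), whereas the snake must begin at a Row-$7$ corner such as $S[7,1]$ or $T[7,1]$, which the original path visits early rather than last. I would therefore reroute the final phase of $\cP_3$ so that the coverage of the bottom buffer Rows~$6$ and $7$ (together with the excursions that still pick up the leftover variable vertices) terminates at such a corner, from which the path descends into Row~$8$ and snakes through Rows $8,\dots,h$. Checking that the rerouted post-$t$ portion still covers every vertex of Rows $1$--$7$ exactly once and still respects the required order is routine grid bookkeeping; with the free-end snake in place the combined walk is a $\pi$-extending Hamiltonian path, and conversely the forward argument turns any such path into a satisfying assignment of $\Phi$, completing the reduction for every fixed $h \geq 7$.
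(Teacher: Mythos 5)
Your setup and forward direction are fine and essentially match the paper: the new rows receive the constraint $t \prec v$ (the paper phrases this as Rows $8,\dots,h$ inheriting the constraints of Row~7, which is the same thing), so the prefix $\cP_{s,t}$ stays confined to Rows 3--5 and \cref{claim:pos_neg} carries over. The gap is in the backward direction, exactly at the step you call ``routine grid bookkeeping.'' Deferring Rows $8,\dots,h$ to a terminal snake forces the Rows-1--7 portion to end at a bottom corner, which in turn forces Row~7 (and part of Row~6) to be covered at the very end of the path. But in the height-7 construction Row~7 is not idle filler: it is the highway that $\cP_2$ uses, immediately after $t$, to travel back across the grid and collect the leftover positive variable vertices of the variables set to false -- these Row-5 vertices are reached from below through Rows 7 and 6, since Row~3 of their gadget is already occupied by the prefix. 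This collection must be completed \emph{before} $\cP_3$ visits any leftover literal vertex, because of the literal constraints \pef{item:grid_lit1} and \pef{item:grid_lit2}. If, as you propose, the sweep of Rows 6/7 ``together with the excursions that still pick up the leftover variable vertices'' is moved to the end of the path, then every leftover literal vertex of a positive literal $x_i$ with $x_i$ false is visited before the positive variable vertices of $X_i$, violating \pef{item:grid_lit2}. So your reroute is not a local modification of $\cP_3$; it would require redesigning the whole post-$t$ schedule, and it is not clear this can be done with the original gadget widths.

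There is also a parity issue you dismiss too quickly. The free end of the snake indeed removes parity conditions on the snake itself, but the Rows-1--7 portion must now be a Hamiltonian path of the $7\times(5n+4m+6)$ grid from $s$ to a prescribed bottom corner, and checkerboard parity dictates which corner is admissible depending on the parity of $5n+4m+6$, i.e., on the instance. These two difficulties are precisely why the paper proceeds differently: it keeps the order $\cP_{s,t},\cP_2,\cP_3$ intact and instead replaces the Row-7 highway inside $\cP_2$ by a space-filling curve that, gadget by gadget, covers Rows $7,\dots,h$ (entering and leaving each gadget in Row~$h$), and it widens the variable gadgets from $7\times 4$ to $h\times 5$ so that the parities of these per-gadget curves work out. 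Your argument would need either to adopt that strategy or to supply an explicit rerouted construction with the accompanying case analysis; as written, the splice step is a genuine gap.
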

\begin{proof}
The same construction as in the proof of \Cref{thm:np-grid} can be used with a slight modification.
All gadgets are now \(h \times k\) grids.
Except for the variable gadgets, that will be a \(h \times 5\) grid, all gadget widths are the same as in the construction of \cref{thm:np-grid}.
The vertices in Rows~\(8,\dots, h\) take over the ordering constraints of the vertices in Row~\(7\) in the same column.
The proof then follows by the same arguments as that of \cref{thm:np-grid}, replacing the part of \(\cP_2\)  that visits the vertices in Row~\(7\) by a space filling curve that enters and leaves each variable block and each remaining gadget in Row~\(h\).
See \cref{fig:grid_larger_seven} for an illustration. \qedhere
\begin{figure}
    \centering
    \includegraphics[page=8, width=\textwidth]{grid_np.pdf}
    \caption{Example for the construction with height \(8\)}
    \label{fig:grid_larger_seven}
\end{figure}
\end{proof}

\subsection{Algorithms}
In the section above, we showed that POHPP is \NP{}-complete for graphs with \param{treewidth} and \param{pathwidth} at least four. 
In this section we give polynomial time algorithms for \param{pathwidth} at most \(3\) and \param{treewidth} at most \(2\).

\subsubsection{Pathwidth at most 3}\label{subsec:pathwidth3}

Let \(X_1,\dots, X_k\) be a path decomposition of width \(3\) of a graph \(G\).
For our algorithm we assume that every bag contains exactly four vertices. Furthermore, we assume that \(X_i\) and \(X_{i+1}\) differ in exactly two vertices.
We call the unique vertex \(u\in X_{i}\setminus X_{i+1}\) the vertex that is \emph{forgotten} in \(X_{i+1}\) and the unique vertex \(w\in X_{i+1}\setminus X_i\) the vertex that is \emph{introduced} in \(X_{i+1}\).
Furthermore, let \(G_i\) be the induced subgraph on \(V_i = \bigcup_{j=1}^i X_j\).
A general path-decomposition of width~3 can be computed in linear time~\cite{bodlaender1996linear-time,furer2016faster}.
It might not fulfill the the constraint that \(|X_i|= 4\) and that consecutive bags differ in exactly two vertices. It can however be brought into this form in linear time.

As each vertex set \(X_i\) is a separator of \(G\), the following observation holds:
 \begin{observation*}\label{obs:forgotten_connected}
 The vertex \(u\) forgotten in \(X_{i+1}\) has no edge to a vertex in \(V\setminus V_i\).
 The vertex \(w\) introduced in \(X_{i+1}\) has no edge to a vertex in \(V_i  \setminus X_{i+1}\).
 \end{observation*}

 Our algorithm is based on the folklore dynamic programming algorithm that solves \textsc{Hamiltonian Path} for graphs of bounded \param{pathwidth}.
 We will first sketch that algorithm and then describe the modifications needed when considering partial order constraints.

Let \(\cP\) be an ordered Hamiltonian path in \(G\) and \(\cP_i = \cP \cap G_i\) be the set of paths induced by \(V_i\). 
For \(P=(v_1,\dots, v_\ell)\in \cP_i\) we call \(v_1\) the \emph{start vertex}, \(v_\ell\) the \emph{end vertex} and the remaining vertices the \emph{interior vertices}. 
In some cases, we refer to the start and end vertices as \emph{terminal vertices}.
If \(v_1\) is the start vertex of \(\cP\) or \(v_\ell\) is the end vertex of \(\cP\), we call \(v_1\) or \(v_\ell\) the \emph{global start vertex} or \emph{global end vertex}, respectively.

If \(|P|=1\), we call \(P\) a \emph{trivial path}.
 A path \(\cP\) induces a \emph{signature} that encodes the interaction of \(\cP_i\) with the vertices in \(X_i\).
 The subproblems in the dynamic program are then all pairs of a bag \(X_i\) and a signature \(\sigma\) for that bag and the entry for that pair is \texttt{true} if there is a set of path that visit all vertices in \(V_i\) and interact with \(X_i\) as indicated by the signature and \texttt{false} otherwise.
 The recurrence in the dynamic program then considers all entries for \(X_{i-1}\) with a signature \(\gamma\) that is \emph{compatible} to $\sigma$, i.e., both $\gamma$ and $\sigma$ can hypothetically be induced by the same Hamiltonian path.
 If there is one such signature $\gamma$ with value \texttt{true} and the vertex introduced in \(X_i\) can be connected to the vertices in \(X_i \cap X_{i-1}\) to form \(\sigma\), then the entry is \texttt{true}.

 The number of signatures and pairs of compatible signatures is bounded by a function depending on the width of the bags and, thus, the algorithm is an \FPT{} algorithm when parameterized by \param{pathwidth}.

When not considering ordering constraints, the information if there is a set of paths that visit every vertex in \(V_i\) exactly once and that interface with \(V\setminus V_i\) as indicated by \(\sigma\) is enough.
However, in POHPP the order in which the vertices are visited is relevant. 
 Thus, we also need information about the partition of the vertices in \(X_i\) to the paths.
 One could naively extend the dynamic program above to store all possible partitions of the vertices in \(V_i\) to the paths. 
 However, as the number of these partitions grows exponentially with~\(n\) this approach is not feasible in general.
For \param{pathwidth} \(3\), however, we can show that the relevant information can be maintained and computed efficiently.

The following lemmas are the basis for our algorithm.
\begin{lemma*}\label{lem:pw3_storinternal}
    Let \(\cP\) be a Hamiltonian path, let \(1\leq i < k\) and \(u\) be the vertex forgotten in \(X_{i+1}\). Then
    \begin{enumerate}
        \item \(u\) is either the global start vertex, the global end vertex, or an internal vertex of a path in \(\cP_i\), and
        \item \(\cP_i\) contains at most two isolated vertices.
    \end{enumerate}
\end{lemma*}
\begin{proof}
First note, that \(u\) cannot be an isolated vertex as, by \cref{obs:forgotten_connected}, it cannot be connected to \(\cP\) later.
Now, if \(u\) is the terminal vertex of a path in \(\cP_i\), then, also by \cref{obs:forgotten_connected}, it is the global start or end vertex. In the other case, it is an internal vertex and the first statement follows.

For the second part, it is clear that there cannot be four isolated vertices, as \(u\in X_i\) is not an isolated vertex by the argument above.
Furthermore, if there are exactly three isolated vertices, then \(u\) is the terminal vertex of a non-trivial path containing elements of $V_i \setminus X_i$ and, thus, $u$ is not the global start or end vertex. However, by the first part of this observation, if \(u\) is not the global start or end vertex, it is an interior vertex, a contradiction.
\end{proof}

\begin{lemma}\label{lem:pw3_form}
     For \(1\leq i < k\), \(\cP_i\) has one of the following forms:
    \begin{enumerate}
        
    \item \(\cP_i\) contains exactly one non-trivial path together with at most two isolated vertices.
    \item \(\cP_i\) contains a non-trivial prefix and a non-trivial suffix of $\cP$.
    \item \(\cP_i\) contains a non-trivial midpart and a non-trivial prefix or a non-trivial suffix of $\cP$.
    %\item \(\cP_i\) contains a suffix and a midpart
\end{enumerate}
Furthermore, \(\cP_i = \{\cP\}\) if and only if \(i=k\).
\end{lemma}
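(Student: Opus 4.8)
The plan is to study $\cP_i$ through the vertices that link it to $V \setminus V_i$. Call a vertex $v \in V_i$ a \emph{boundary vertex} if it has a neighbour on $\cP$ that lies in $V \setminus V_i$. Since $X_i$ separates $V_i \setminus X_i$ from $V \setminus V_i$, every edge of $G$ leaving $V_i$ has its $V_i$-endpoint in $X_i$ (the separator property underlying \cref{obs:forgotten_connected}); hence every boundary vertex lies in $X_i$, and there are at most four of them. I would immediately sharpen this to \emph{at most three}: by \cref{lem:pw3_storinternal} the vertex $u$ forgotten in $X_{i+1}$ is the global start, the global end, or an internal vertex of a path of $\cP_i$, and in particular it is not isolated. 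In the internal case both $\cP$-edges at $u$ stay inside $V_i$; if $u$ is a global endpoint then, being non-isolated, its unique $\cP$-edge also stays inside $V_i$. Thus $u$ is never a boundary vertex, so $X_i$ contains a non-boundary vertex and at most three boundary vertices.

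Next I would count boundary vertices against the shape of $\cP_i$. Let $a$ be the number of non-trivial midparts, let $b,c \in \{0,1\}$ record the presence of a non-trivial prefix and of a non-trivial suffix, and let $d$ be the number of isolated vertices; for $i<k$ every isolated vertex has a $\cP$-neighbour outside $V_i$ and is therefore a boundary vertex. A non-trivial midpart contributes exactly its two endpoints, a prefix or suffix its single non-global endpoint, and an isolated vertex contributes itself; internal vertices of non-trivial paths contribute nothing. As the paths of $\cP_i$ are vertex-disjoint, these are pairwise distinct, so the number of boundary vertices equals $2a+b+c+d \leq 3$. Because $|V_i|\geq 4$ while $d\leq 2$ by \cref{lem:pw3_storinternal}, at least one non-trivial path exists, i.e.\ $a+b+c\geq 1$. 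Now $2a\leq 3$ already forces $a\leq 1$, so a brief arithmetic case distinction finishes the proof: if $a+b+c=1$ we get a single non-trivial path with $d\leq 2$ (form~1); if $a=0,b=c=1$ we get a prefix and a suffix (form~2); and if $a=1$ with exactly one of $b,c$ equal to $1$ we get a midpart together with a prefix or a suffix (form~3). The bound $2a+b+c+d\leq 3$ is exactly what eliminates the remaining combinations, since two midparts ($a=2$) and a simultaneous midpart, prefix, and suffix ($a=b=c=1$) each require at least four boundary vertices.

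For the equivalence, I would note that each step introduces a genuinely new vertex, as the introduced vertex cannot already lie in $V_i$ by the interval property of the decomposition; hence $V_1 \subsetneq \cdots \subsetneq V_k = V(G)$. Therefore $\cP_i=\{\cP\}$, equivalently $V_i=V(G)$, holds precisely when $i=k$, and for $i<k$ the set $V\setminus V_i$ is nonempty, which guarantees the boundary vertex invoked above. I expect the one genuinely delicate point to be the passage from four to three boundary vertices: the naive separator bound admits the spurious configurations of two midparts and of a prefix--midpart--suffix triple, and only by extracting the guaranteed non-boundary vertex from \cref{lem:pw3_storinternal} do these degenerate cases disappear. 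Everything after the inequality $2a+b+c+d\leq 3$ is routine bookkeeping.
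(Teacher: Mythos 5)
Your proof is correct and takes essentially the same route as the paper: your ``boundary vertices'' are exactly the non-global terminal vertices and isolated vertices that the paper's counting argument places in the size-4 bag $X_i$, and both arguments use \cref{lem:pw3_storinternal} (via \cref{obs:forgotten_connected}) to reserve one slot of $X_i$ for the forgotten vertex, which is what kills the two-midpart and midpart-plus-prefix-plus-suffix configurations. Your explicit inequality $2a+b+c+d\leq 3$ and the ensuing case analysis are simply a more quantitative write-up of the paper's ``simple counting argument,'' with the same use of the uniqueness of the global start and end vertices to bound $b,c\leq 1$ and the same strict-inclusion argument $V_i\subsetneq V_{i+1}$ for the final equivalence.
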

\begin{proof}
First note that \(\cP_i\) cannot contain two prefixes or two suffixes as the start vertex of a prefix and the end vertex of a suffix are the unique terminal vertices of \(\cP\).
If \(\cP_i\) contains only one path, then there are at most two isolated vertices, due to \cref{lem:pw3_storinternal}.

Now assume that \(\cP_i\) contains two midparts or more than two non-trivial paths. Then, by a simple counting argument there is no internal vertex of a path \(P\in \cP_i\) in \(X_i\), a contradiction to \cref{lem:pw3_storinternal} and, thus, the statement holds.

If \(\cP_i = \{\cP\}\), then \(V_i=V\) has to hold. This is only the case for \(V_k\) and thus the last part of the lemma holds.
\end{proof}
To ease some of the arguments, we will consider \(\cP\) to be both a prefix and a suffix.
The following lemma helps us to reduce the number of relevant partitions of the vertices to the paths.
\begin{lemma}\label{obs:change}
    For each path \(\cP\) such that \(\cP_{i}\) contains a non-trivial midpart and a non-trivial prefix (or suffix), there is an index \(\ell \leq i\) such that all \(\cP_j\) for \(j=\ell,\dots, i\) contain a non-trivial midpart and a non-trivial prefix (or suffix) and either \(\ell=1\) or \(\cP_{\ell-1}\) contains only one non-trivial path.
\end{lemma}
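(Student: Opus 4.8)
The plan is to take $\ell$ to be the smallest index for which every $\cP_j$ with $\ell \le j \le i$ contains a non-trivial midpart together with a non-trivial prefix or suffix (that is, is of Form~3 in \cref{lem:pw3_form}); this is well defined since $\cP_i$ itself has this form. If $\ell = 1$ there is nothing more to show, so assume $\ell > 1$. By the minimality of $\ell$, the set $\cP_{\ell-1}$ is not of Form~3, so by \cref{lem:pw3_form} it is of Form~1 or Form~2. The whole argument then reduces to ruling out Form~2, since Form~1 is precisely the statement that $\cP_{\ell-1}$ contains only one non-trivial path.

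To analyse the step from $\cP_{\ell-1}$ to $\cP_\ell$ I would use that $V_\ell = V_{\ell-1} \cup \{w\}$, where $w$ is the vertex introduced in $X_\ell$, so $\cP_\ell$ arises from $\cP_{\ell-1}$ by adding the single vertex $w$ together with those edges of $\cP$ incident to $w$ whose other endpoint already lies in $V_{\ell-1}$. Each such endpoint is missing one of its $\cP$-neighbours (namely $w$) in $\cP_{\ell-1}$ and is therefore a terminal vertex of a path in $\cP_{\ell-1}$ or an isolated vertex; consequently adding $w$ can only (i) create a new isolated vertex, (ii) prolong one existing path at a terminal, or (iii) merge two distinct paths of $\cP_{\ell-1}$ at their terminals.

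Now suppose for contradiction that $\cP_{\ell-1}$ is of Form~2, i.e.\ it consists of a non-trivial prefix, a non-trivial suffix, and at most two isolated vertices; note that the global start lies on the prefix and the global end on the suffix, so every isolated vertex of $\cP_{\ell-1}$ is neither the global start nor the global end. I would then check each of the operations (i)--(iii). The prefix stays a prefix and the suffix stays a suffix, because they contain the persistent global terminals; hence creating an isolated vertex, prolonging the prefix or suffix, or merging an isolated vertex into the prefix or suffix all leave $\cP_\ell$ of Form~2. Merging the prefix with the suffix would produce a single path containing both global terminals, hence all of $\cP$, forcing $V_\ell = V$ and $\ell = k$, which is excluded since $\ell \le i < k$. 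The only remaining way a midpart could appear is to prolong an isolated vertex into a non-trivial path or to merge two isolated vertices; since every isolated vertex is a non-terminal of $\cP$, the result would be a genuine midpart, so $\cP_\ell$ would contain a prefix, a suffix and a midpart, i.e.\ three non-trivial paths, contradicting \cref{lem:pw3_form}. Thus no operation turns Form~2 into Form~3, so $\cP_\ell$ could not be of Form~3, contradicting $\ell \le i$. Therefore $\cP_{\ell-1}$ is of Form~1, as required.

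The main obstacle, and the step I would treat most carefully, is the case analysis of operation~(iii): in particular verifying that in Form~2 a new midpart can only originate from isolated vertices, and that this is exactly the situation excluded by the bound of at most two non-trivial paths in \cref{lem:pw3_form} (ultimately \cref{lem:pw3_storinternal}). The remaining points — the persistence of the prefix and suffix and the exclusion of the prefix–suffix merge via $\ell = k$ — are routine once the single-vertex-addition viewpoint is in place.
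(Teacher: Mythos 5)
Your proposal is correct and takes essentially the same approach as the paper: pick the minimal index $\ell$, then rule out Form~2 for $\cP_{\ell-1}$ using the fact that the prefix and suffix are anchored at the global terminal vertices and therefore persist when the next vertex is introduced, so a midpart could only arise as a third non-trivial path, contradicting \cref{lem:pw3_form}. Your explicit case analysis of the single-vertex addition (including the prefix--suffix merge forcing $\ell = k$) is a more detailed rendering of the paper's terse "a suffix cannot disappear" observation, but not a different route.
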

\begin{proof}
We show the statement for the case of a prefix.
It is clear that there is an \(\ell\) such that \(\ell=1\) or \(\cP_{\ell-1}\) does not contain both a non-trivial midpart and a non-trivial prefix.
Now assume \(\ell \neq 1\).
    First observe that if \(\cP_{\ell-1}\) contains a suffix, then all \(\cP_{j}\) with \(j\geq \ell-1\) will contain a suffix.
    Thus, if \(\cP_\ell\) contains a midpart and a prefix, then \(\cP_{\ell-1}\) cannot contain a suffix and the statement follows from \cref{lem:pw3_form}.
\end{proof}

We define the \emph{signature} \(\sigma\) of a bag \(X_i\) as a mapping of the vertices in \(X_i\) to the possible types of vertices and paths. 
If \(\sigma\) induces a midpart and another path, then \(\sigma\) also contains a value \(\ell\) for \(1\leq \ell \leq i\). If \(\ell\neq i\), then we also store a signature \(\tau\) for \(X_{\ell}\).
Let \(\cP(\sigma)\) be the set of paths induced by \(\sigma\).
Furthermore, let \(\sigma_t(v)\in \{\texttt{start},\texttt{end},\texttt{int}\}\) be the type assigned to \(v\) by \(\sigma\) and \(\sigma_p(v) \in \{\texttt{pre},\texttt{suf},\texttt{mid}\}\) be the type of path assigned to \(v\) by \(\sigma\).

We call a signature \emph{valid} if \(\cP(\sigma)\) and \(\sigma_t\) do not violate \cref{lem:pw3_storinternal} and \cref{lem:pw3_form}.
Intuitively, a signature \(\sigma\) encodes the following information for a path that is a candidate for a \(\pi\)-extending Hamiltonian path: How does the path interact with \(X_i\)? If there is a non-trivial mipart and another non-trivial path in $X_i$, then in which bag \(X_\ell\) did the second path appear? And, finally, how does the path interact with \(X_{\ell}\)?
In the following we will write \(\ell(\sigma)\) and \(\tau(\sigma)\) for the values \(\ell, \tau\) stored with a signature \(\sigma\). 

 We say that the signatures \(\gamma\) for \(X_{i-1}\) and \(\sigma\) for \(X_{i}\) are \emph{compatible} if they are both valid and there is a way to extend the paths induced by \(\gamma\) with the vertex \(w\) introduced in \(X_{i}\) to get the signature \(\sigma\).
 If \(\sigma\) induces a midpart and another path and \(\ell(\sigma) < i\), then \(\sigma\) and \(\gamma\) are only compatible if \(\ell(\sigma) = \ell(\gamma)\) and \(\tau(\sigma) = \tau(\gamma)\) for \(\ell(\sigma) \leq i-2\) and \(\tau(\sigma) = \gamma = \tau(\gamma)\) if \(\ell(\sigma) = i-1\).
Furthermore, if \(\ell(\sigma)=i\), then a signature \(\gamma\) is only compatible to \(\sigma\) if \(\gamma\) induces only one non-trivial path.
\begin{lemma*}\label{lem:pw3_onlyone}
Let \(\sigma\) be a signature for \(X_i\) such that \(\cP(\sigma)\) contains a non-trivial midpart and another non-trivial path. % with \(\ell(\sigma) > i-1\).
Then there is at most one signature \(\gamma\) for \(X_{i+1}\) that is compatible to \(\sigma\) such that \(\cP(\gamma)\) contains a non-trivial midpart and another non-trivial path.
\end{lemma*}
\begin{proof}
As a suffix or prefix cannot disappear between compatible signatures, the non-midparts in $\cP(\sigma)$ and $\cP(\gamma)$ must be of the same type.
Without loss of generality, assume that the non-midpart is a prefix.
Consider the bags \(X_{i}, X_{i+1}\) and \(X_{i+2}\).
Note that \(X_{i+2}\) exists as \(\cP(\gamma)\) contains a midpart and another path and such a signature is not valid for \(X_k\).

See \cref{fig:pw3_onlyone}  for an illustration of the following arguments. Let \(w\) be the vertex introduced in \(X_{i+1}\) and \(u\) be the vertex forgotten in \(X_{i+1}\).
Furthermore, let \(u'\) be the vertex forgotten in \(X_{i+2}\).
Then, by \cref{lem:pw3_storinternal}, \(u\) is either the global start vertex or \(\sigma_t(u) = \texttt{int}\).
As \(u\) is forgotten in \(X_{i}\) and \(\cP(\gamma)\) contains a prefix, \(u'\) cannot be the global start vertex and, thus, \(\gamma_t(u')=\texttt{int}\).
 All other vertices in \(X_{i+1}\) have to be mapped to terminal vertices to fulfill the assumption on \(\gamma\) and, thus, \(u'\) is the unique vertex that is assigned to be internal by \(\gamma\).
 Similarly, \(u\) is the only vertex, if any, that is assigned to be internal by \(\sigma\).

On the other hand, note that \(\gamma_t(w)\neq \texttt{int}\) since connecting \(w\) to two vertices in \(X_{i+1}\setminus X_{i}\) would either close a cycle, if they are in the same path, or it would connect the paths, a contradiction to the assumption on \(\gamma\).
Additionally, \(w\) cannot be the global start vertex since \(\cP(\sigma)\) already contains a prefix and, therefore, the global start vertex is in \(V_{i}\).
Thus, \(\gamma_t(w)\in \{\texttt{start}, \texttt{end}\}\) and it follows that \(w\neq u'\).
As \(\gamma_t(u') = \texttt{int}\)  but \(\sigma_t(u')\in \{\texttt{start}, \texttt{end}\}\), in any path that has signature \(\sigma\) in \(X_{i}\) and induces a midpart and a prefix in \(X_{i+1}\), \(w\) is connected to \(u'\).
Without loss of generality, assume \(\sigma_t(u') = \texttt{start}\). 
Then \(\gamma_t(w) = \texttt{start}\), \(\gamma_t(u')= \texttt{int}\) and for all other vertices in \(v\in X_i \cap X_{i+1} \cap X_{i+2}\) we have \(\gamma_t(v) = \sigma_t(v)\).
Furthermore, \(\gamma_p(v) = \sigma_p(v)\) for \(v\in X_{i+1}\cap X_i\) and  \(\gamma_p(w)=\gamma_p(u')\).
Thus all signatures \(\gamma\) that are compatible to \(\sigma\) have the same mappings \(\gamma_t\) and \(\gamma_p\).
As \(\ell(\sigma)  = \ell(\gamma)\) and \(\tau(\sigma) = \tau(\gamma)\) holds for every pair of compatible signatures, the statement follows.
 \qedhere
 \begin{figure}
     \centering
     \includegraphics{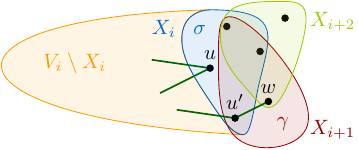}
     \caption{Illustration of the proof of \cref{lem:pw3_onlyone}. Vertex \(u'\) is a terminal vertex in the signature \(\sigma\) of $X_i$ but an internal vertex in the signature \(\gamma\) of $X_{i+1}$. Thus, it is connected to \(w\).}
     \label{fig:pw3_onlyone}
 \end{figure}
\end{proof}

Given a signature \(\sigma\) for \(X_i\), a \emph{path mapping} \(f_\sigma\) assigns each vertex \(v\in V_i\) to one of the paths induced by \(\sigma\).
In particular \(f_\sigma(v) = \sigma_p(v)\) for \(v\in X_i\), i.e., the vertices in \(X_i\) are mapped according to \(\sigma\).
\begin{observation*}\label{obs:pw:unique_partition}
    Let \(\sigma\) be a signature for \(X_i\).
    If \(i=1\) or \(\sigma\) induces only one non-trivial path or \(\ell(\sigma)=i\), then there is only one possible path mapping \(f_\sigma\).
\end{observation*}
\begin{proof}
    In the first two cases, the statement follows directly. For the third case, let \(w\) be the vertex introduced in \(X_i\). 
    As \(\ell(\sigma)=i\), a new path appears by adding \(w\). Thus \(w\) is connected to previously isolated vertices.
    This implies, that \(w\) together with any other vertices in \(X_i\) that are mapped to the same path are the only vertices in that path. All other vertices in \(V_i\) are mapped to the other path.
\end{proof}
If \(\sigma\) induces a midpart and another path, we additionally request that the vertices in \(V_{\ell(\sigma)}\) are mapped by \(f_\sigma\) according to the unique path mapping \(f_{\tau(\sigma)}\).
A path mapping \emph{contradicts} \(\pi\) if there is no possible order in which the vertices  in \(V\setminus V_i\) can be appended and prepended to the paths without violating the constraints given by \(\pi\).
If $v$ is part of the prefix, then the predecessors of $v$ in $\pi$ must not be contained in $V \setminus V_i$ or in a midpart. Equivalently, if $v$ is in the suffix, then the successors of $v$ in $\pi$ must not be contained in $V \setminus V_i$ or in a midpart if it exists. If $v$ is at the beginning (end) of a midpart, then the predecessors (successors) of $v$ in $\pi$ must not be part of that midpart.

Now we can define the dynamic program that will solve POHPP:
For  each bag \(X_i\) and each valid signature  \(\sigma\) for \(X_i\), define the subproblem \(D[i,\sigma]\).
If there is a path mapping \(f_\sigma\) for \(X_i\) that does not contradict \(\pi\) then \(D[i,\sigma]=f_\sigma\), otherwise \(D[i,\sigma]=\bot\).
The algorithm considers increasing values of \(i\).
In the base case \(D[1,\sigma]\), there is only one possible path partition \(f_\sigma\), due to \cref{obs:pw:unique_partition}.
If it contradicts \(\pi\), then set \(D[1,\sigma]=\bot\) and, otherwise, set \(D[1,\sigma]=f_\sigma\).

For an entry \(D[i,\sigma]\) with \(i\geq 2\), iterate over all valid signatures \(\gamma\) for \(X_{i-1}\) that are compatible to \(\sigma\) and where \(D[i-1,\gamma]\neq \bot\).
Let \(w\) be the vertex introduced in \(X_{i}\) and \(f_\sigma^\gamma\) the path partition that extends \(D[i-1,\gamma]\) by assigning \(w\) to the path defined by \(\sigma\).
As \(\sigma\) and \(\gamma\) are consistent, the position of \(w\) in the paths is uniquely determined. Iterate over the vertices \(V\setminus \{w\}\) to explicitly check if adding \(w\) in this unique position contradicts \(\pi\). 

If there is a \(\gamma\) such that \(w\) can be added to the path, then set \(D[i,\sigma]=f_\sigma^\gamma\).
In the other case, set \(D[i,\sigma] = \bot\).
If there is a valid signature \(\sigma\) for \(X_k\) such that \(D[k,\sigma]\neq \bot\), then the algorithm returns \emph{yes}, in the other case, it returns \emph{no}.

\begin{lemma*}\label{lem:pw3_midotherunique}
   Let \(i\geq 2\) and let \(\sigma\) be a valid signature for \(X_i\) that induces a non-trivial midpart and another non-trivial path with \(\ell(\sigma) < i\).
   Then there is at most one signature \(\gamma\) for \(X_{i-1}\) such that \(D[i-1,\gamma]\neq \bot\).
\end{lemma*}
\begin{proof}
Without loss of generality we may assume that the other path is a prefix.
If there was more than one compatible signature \(\gamma\) for \(\sigma\) with \(D[i-1,\gamma]\neq \bot\), then
there are at least two sequences \(\sigma =\sigma_i^1, \sigma_{i-1}^1,\dots\allowbreak, \sigma_{\ell(\sigma)+1}^1,  \sigma_{\ell(\sigma)}^1 = \tau(\sigma)\) and \(\sigma = \sigma_i^2, \sigma_{i-1}^2,\dots\allowbreak, \sigma_{\ell(\sigma)+1}^2,  \sigma_{\ell(\sigma)}^2 = \tau(\sigma)\), such that \(D[j,\sigma_j^1]\neq\bot\) and \(D[j,\sigma_j^2]\neq \bot\) for all \(\ell(\sigma)\leq j \leq i-1\).

Applying \cref{lem:pw3_onlyone} with \(i= \ell(\sigma)\) implies that \(\sigma_{\ell(\sigma)+1}^1 = \sigma_{\ell(\sigma)+1}^2\).
Applying this inductively yields \(\sigma_j^1 = \sigma_j^2\) for all \(j\) with \(\ell(\sigma)\leq j \leq i-1\) and the lemma follows.
 \end{proof}

\begin{theorem}\label{thm:pw3}
    POHPP in graphs of \param{pathwidth} at most \(3\) can be solved in \(\O(n^3)\) time.  
\end{theorem}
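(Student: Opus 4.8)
The plan is to verify two properties of the dynamic program set up above: that it correctly decides whether a $\pi$-extending Hamiltonian path exists, and that it runs in $\O(n^3)$ time. Correctness rests on the invariant that $D[i,\sigma]\neq\bot$ holds exactly when there is a vertex-disjoint collection of paths covering $V_i$ that induces the signature $\sigma$ together with its stored data $\ell(\sigma),\tau(\sigma)$ and whose path mapping admits a completion to a $\pi$-extending Hamiltonian path. I would prove this invariant by induction on $i$. The base case $i=1$ is immediate from \cref{obs:pw:unique_partition}, which yields a single path mapping. For the step I would argue both directions: if a $\pi$-extending Hamiltonian path $\cP$ exists, then the signatures it induces on the bags are valid (by \cref{lem:pw3_storinternal} and \cref{lem:pw3_form}), pairwise compatible, and each has a non-contradicting mapping, so $D[i,\sigma_i]\neq\bot$ for all $i$; conversely every non-$\bot$ entry is witnessed by a concrete extendable family of paths. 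Since $V_k=V$, \cref{lem:pw3_form} forces every valid signature for $X_k$ to consist of the single path $\cP$, so a non-$\bot$ entry at the last bag is precisely a $\pi$-extending Hamiltonian path.

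The delicate part of correctness is that the program stores only $\sigma$ and the pair $(\ell,\tau)$, not the full partition of $V_i$ into prefix, suffix, and midpart; for the $\pi$-contradiction test to be well defined I must show this data determines the path mapping $f_\sigma$ uniquely. When $\cP_i$ has a single non-trivial path, or the second path is introduced in the current bag ($\ell(\sigma)=i$), uniqueness is \cref{obs:pw:unique_partition}. When $\cP_i$ has a midpart and a second path with $\ell(\sigma)<i$, I would invoke \cref{lem:pw3_midotherunique}: the sequence of non-$\bot$ signatures from $X_{\ell(\sigma)}$ up to $X_i$ is unique, so the partition obtained by successively appending the introduced vertices starting from $\tau(\sigma)$ is unique as well. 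Hence every Hamiltonian path realizing the stored data produces the same mapping, and the local test stated before the theorem — each prefix (suffix) vertex having no predecessor (successor) left in $V\setminus V_i$ or in a midpart, and analogously for the endpoints of a midpart — is exactly the condition for global extendability, using that $\pi$ is transitively closed.

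For the running time I would first count objects. A bag has exactly four vertices, so there are only $\O(1)$ base type assignments $\sigma_t,\sigma_p$; a signature carrying a midpart and a second path additionally stores $\ell\in\{1,\dots,i\}$ together with a constant-size $\tau$, giving $\O(n)$ signatures per bag and $\O(n^2)$ entries over the $k=\O(n)$ bags. Computing one entry means iterating over compatible predecessor signatures and, for each candidate, an $\O(n)$ scan of $V$ for the $\pi$-test. The point that avoids an $\O(n^4)$ blow-up is the bound on processed predecessors. For the $\O(n)$ midpart-and-second-path signatures $\sigma$ with $\ell(\sigma)<i$, \cref{lem:pw3_midotherunique} gives at most one predecessor $\gamma$ with $D[i-1,\gamma]\neq\bot$, and compatibility forces $\ell(\gamma)=\ell(\sigma)$, $\tau(\gamma)=\tau(\sigma)$ (or $\gamma=\tau(\sigma)$ when $\ell(\sigma)=i-1$), so $\gamma$ is located in $\O(1)$ and the entry costs $\O(n)$, i.e. $\O(n^2)$ per bag. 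The $\O(1)$ signatures with $\ell(\sigma)=i$ admit only single-path predecessors, again $\O(1)$ of them. For the $\O(1)$ simple signatures (one path, or a prefix and a suffix) a midpart may merge into the prefix or suffix, so up to $\O(n)$ midpart-and-second-path predecessors must be scanned, each at cost $\O(n)$, contributing $\O(n^2)$ per bag. Summing yields $\O(n^2)$ per bag and $\O(n^3)$ overall.

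I expect the main obstacle to be the running-time bookkeeping rather than the correctness skeleton: one must confirm that the $\O(n)$ midpart-and-second-path target signatures are genuinely cheap (a single incoming transition each, via \cref{lem:pw3_onlyone} and \cref{lem:pw3_midotherunique}) while the few simple targets absorb all remaining $\O(n)$-branching, since a transition that compared every pair of signatures of consecutive bags would already cost $\O(n^4)$. A secondary point demanding care is the equivalence of the purely local ``does not contradict $\pi$'' test with true global extendability; this should follow from transitivity of $\pi$ and from the fact that the partition fixes, for each already-placed vertex, whether its unrealized comparabilities must be fulfilled among still-unplaced vertices or are already irreparably violated.
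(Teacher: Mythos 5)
Your dynamic program, your use of the lemmas, and your running-time bookkeeping coincide with the paper's proof: the $\O(n)$ midpart-plus-second-path signatures each admitting a single non-$\bot$ predecessor (\cref{lem:pw3_midotherunique}), and the $\O(1)$ remaining signatures absorbing the $\O(n)$-fold branching at $\O(n)$ cost per transition, is exactly how the paper obtains $\O(n^2)$ per bag and $\O(n^3)$ overall. The gap is in the correctness part. You reduce well-definedness of the $\pi$-contradiction test to the claim that $\sigma$ together with $(\ell(\sigma),\tau(\sigma))$ determines the path mapping $f_\sigma$ \emph{uniquely}, and you verify this for two shapes: a single non-trivial path or $\ell(\sigma)=i$ (via \cref{obs:pw:unique_partition}), and a midpart plus a second non-trivial path with $\ell(\sigma)<i$ (via \cref{lem:pw3_midotherunique}). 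But \cref{lem:pw3_form} allows a third shape, a non-trivial prefix together with a non-trivial suffix, and there uniqueness is simply false: already in a $2\times k$ grid, one Hamiltonian path may take the whole top row as prefix and the bottom row as suffix, while another snakes so that a bottom-row vertex of the first column lies on the prefix — both induce the same signature on $X_i$, yet assign a forgotten vertex to different paths. Consequently the forward direction of your induction ("every Hamiltonian path realizing the stored data produces the same mapping, hence $D[i,\sigma]\neq\bot$ whenever a $\pi$-extending Hamiltonian path exists") breaks for prefix-and-suffix signatures: the mapping stored in $D[i,\sigma]$ may differ from the one induced by the actual Hamiltonian path.

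The missing idea, which is how the paper closes this case, is an exchange argument rather than uniqueness: if $f_\sigma$ and $f'_\sigma$ are two path mappings for a prefix-and-suffix signature that both pass the local test, then any vertex $v$ they place differently has all its $\pi$-predecessors in $V_i$ (because some mapping puts $v$ on the prefix) and all its $\pi$-successors in $V_i$ (because the other puts $v$ on the suffix), so $v$ is incomparable to every vertex of $V\setminus V_i$. Hence a $\pi$-extending Hamiltonian path inducing $f'_\sigma$ can be converted into one inducing $f_\sigma$, and storing any single non-contradicting mapping suffices. Slotting this argument into your induction for the prefix-and-suffix shape (in place of the false uniqueness claim) repairs the proof; the rest of your proposal stands as is.
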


\begin{proof}
 As in each bag one vertex is forgotten, there are \(\O(n)\) possible bags.
 For each bag, there are \(\O(n)\) valid signatures as there are only $\O(1)$ valid choices of $\cP(\sigma)$, at most $\O(n)$ choices for $\ell(\sigma)$ and $\O(1)$ choices for $\tau(\sigma)$.

 There are two cases for these signatures of $X_i$. First, we consider those signatures $\sigma$ where $\cP(\sigma)$ contain a non-trivial midpart and another non-trivial path and $\ell(\sigma) < i$. There are $\O(n)$ of these signatures for a bag $X_i$. By \cref{lem:pw3_midotherunique}, there is only one signature $\gamma$ of \(X_{i-1}\) for which we have to check if \(f_\sigma^\gamma\) contradicts \(\pi\). 
 Finding  \(\gamma\) takes \(\O(n)\) time. Checking if it contradicts \(\pi\) takes an additional \(\O(n)\) time. We only have to find out where the predecessors and successors of the newly introduced vertex $w$ lie in the path mapping. Thus, we need a total of \(\O(n)\) time to compute \(D[i,\sigma]\). As there are $\O(n)$ many of these signatures, we can compute their $D$-values in $\O(n^2)$ total time.

 Now let us consider the other signatures. There are only $\O(1)$ of them for a bag $X_i$. There might be $\O(n)$ valid signatures of $X_{i-1}$ that are compatible for $\sigma$. With the same argument as above, we can check for any of those $\O(n)$ signatures whether the resulting path mapping contradicts $\pi$ in $\O(n)$ time. Thus, we need $\O(n^2)$ time in total to compute the $D$-values of one such signature $\sigma$ of $X_i$. As there are $\O(1)$ of them, we need $\O(n^2)$ time to compute the $D$-values of all these signatures of $X_i$.

 Thus, processing one bag costs $\O(n^2)$ time. As there are $\O(n)$ bags, our algorithm needs $\O(n^3)$ time in total.

For the correctness, if \(\cP(\sigma)\) contains only one non-trivial path then by \cref{obs:pw:unique_partition} there is only one possible path partition.
     If it contains a prefix and a suffix, then the vertices in \(V \setminus V_i\) are added between the end vertex of the prefix and the start vertex of the suffix.
    Let \(f_\sigma, f'_\sigma\) be a two path mappings for \(\sigma\) that both do not contradict \(\pi\).
    Then \(f_\sigma\) and \(f'_\sigma\) only differ in vertices that are incomparable to all vertices in \(V\setminus V_i\).
    Thus, if there is \(\pi\)-extending Hamiltonian path that induces the path partition \(f_\sigma\), then there is also one with path partition \(f'_\sigma\) and it suffices to store only one of them.
    Finally, assume that \(\sigma\) induces a non-trivial midpart and another non-trivial path.    
   By \cref{lem:pw3_midotherunique}, there is at most one compatible signature \(\gamma\) for \(\sigma\) for which \(D[i-1,\gamma]\neq\bot\) and, thus, only one candidate for a path mapping.
\end{proof}

\begin{corollary*}
    POHPP in grids of height at most \(3\) can be solved in \(O(n^3)\) time.
\end{corollary*}

\begin{theorem*}
    Given an $n$-vertex graph $G$ of pathwidth at most~$3$, we can solve  MinPOHPP in time $\O(n^3)$.
\end{theorem*}
\begin{proof}
    The algorithm described above can be modified to store the weight that stems from a path partition, together with the partition. 
    If there are multiple possible path partitions, the one that induces the smallest weight is stored.
\end{proof}

\subsubsection{Treewidth at most 2}\label{subsec:treewidth2}
In this section we consider graphs that have \param{treewidth} at most \(2\).
We reuse some ideas from \cref{subsec:pathwidth3} together with some additional observations.
Let \(\mathcal{T}=(\mathcal{X}, E)\) with \(\mathcal{X}=\{X_1,\dots, X_k\}\) be a tree decomposition of width \(2\) for \(G\). 
Without loss of generality, we assume that every bag contains exactly \(3\) vertices and that \(\mathcal{T}\) is a binary tree. 
The nodes with degree \(2\) in \(\mathcal{T}\) are called \emph{join nodes} and the other nodes \emph{exchange nodes}.
Let \(t\) be a join node with children \(t_1, t_2\). Then \(X_t = X_{t_2} = X_{t_2}\).
We also assume that no leaf of \(\mathcal{T}\) is a child of a join node.
For the exchange nodes, we make the same assumption as for the vertex sets in \cref{subsec:pathwidth3}, namely that they differ in exactly two vertices. Let \(t_1\) be an exchange node with child \(t_2\), then the unique vertex in \(X_{t_1}\setminus X_{t_2}\) is introduced in \(X_{t_1}\) and the vertex in \(X_{t_2}\setminus X_{t_1}\) is forgotten in \(X_{t_1}\).
We can compute a nice tree decomposition of width at most \(2\) in linear time \cite{bodlaender1996linear-time, furer2016faster}. Given a nice tree decomposition of width at most \(2\), a tree decomposition of the form above, can be found in in linear time.

For a node \(t\), let \(V_t\) be the union of \(X_{t'}\) for all \(t'\) in the subtree of \(\mathcal{T}\) rooted at \(t\).
Furthermore, let \(G_t\) be the induced subgraph on \(V_t\).
Let \(\cP\) be an ordered Hamiltonian path in \(G\) and let \(\cP_t= G_t \cap \cP\). We reuse the definitions of start vertex, end vertex, interior vertex, terminal vertex and trivial paths given in section before.

\begin{lemma*}\label{lem:tw2_form_non-join}
    Let \(\cP\) be a Hamiltonian path in \(G\) and \(t\) be a node whose parent is an exchange node. Then \(\cP_t\) has one of the following forms:
    \begin{enumerate}
        \item One non-trivial path together with at most two trivial paths
        \item A prefix and a suffix
    \end{enumerate}
\end{lemma*}
\begin{proof}
We can use \cref{lem:pw3_storinternal} since \(t\) behaves as a node in a path decomposition. 
Thus, by the same arguments as in the proof of \cref{lem:pw3_form}, \(\cP_t\) cannot contains more than two non-trivial paths.
    
   As there are only three vertices in each bag, we can additionally exclude the case of a prefix or suffix together with a midpart.
    Assume that \(\cP_t\) consists of a prefix and a midpart and let \(v\) be the vertex forgotten in the parent of \(t\).
    As all vertices in \(X_t\) are non-global terminal vertices of a path in \(\cP_t\), by \cref{lem:pw3_storinternal} there is no vertex that can be forgotten in the parent of \(t\).
\end{proof}

\begin{lemma*}\label{lem:tw2_form_join}
Let \(\cP\) be a Hamiltonian Path and let \(t\) be a node whose parent is a join node. Then \(\cP_t\) has one of the forms stated in \cref{lem:tw2_form_non-join} or it contains a midpart together with either a prefix or a suffix. 
\end{lemma*}
\begin{proof}
The statement follows directly from the fact that there are at most three terminal vertices in the bag.
\end{proof}

\begin{lemma*}\label{lem:tw2_mid+other}
    Let \(\cP\) be an ordered Hamiltonian path and let \(t\) be a join node, such that \(\cP_t\) contains a midpart and another path. 
    Let \(t'\) be the parent of \(t\) and \(t_1,t_2\) be the children of \(t\).
    Then \(\cP_{t'}\) has one of the forms stated in \cref{lem:tw2_form_non-join}.
    Furthermore, \(\cP_{t_1}\) and \(\cP_{t_2}\) contain exactly one non-trivial path each.
\end{lemma*}
\begin{proof}
    Without loss of generality, we assume that $\cP_t$ contains a prefix.
    By the form of the tree decomposition, \(t_1, t_2\) are not leaves of \(\mathcal{T}\) and thus there is at least one vertex in \(V_{t_i}\setminus X_{t_i}\). 
    This implies that \(\cP_{t_i}\) contains at least one non-trivial path for each \(i\in \{1,2\}\).

Now assume that for one of the children, say \(t_1\), \(\cP_{t_1}\) contains two non-trivial paths \(P_1\) and \(P_2\) and \(\cP_{t_2}\) contains a path \(P_3\). 
Only one of these \(P_i\) can be a prefix and none of these paths can be a suffix. So at least two of the three paths are midparts. 
As these midparts cannot have the same terminal vertices, in \(X_t\) one of the midparts joins the other two paths, a contradiction to the assumption of the form of \(\cP_t\).

    Now consider the parent \(t'\) of \(t\).
If $t'$ is an exchange node, then the statement follows from \cref{lem:tw2_form_non-join}. 
Thus, we may assume that \(t'\) is a join node. 
    Let $t''$ be the sibling of \(t\). $\cP_{t''}$ contains at least one non-trivial path by a similar argument as above. If one of these paths is a suffix, then, by \cref{lem:tw2_form_join}, \(\cP_{t'}\) contains a prefix and a suffix. Otherwise, the midpart of $P_{t''}$ joins the two paths of $\cP_{t}$ to one path in $\cP_{t'}$.
 \end{proof}

As in \cref{subsec:pathwidth3}, we define the signature \(\sigma\) of a bag \(X_t\) as a mapping of the vertices to the possible types of vertices and paths.
Additionally, if the signature induces midpart and another path, \(\sigma\) stores an additional bit \(d=\{L,R\}\).
This bit is used to encode if the other path came from the left or from the right subtree.
A signature is \emph{valid} if it has one of the forms stated by \cref{lem:tw2_form_non-join} and \cref{lem:tw2_form_join}.
Let \(\sigma\) be a valid signature for a node \(t\).
If \(t\) is an exchange node, we call a signature \(\gamma\) for its child \emph{compatible} if \(\gamma\) is valid and there is a way to extend the paths induced by \(\gamma\) with the vertex introduced in \(X_t\).
Similarly, if \(t\) is a join node, we call the signatures \(\gamma_1, \gamma_2\) for its left and right children respectively compatible to \(\sigma\) if joining the paths induced by \(\gamma_1, \gamma_2\) gives the paths induced by \(\sigma\).
In particular, if \(\sigma\) induces a midpart and, e.g., a prefix and \(d=L\), then \(\gamma_1\) has to induce a prefix. In the other case (\(d=R\)), \(\gamma_2\) has to induce a prefix.

\begin{lemma*}\label{lem:tw2_unique_compatible}
Let \(\sigma\) be a signature for a join node such that \(\sigma\) induces a midpart and another path. 
Then there is only one pair \(\gamma_1, \gamma_2\) of compatible signatures for the children \(t_1,t_2\) of \(t\).
\end{lemma*}
\begin{proof}
Without loss of generality assume that the other path is a prefix.
In \(\sigma\) there are no interior vertices. Let \(v\in X_t\) be the vertex assigned to the prefix and \(t_p\) be the child indicated by \(d(\sigma)\). 
Then the compatible signature for \(t_p\) assigns \(v\) to be the end vertex of a prefix and the other vertices to trivial paths. 
Consequently, the vertices in \(X_t\setminus \{v\}\) are the start and end vertices of the midpart, as assigned by \(\sigma\).
\end{proof}

Given a signature \(\sigma\) for \(X_t\), a \emph{path mapping} \(f_\sigma\) assigns each vertex \(v\in V_t\) to one of the paths induced by \(\sigma\).
Again, a path mapping \emph{contradicts} \(\pi\) if there is no way to append or prepend the vertices in \(V\setminus V_t\) to the paths without violating the constraints given by \(\pi\).

We now have all the definitions to define the dynamic program for POHPP. 
For each bag \(X_t\) and each valid signature \(\sigma\) for \(X_t\), define the subproblem \(D[t,\sigma]\).
If there is a path mapping \(f_\sigma\) for a signature \(\sigma\) of \(X_t\), then \(D[t,\sigma]=f_\sigma\), otherwise \(D[t,\sigma] = \bot\).
The table is filled bottom up along the tree. 
For a leaf there is only one possible path partition where it has to be checked if it contradicts \(\pi\).

For an exchange node, the same algorithm as in \cref{subsec:pathwidth3} can be used. 
Now consider a join node \(t\) with signature \(\sigma\).
Iterate over all compatible signatures \(\gamma_1,\gamma_2\) for its children and consider the uniquely defined path partition induced by this triple. 
If there is one pair of compatible signatures that induce a path partition \(f_\sigma\) that does not contradict \(\pi\), set \(D[t,\sigma]\) to this value, otherwise set \(D[t,\sigma]= \bot\).

\begin{theorem}\label{thm:tw2}
    (Min)POHPP in graphs of \param{treewidth} at most \(2\) can be solved in \(\O(n^2)\) time.
\end{theorem}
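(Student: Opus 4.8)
The plan is to carry over the dynamic program and its analysis from \cref{thm:pw3}, now run bottom-up over the tree decomposition \(\mathcal{T}\), and to exploit that on width-\(2\) decompositions a signature carries only a constant amount of information. First I would bound the number of subproblems. Since every exchange node forgets exactly one vertex and \(\mathcal{T}\) is a binary tree with bags of size at most three, the decomposition has \(\O(n)\) nodes. For each bag there are only \(\O(1)\) valid signatures: a signature assigns to each of the at most three vertices a type in \(\{\texttt{start},\texttt{end},\texttt{int}\}\) and a path type in \(\{\texttt{pre},\texttt{suf},\texttt{mid}\}\), plus at most one direction bit \(d\in\{L,R\}\), and all of these are constant-size choices. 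This is the decisive difference from \cref{subsec:pathwidth3}, where the stored index \(\ell\) produced \(\O(n)\) signatures per bag; here a midpart-and-other configuration is created at a join and, by \cref{lem:tw2_mid+other}, resolved one step above it, so no index is needed.

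Next I would bound the work for a single entry \(D[t,\sigma]\). At a leaf the unique path mapping is tested against \(\pi\) in \(\O(n)\) time. At an exchange node I would reuse the procedure of \cref{subsec:pathwidth3}: iterate over the \(\O(1)\) compatible child signatures \(\gamma\), form the unique extension \(f_\sigma^\gamma\) by inserting the introduced vertex, and check in \(\O(n)\) time whether the predecessors and successors of that vertex force a \(\pi\)-violation. At a join node \(t\) with children \(t_1,t_2\) I would iterate over the compatible pairs \((\gamma_1,\gamma_2)\): when \(\sigma\) induces a midpart together with another path, \cref{lem:tw2_unique_compatible} makes this pair unique, and otherwise there are only \(\O(1)\) such pairs, so in either case the induced path mapping is determined and tested in \(\O(n)\) time. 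Hence each node is processed in \(\O(n)\) time, giving \(\O(n^2)\) overall.

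For correctness I would show, as in the proof of \cref{thm:pw3}, that keeping a single path mapping per pair \((t,\sigma)\) discards no feasible solution. If \(\cP(\sigma)\) is one non-trivial path (or \(t\) is a leaf) the mapping is unique, cf.\ \cref{obs:pw:unique_partition}. If \(\cP(\sigma)\) is a prefix and a suffix, then any two \(\pi\)-respecting mappings differ only in vertices incomparable to every vertex of \(V\setminus V_t\), so one may be exchanged for the other and storing one suffices. If \(\cP(\sigma)\) is a midpart together with a prefix or suffix, \cref{lem:tw2_mid+other} fixes the shape of the two subtree configurations and \cref{lem:tw2_unique_compatible} makes the compatible pair unique, leaving a single candidate; together with the validity restriction from \cref{lem:tw2_form_non-join,lem:tw2_form_join} this shows the recurrence is exhaustive, and the instance is a yes-instance iff \(D[r,\sigma]\neq\bot\) for some valid signature \(\sigma\) of the root \(r\). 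For MinPOHPP I would store with each mapping the total weight of its edges and, whenever several admissible mappings share a pair \((t,\sigma)\), keep the one of minimum weight, exactly as in the pathwidth-\(3\) case.

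The step I expect to demand the most care is the join node. I must verify that merging the two stored child mappings yields a well-defined path mapping — that no cycle is closed, that neither the global prefix nor the global suffix is duplicated, and that the bit \(d\) correctly records from which subtree the prefix or suffix originates — and that the \(\pi\)-contradiction test stays a purely local \(\O(n)\) check on the merged mapping rather than something that must re-examine the interaction of both subtrees. Confirming these two points is what keeps the compatible pairs bounded by \cref{lem:tw2_unique_compatible} and the total running time quadratic.
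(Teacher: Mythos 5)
Your proposal is correct and follows essentially the same route as the paper: it runs the pathwidth-3 dynamic program bottom-up on the tree decomposition, uses \cref{lem:tw2_form_non-join,lem:tw2_form_join} for validity, \cref{lem:tw2_mid+other} and \cref{lem:tw2_unique_compatible} to get constantly many signatures per bag and a unique compatible child pair in the midpart-plus-other case, and the same single-stored-mapping correctness argument (via \cref{obs:pw:unique_partition} and the incomparability exchange), yielding the identical $\O(n)$ work per node and $\O(n^2)$ total with the minimum-weight extension for MinPOHPP.
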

\begin{proof}
There are \(\O(n)\) bags as every vertex is forgotten at most once. 
Furthermore, for each bag, there are \(\O(1)\) possible valid signatures. 
Consequentially, there can be only \(\O(1)\) compatible signatures to check for each node in linear total time for each entry \(D[i,\sigma]\).
As in the proof of \cref{thm:pw3}, it is enough to store one path partition if the signature induces only one (non-trivial) path or a prefix and a suffix.

Furthermore, by \cref{lem:tw2_mid+other}, if there is a signature \(\sigma\) for a node \(t\) that induces a non-trivial midpart and a non-trivial prefix, 
\cref{lem:tw2_form_non-join} and \cref{lem:tw2_form_join} imply that this node is a join node and by  \cref{lem:tw2_unique_compatible} there is only one pair \(\gamma_1,\gamma_2\) of compatible signatures for \(\sigma\).
Furthermore, by \cref{lem:tw2_mid+other}, the parent of \(t\) induces either one path or a prefix and a suffix if \(\sigma\) is compatible with a signature.
Thus, similar as stated in the proof of \cref{thm:pw3}, the information about which side contained the start vertex is not needed further above in the tree and can safely be ignored.

By storing the path mapping with minimum weight if there is more than one candidate, the algorithm above extends to  MinPOHPP.
\end{proof}

\subparagraph*{A note on treewidth 3} 
When seeing \cref{thm:pw3} and \cref{thm:tw2}, one might hope to combine and extend the algorithms to give an algorithm for \param{treewidth} at most \(3\). 
This however seems to pose a bigger challenge as \cref{lem:tw2_unique_compatible} is not true anymore. 
Thus, the signature for nodes with a midpart and another path have to maintain more than local information about the origin of each path.
In \cref{thm:pw3}, this global information was efficiently encoded by relying on the linear structure of the path decomposition.
This seems not to be that easy for graphs of \param{treewidth} \(3\), leaving an interesting avenue for further research.

\subsubsection{Treedepth}

\begin{observation*}\label{obs:treedepth}
    If a traceable graph has \param{treedepth} $k$, then $G$ has less than $2^k$ vertices.
\end{observation*}

\begin{proof}
   \param{Treedepth} is monotone {\cite[Lemma~6.2]{nesetril2012sparsity}}, i.e, considering subgraphs does not increase the \param{treedepth}. Furthermore, the \param{treedepth} of a path with $n$ vertices is $\lceil \log_2(n+1) \rceil$  {\cite[Equation~6.2]{nesetril2012sparsity}}. Hence, if a graph has \param{treedepth} $k$, then it does not contain a path with $2^k$ vertices. 
\end{proof}

This observation implies that for traceable graphs with bounded \param{treedepth} all other graph width parameters are bounded (see \cref{fig:parameters}). Since we can solve  MinPOHPP in $\O(2^n)$ time on graphs with $n$ vertices~\cite{beisegel2024computing}, the following running time bound holds.

\begin{theorem}\label{corol:treedepth-fpt}
     MinPOHPP can be solved in $\O(2^{2^k})$ time on graphs of \param{treedepth}~$k$. 
\end{theorem}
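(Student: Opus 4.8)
The plan is to combine the structural bound of \cref{obs:treedepth} with the brute-force algorithm that solves MinPOHPP in \(\O(2^n)\) time~\cite{beisegel2024computing}. The key conceptual point is that we never have to run the expensive algorithm on a large graph: by taking the contrapositive of \cref{obs:treedepth}, any graph of \param{treedepth} at most~\(k\) on at least \(2^k\) vertices fails to be traceable and hence admits no \(\pi\)-extending Hamiltonian path at all. This lets us cap the size of any instance on which the exponential-time procedure is actually invoked.

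Concretely, given an instance \((G,\pi)\) where \(G\) has \param{treedepth} at most~\(k\) and \(n\) vertices, I would first compare \(n\) with \(2^k\) (equivalently, test whether \(k \le \log_2 n\); note that no \param{treedepth} decomposition is needed for this step, only the bound~\(k\)). If \(n \ge 2^k\), then \cref{obs:treedepth} guarantees that \(G\) has no Hamiltonian path, so we immediately report that no \(\pi\)-extending Hamiltonian path exists, returning \(\infty\) for the minimization variant. Otherwise \(n < 2^k\), and we run the \(\O(2^n)\) algorithm for MinPOHPP. Since \(n < 2^k\) in this branch, its running time is bounded by \(\O(2^{2^k})\). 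Both branches are dominated by the second, giving the claimed total running time of \(\O(2^{2^k})\).

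Since the proof is essentially immediate from the two assumed ingredients, the only point requiring care, rather than a genuine obstacle, is the correct treatment of the minimization objective on non-traceable inputs: one must make sure that the ``large \(n\)'' case is handled by declaring infeasibility, relying on the fact that the absence of any Hamiltonian path trivially implies the absence of a \(\pi\)-extending one. Everything else is a direct substitution of the size bound \(n < 2^k\) into the exponential running time.
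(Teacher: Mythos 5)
Your proof is correct and follows essentially the same route as the paper: \cref{obs:treedepth} bounds the vertex count of any traceable graph of \param{treedepth}~$k$ by $2^k$, and then the $\O(2^n)$ algorithm of~\cite{beisegel2024computing} gives the claimed bound. Your explicit case split (rejecting instances with $n \geq 2^k$ as infeasible) merely spells out a detail the paper leaves implicit, and is a sensible bit of extra care rather than a deviation.
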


\section{Distance Parameters to Sparse Graph Classes}\label{sec:d2p}
As POHPP is para-\NP-complete for \param{treewidth}, we next consider parameters that form upper bounds of \param{treewidth}.

\subsection{Hardness}

We start by showing that even for very restricted \param{distance to \G} parameters, there is no hope for \FPT{} algorithms.

\begin{theorem}\label{thm:w1-d2p}
     POHPP is \W-hard when parameterized by \param{distance to path}.
\end{theorem}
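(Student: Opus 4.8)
The plan is to give a parameterized reduction from the \textsc{Multicolored Clique Problem} (MCP), which is \W-hard as stated in the preliminaries, even when the $k$ color classes $V_1,\dots,V_k$ all have the same size $N$. Given such an instance I would build a POHPP instance $(H,\pi)$, computable in time polynomial in $N$ and $k$, whose \param{distance to path} is bounded by a function of $k$ alone (I expect $\O(k^2)$). The whole graph is organized around one long \emph{backbone} path $P$: the deletion set $W$ witnessing the small \param{distance to path} consists only of $\O(k^2)$ auxiliary \emph{portal} vertices, and everything else is arranged so that $H-W$ is exactly the single path $P$. Since $|W|=f(k)$ is independent of $N$, a bounded parameter on the MCP side yields a bounded parameter on the POHPP side, as a parameterized reduction requires.

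The structural core of the argument is an observation about such ``path-plus-few-vertices'' graphs. Take any Hamiltonian path $\cP$ of $H$ and delete from its vertex sequence the $|W|$ portal vertices; this splits $\cP$ into at most $|W|+1$ runs of consecutive non-portal vertices. Each run is a path of $H$ using only vertices of $P$, hence uses only edges of $H-W=P$, and is therefore a \emph{contiguous, monotonically traversed} segment of $P$. Consequently the portals are precisely the positions at which $\cP$ is allowed to ``jump'', and any Hamiltonian path of $H$ is nothing but a way of cutting $P$ into segments and regluing them in a permuted order through the portals. This rigidity is exactly what lets me force gadget behaviour in the backward direction.

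On top of this I would place gadgets glued consecutively along $P$. A \emph{selection gadget} for each color $i$ (occupying a contiguous region of $P$ holding candidate cells for $v_{i,1},\dots,v_{i,N}$) forces $\cP$ to cut that region at exactly one index, using the color-$i$ portal at the cut; the cut position encodes the chosen index $a_i$, and $\pi$-constraints pin the cut down as unique and clean so that exactly one $v_{i,a_i}\in V_i$ is committed. For each color pair $\{i,j\}$ I would add an \emph{edge-verification} portal whose incident edges in $H$ encode the edges of $G$ between $V_i$ and $V_j$, together with $\pi$-constraints making $\cP$ completable through that portal only when the committed vertices $v_{i,a_i}$ and $v_{j,a_j}$ are adjacent in $G$. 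Finally $\pi$ fixes a global traversal order via a designated start vertex and left-to-right ordering constraints, and propagates the committed indices so the pairwise checks see mutually consistent choices. A $\pi$-extending Hamiltonian path then exists iff all selections and all pairwise edge checks can be satisfied simultaneously, i.e.\ iff $G$ has a multicolored clique.

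For correctness I would argue both directions: from a multicolored clique $\{v_{1,a_1},\dots,v_{k,a_k}\}$ I route the backbone so that each selection gadget cuts at $a_i$ and each verification portal is used along the certifying edge $v_{i,a_i}v_{j,a_j}$, then check the resulting order is a linear extension of $\pi$; conversely the segment-decomposition observation forces any $\pi$-extending Hamiltonian path into the intended gadget behaviour, so the committed vertices form a clique. The main obstacle I anticipate is the edge-verification step: I must certify all $\binom{k}{2}$ adjacencies with only $\O(k^2)$ portals while keeping $H-W$ a single path, which forces the committed index of color $i$ to be ``communicated'' across the linear backbone to every portal $\{i,j\}$. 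Designing $\pi$ so that selection is unambiguous and this information is transmitted without admitting cheating paths (rerouting through a wrong portal, or cutting a color region at two indices) is the delicate part, and it is where the partial-order constraints must be engineered most carefully.
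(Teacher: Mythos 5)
Your high-level plan coincides with the paper's: a parameterized reduction from MCP to an instance whose graph is one long backbone path plus $\O(k^2)$ portal vertices, with selection gadgets per color class and verification gadgets per color pair, and with the partial order doing the work of forcing and checking the selection. The structural observation you state (any Hamiltonian path decomposes into monotone backbone segments reglued through portals) is correct and is implicitly what the paper exploits. However, the proposal stops exactly where the proof begins: you never specify the partial order $\pi$ or the gadget mechanics, and you yourself flag the two crucial issues -- unambiguous selection and ``communicating'' the committed index to every verification portal without admitting cheating paths -- as unresolved ``delicate parts''. These are not routine engineering details; they are the entire content of the theorem, so as it stands there is a genuine gap.

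Concretely, three mechanisms are missing. First, you need a reason why a $\pi$-extending path cannot simply walk along the backbone through a selection region and commit several (or zero) indices; the paper achieves this by interleaving auxiliary ``hat'' vertices $\hat{x}^i_p$ between consecutive candidate vertices on the backbone and adding a checkpoint vertex $z$ with $z \prec \hat{y}$ for every hat vertex, so the prefix of the path before $z$ can take at most one vertex from each selection region and must hop between regions via the portals $s^i$. Second, your idea of encoding the edges of $G$ as \emph{edges incident to a verification portal} is undeveloped and is where your feared ``transmission'' problem arises; the paper avoids it entirely by encoding each edge $v^i_pv^j_r$ of $G$ as a \emph{backbone vertex} $w^{i,j}_{p,r}$ and imposing $x^i_p \prec w^{i,j}_{p,r}$ and $x^j_r \prec w^{i,j}_{p,r}$, so that the prefix can cross the region $W^{i,j}$ only through a vertex whose both endpoints were already selected -- no separate communication mechanism is needed beyond transitivity of $\pi$. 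Third, the forward direction (clique to Hamiltonian path) requires that the segments left over after the prefix can actually be reglued into one path; the paper provides dedicated jump portals $\hat{t}^i$ and $\hat{d}^{i,j}$ adjacent to \emph{all} vertices of their region precisely so the suffix can traverse each region while skipping the single hole created by the prefix. Without specifying these (or equivalent) devices and verifying both directions against them, the reduction is a plausible blueprint rather than a proof.
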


\begin{proof}

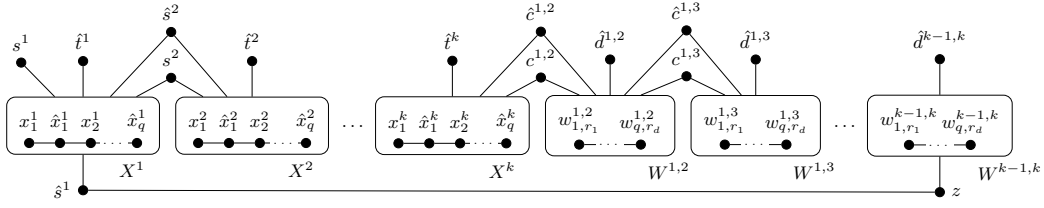
\begin{figure}
    \centering
    \resizebox{\textwidth}{!}{
    \begin{tikzpicture}[xscale=1, clique/.style={draw, rounded corners}]
\footnotesize

        \node[clique, label=-45:$X^1$] (X1) at (0,0) {
        \begin{tikzpicture}
        \node[vertex, label=90:$x^1_1$] (x11) at (0,0) {};
        \node[vertex, label=90:$\hat{x}^1_1$] (hx11) at (0.5,0) {};
        \node[vertex, label=90:$x^1_2$] (x12) at (1,0) {};
        \node[vertex, label=90:$\hat{x}^1_q$] (x1q) at (1.75,0) {};
        \draw (x11) -- (hx11) -- (x12) -- (x1q);
        \node[fill=white, rectangle, rounded corners=false, inner sep=1pt] at (1.375,0) {\tiny $\dots$};
        \end{tikzpicture}
        };

        \node[clique, label=-45:$X^2$] (X2) [right=0.25cm of X1] {
        \begin{tikzpicture}
        \node[vertex, label=90:$x^2_1$] (x11) at (0,0) {};
        \node[vertex, label=90:$\hat{x}^2_1$] (hx11) at (0.5,0) {};
        \node[vertex, label=90:$x^2_2$] (x12) at (1,0) {};
        \node[vertex, label=90:$\hat{x}^2_q$] (x1q) at (1.75,0) {};
        \draw (x11) -- (hx11) -- (x12) -- (x1q);
        \node[fill=white, rectangle, rounded corners=false, inner sep=1pt] at (1.25,0) {\tiny $\dots$};
        \end{tikzpicture}
        };

        \node [right=0.1cm of X2] {$\dots$};

        \node[clique, label=-45:$X^k$] (Xk) [right=0.75cm of X2] {
        \begin{tikzpicture}
        \node[vertex, label=90:$x^k_1$] (x11) at (0,0) {};
        \node[vertex, label=90:$\hat{x}^k_1$] (hx11) at (0.5,0) {};
        \node[vertex, label=90:$x^k_2$] (x12) at (1,0) {};
        \node[vertex, label=90:$\hat{x}^k_q$] (x1q) at (1.75,0) {};
        \draw (x11) -- (hx11) -- (x12) -- (x1q);
        \node[fill=white, rectangle, rounded corners=false, inner sep=1pt] at (1.25,0) {\tiny $\dots$};
        \end{tikzpicture}
        };

        \node[vertex, label=90:$s^1$] (s1) [above left=0.5cm and -0.3cm of X1] {};
        \node[vertex, label=90:$\hat{t}^1$] (t1) [above=0.5cm of X1] {};
        \node[vertex, label=180:$\hat{s}^1$] (hs1) [below=0.5cm of X1] {};
        \node[vertex, label=90:$s^2$] (s2) [above right=0.25cm and 0.125cm of X1] {};
        \node[vertex, label=90:$\hat{s}^2$] (hs2) [above right=1cm and 0.125cm of X1] {};
        \node[vertex, label=90:$\hat{t}^2$] (t2) [above=0.5cm of X2] {};
        \node[vertex, label=90:$\hat{t}^k$] (tk) [above=0.5cm of Xk] {};
    
        \draw (t1) -- (X1);
        \draw (t2) -- (X2);
        \draw (tk) -- (Xk);
        \draw (hs1) -- (X1);
        \draw (s1) -- (X1);
        \draw (s2) -- (X1);
        \draw (hs2) -- (X1);
        \draw (s2) -- (X2);
        \draw (hs2) -- (X2);

        \node[clique, label=-45:$W^{1,2}$] (W12) [right=0.25cm of Xk] {
        \begin{tikzpicture}
        \node[vertex, label=90:$w^{1,2}_{1,r_1}$] (w11) at (0,0) {};
        \node[vertex, label=90:$w^{1,2}_{q,r_d}$] (w1q) at (1,0) {};
        \draw (w11) -- (w1q);
        \node[fill=white, rectangle, rounded corners=false, inner sep=1pt] at (0.35,0) {\tiny $\dots$};
        \end{tikzpicture}
        };

        \node[clique, label=-45:$W^{1,3}$] (W13) [right=0.25cm of W12] {
        \begin{tikzpicture}
        \node[vertex, label=90:$w^{1,3}_{1,r_1}$] (w11) at (0,0) {};
        \node[vertex, label=90:$w^{1,3}_{q,r_d}$] (w1q) at (1,0) {};
        \draw (w11) -- (w1q);
        \node[fill=white, rectangle, rounded corners=false, inner sep=1pt] at (0.35,0) {\tiny $\dots$};
        \end{tikzpicture}
        };

        \node [right=0.1cm of W13] {$\dots$};

        \node[clique, label=-45:$W^{k-1,k}$] (Wk-1k) [right=0.75cm of W13] {
        \begin{tikzpicture}
        \node[vertex, label=90:$w^{k-1,k}_{1,r_1}$] (w11) at (0,0) {};
        \node[vertex, label=90:$w^{k-1,k}_{q,r_d}$] (w1q) at (1,0) {};
        \draw (w11) -- (w1q);
        \node[fill=white, rectangle, rounded corners=false, inner sep=1pt] at (0.35,0) {\tiny $\dots$};
        \end{tikzpicture}
        };

        \node[vertex, label=90:$\hat{d}^{1,2}$] (d12) [above=0.5cm of W12] {};
        \node[vertex, label=90:$\hat{d}^{1,3}$] (d13) [above=0.5cm of W13] {};
        \node[vertex, label=90:$\hat{d}^{k-1,k}$] (dk-1k) [above=0.5cm of Wk-1k] {};
        \node[vertex, label=90:$c^{1,2}$] (c12) [above right=0.25cm and 0.125cm of Xk] {};
        \node[vertex, label=90:$\hat{c}^{1,2}$] (hc12) [above right=1cm and 0.125cm of Xk] {};
        \node[vertex, label=90:$c^{1,3}$] (c13) [above right=0.25cm and 0.125cm of W12] {};
        \node[vertex, label=90:$\hat{c}^{1,3}$] (hc13) [above right=1cm and 0.125cm of W12] {};
        \node[vertex, label=0:$z$] (z) [below=0.5cm of Wk-1k] {};
        
        \draw (d12) -- (W12);
        \draw (d13) -- (W13);
        \draw (dk-1k) -- (Wk-1k);
        \draw (c12) -- (Xk);
        \draw (hc12) -- (Xk);
        \draw (c12) -- (W12.135);
        \draw (hc12) -- (W12.115);
        \draw (c13) -- (W12.45);
        \draw (hc13) -- (W12.65);
        \draw (c13) -- (W13.135);
        \draw (hc13) -- (W13.115);
        \draw (z) -- (Wk-1k);
        \draw (z) -- (hs1);
    \end{tikzpicture}
    }
    \caption{Construction for \cref{thm:w1-d2p,thm:w1-lfm}. If a vertex is adjacent to a box, then the vertex is adjacent to all vertices in that box. In the proof of \cref{thm:w1-d2p}, the ends of consecutive subpaths of the $X^i$ and $W^{i,j}$ are adjacent, in the proof of \cref{thm:w1-lfm} they are not adjacent.}
    \label{fig:w1-d2p}
\end{figure}

    We reduce the \textsc{Multicolored Clique Problem} to  POHPP parameterized by \param{distance to path}. Let $G$ be an instance of the MCP. The vertex set of $G$ consists of $k$ color classes $V_1 \cup \dots \cup V_k$ and every color class consists of vertices $v^i_1, \dots, v^i_q$.

    We construct a graph $G'$ as follows (see \cref{fig:w1-d2p}). The vertex set is partitioned in the following gadgets:
    \begin{description}
    \item[Selection Gadget] For every color class $i \in [k]$, we have a selection gadget consisting of a vertex $\hat{t}^i$ and a path $X_i$ that contains for every $v^i_p \in V_i$ a vertex $x^i_p$. These vertices are ordered by their index $p$ and after every vertex $x^i_p$ (and, thus, before $x^i_{p+1}$) there is a vertex $\hat{x}^i_p$ in that path. The vertex $\hat{t}^i$ is adjacent to all vertices in the path $X^i$.
    \item[Verification Gadget] For each pair $i,j \in [k]$ with $i < j$, we have a verification gadget consisting of a vertex $\hat{d}^{i,j}$ and a path $W^{i,j}$ containing for every edge $v^i_pv^j_r \in E(G)$ a vertex $w^{i,j}_{p,r}$. We do not fix an ordering of these vertices in that path. The vertex $\hat{d}^{i,j}$ is adjacent to all vertices in the path $W^{i,j}$.
    \end{description}

    Next we describe how these gadgets are connected to each other. We order the subpaths described in the selection and verification gadgets as follows: \[X^1, X^2, \dots, X^k, W^{1,2}, W^{1,3}, \dots, W^{1,k}, W^{2,3}, W^{2,4}, \dots, W^{k-1,k}.\]
    The last vertex of one of the paths is adjacent to the first vertex of the succeeding path.
    Therefore, these paths form one large path $\Psi$ in $G'$.
    Additionally, we have the following vertices in $G'$.
    First, we have $s^1$ and $\hat{s}^1$ that are adjacent to all vertices in $X^1$. For every $i \in [k]$ with $i > 1$, we have a vertex $s^i$ and a vertex $\hat{s}^i$ that both are adjacent to all vertices in $X^i$ and $X^{i-1}$. 
    For all $i,j \in [k]$ with $i < j$, there are vertices $c^{i,j}$ and $\hat{c}^{i,j}$ that are adjacent to all vertices in $W^{i,j}$ and to all vertices in the predecessor path of $W^{i,j}$ in the ordering described above.
    Finally, we have a vertex $z$ that is adjacent to all vertices in $W^{k-1,k}$ and to $\hat{s}^1$. Observe that the graph $G' - \Psi$ contains $3k + 3\binom{k}{2} + 1$ vertices. Therefore, the \param{distance to path} of $G'$ is $\O(k^2)$.

    We define $Y$ as the set containing all the vertices defined above that have a hat in their name. The partial order $\pi$ is the reflexive and transitive closure of the following constraints:
    \begin{enumerate}[(P1)]
        \item $s^1 \prec v$ for all $v \in V(G') \setminus \{s^1\}$,\label{p0}
        \smallskip
        \item $z \prec y$ for all $y \in Y$,\label{p6}
        \smallskip
        \item $x^i_p \prec w^{i,j}_{p,r}$ for all $i,j \in [k]$ with $i < j$ and all $p,r \in [q]$ with $v^i_pv^j_r \in E(G)$\label{p4}
        \smallskip
        \item $x^j_r \prec w^{i,j}_{p,r}$ for all $i,j \in [k]$ with $i < j$ and all $p,r \in [q]$ with $v^i_pv^j_r \in E(G)$\label{p5}
    \end{enumerate}

    \begin{claim*}\label{claim:w1-d2p1}
        If $G$ has a multicolored clique $\{v^1_{p_1}, \dots, v^k_{p_k}\}$, then $G'$ has a $\pi$-extending Hamiltonian path $\cP$.
    \end{claim*}

    \begin{claimproof}
        We start in $s^1$. Now we visit $x^1_{p_1}$ and go to $s^2$. Then we go to $x^2_{p_2}$. We repeat this process until we reach $x^k_{p_k}$. Next we visit $c^{1,2}$ and then $w^{1,2}_{p_1,p_2}$. Note that this is possible since both $x^1_{p_1}$ and $x^2_{p_2}$ are already visited and, thus, the constraints given in \pef{p4} and \pef{p5} are fulfilled. We then go to $c^{1,3}$ and visit $w^{1,3}_{p_1, p_3}$. We repeat this procedure until we reach $w^{k-1,k}_{p_{k-1}, p_k}$. Then we go to $z$ and then to $\hat{s}_1$. Now we traverse the path $X^1$. When we reach $\hat{x}^1_{p_1 - 1}$, we cannot visit the next vertex on the path as this vertex has already been visited. Therefore, we use $\hat{t}^1$ to jump over that vertex in $X^1$. When we have traversed $X^1$ completely, we visit $\hat{s}^2$ and then traverse $X^2$ in the same way as $X^1$. We repeat this procedure for all $X^i$ and also for all $W^{i,j}$, where we use $\hat{d}_{i,j}$ to jump over visited vertices.
    \end{claimproof}

    \begin{claim*}\label{claim:w1-d2p2}
        If there is a $\pi$-extending Hamiltonian path $\cP$ in $G'$, then there is a multicolored clique $\{v^1_{p_1}, \dots v^k_{p_k}\}$ in $G$.
    \end{claim*}

    \begin{claimproof}
        The path $\cP=(a_1,a_2,\dots a_n)$ has to start in $a_1=s^1$, due to \pef{p0}. Then \(a_2\) has to be some vertex $x^1_{p_1}$ for some $p_1 \in [q]$ since $s^1$ has no other neighbors. The next vertex (\(a_3\)) cannot be a neighbor of $x^1_{p_1}$ on the path $\Psi$, due to \pef{p6}. Hence, \(a_3\) has to be $s^2$. Its successor \(a_4\) either could be a vertex $x^2_{p_2}$ or a vertex $x^1_{p'}$. In the latter case, all the unvisited neighbors of $x^1_{p'}$ are not allowed to be taken next as they are forced to be to the right of $z$ by \pef{p6}. Therefore, $a_4 = x^2_{p_2}$ for some $p_2 \in [q]$. This argument can be repeated for every $i \in [k]$. Thus, the subpath of $\cP$ between $s^1$ and $c^{1,2}$ contains for any $i \in [k]$ the vertex $s^i$ and exactly one vertex $x^i_{p_i}$ for some $p_i \in [q]$. We claim that the vertices $C = \{v^1_{p_1}, \dots, v^k_{p_k}\}$ form a multicolored clique in $G$. 

        Using the same argument as above, the path $\cP$ has to go from $c^{1,2}$ to some vertex $w^{1,2}_{p,r}$. Due to \pef{p5}, $p$ must be equal to $p_1$. By \pef{p4}, $r$ has to be $p_2$. Therefore, $v^1_{p_1}$ and $v^2_{p_2}$ are adjacent. This observation also implies that the next vertex cannot be on $W^{1,2}$ but has to be $c^{1,3}$. Repeating this argumentation, we can show that $C$ in fact forms a multicolored clique in $G$.
    \end{claimproof}

    Obviously, the graph $G'$ can be constructed in $\O((kq)^2)$ time. Combining this with \cref{claim:w1-d2p1} and \cref{claim:w1-d2p2}, shows that $(G', \pi)$ is a valid \FPT{} reduction. This finalizes the proof of the theorem.
\end{proof}

If we remove the edges between succeeding subpaths of $\Psi$ in $G'$, the graph induced by the vertices of $\Psi$ is a linear forest whose components are modules in $G'$. We can observe that these edges are not used in a Hamiltonian path in the proof of \cref{thm:w1-d2p}.  Therefore, the construction of \cref{thm:w1-d2p} can also be used to show the following.

\begin{theorem*}\label{thm:w1-lfm}
     POHPP is \W-hard when parameterized by \param{distance to linear forest modules}.
\end{theorem*}

\subsection{Algorithms}

We start this section with an observation that follows from the fact that deleting a set of $k$~vertices from a traceable graphs produces at most $k+1$ components.

\begin{observation*}
    If a traceable graph $G$ has \param{vertex cover number}~$k$, then $G$ has at most $2k+1$ vertices.
\end{observation*}

Similar as \cref{obs:treedepth}, this fact implies that traceable graphs with bounded \param{vertex cover number} have bounded values for all other graph width parameters (see \cref{fig:parameters}). Furthermore, the observation directly gives a linear kernel and an \FPT{} algorithm.

\begin{theorem}\label{cor:fpt-vc}
    MinPOHPP parameterized by the \param{vertex cover number}~$k$ has a kernel with at most $2k+1$ vertices that can be computed in $k^{\O(1)}$ time. Furthermore,  MinPOHPP can be solved in $\O(4^{k})$ time on graphs of \param{vertex cover number}~$k$.
\end{theorem}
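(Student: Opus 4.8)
The plan is to derive both statements from the observation immediately preceding the theorem, which guarantees that a traceable graph of \param{vertex cover number}~$k$ has at most $2k+1$ vertices. Equivalently, by contraposition, any instance $(G,\pi)$ whose underlying graph has more than $2k+1$ vertices but \param{vertex cover number} at most~$k$ admits no Hamiltonian path at all, and hence no $\pi$-extending Hamiltonian path.

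For the kernel, the single reduction rule is to compare $n = |V(G)|$ with $2k+1$. If $n \le 2k+1$, then $(G,\pi)$ already has at most $2k+1$ vertices and I would return it unchanged. If $n > 2k+1$, then by the observation the instance is infeasible, so I would output a fixed constant-size no-instance of MinPOHPP, e.g.\ a two-vertex graph with no edge (which is disconnected and thus not traceable); the degenerate case $k=0$ can be decided directly. Note that this rule never constructs an explicit vertex cover: it uses only the value~$k$ and the vertex count. To obtain a running time independent of~$n$, I would abort the scan of the input as soon as $2k+2$ vertices have been read, which settles the comparison in $\O(k)$ time; in the retained case the instance has size $\O(k^2)$ (at most $\binom{2k+1}{2}$ edges and $\O(k^2)$ order pairs) and is copied in $k^{\O(1)}$ time.

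For the \FPT{} algorithm, I would first apply the kernelization and then run the $\O(2^n)$-time exact algorithm for MinPOHPP of Beisegel et al.~\cite{beisegel2024computing} on the kernel. Since the kernel has at most $2k+1$ vertices, this takes $\O(2^{2k+1}) = \O(4^k)$ time, as claimed.

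All of this is routine; the only point requiring a little care is the claim that the kernel is computable in time $k^{\O(1)}$ rather than polynomial in the full input size. The relevant idea is that once more than $2k+1$ vertices have been seen the answer is already determined, so the reduction may terminate early and its cost depends on~$k$ alone.
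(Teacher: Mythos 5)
Your proposal is correct and follows essentially the same route as the paper, which presents the theorem as a direct consequence of the preceding observation: instances with more than $2k+1$ vertices are no-instances (not traceable), so the instance itself is the kernel otherwise, and running the $\O(2^n)$ exact algorithm of Beisegel et al.~\cite{beisegel2024computing} on the kernel gives $\O(2^{2k+1}) = \O(4^k)$ time. Your additional care about the early-abort scan to achieve $k^{\O(1)}$ rather than $n^{\O(1)}$ kernelization time is a detail the paper leaves implicit, but it matches the intended argument.
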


Next, we extend the \FPT{} result to the \param{feedback edge set number}, also called \param{cycle rank}. We use the following result.

\begin{lemma*}[Demaine et al.~\cite{demaine2019reconfiguring}]\label{lemma:fes-paths}
    An $n$-vertex graph with \param{feedback edge set number}~k has at most $2^k \cdot \binom{n}{2}$ different paths.
\end{lemma*}

Enumerating all possible paths leads to the following running time for  MinPOHPP parameterized by \param{feedback edge set number}.

\begin{theorem}\label{thm:fes}
     MinPOHPP can be solved in $\O(2^k \cdot n^{\O(1)})$ time on an $n$-vertex graph with \param{feedback edge set number}~$k$.
\end{theorem}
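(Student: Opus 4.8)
The plan is to prove Theorem~\ref{thm:fes} by reducing MinPOHPP to an enumeration over all (directed) paths in the graph, using \cref{lemma:fes-paths} to bound the number of such paths. The key observation is that an ordered Hamiltonian path is, in particular, a path in the usual graph-theoretic sense, so every $\pi$-extending Hamiltonian path of minimum weight is among the paths counted by \cref{lemma:fes-paths}. Since that lemma guarantees at most $2^k \cdot \binom{n}{2}$ distinct paths in an $n$-vertex graph with \param{feedback edge set number}~$k$, it suffices to enumerate all of them and test each one.

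First I would invoke \cref{lemma:fes-paths} to enumerate every path of $G$. A Hamiltonian path must contain all $n$ vertices, so the enumeration step immediately discards every path of length less than $n-1$; the candidates are exactly the paths on all $n$ vertices. For each such path, there are two possible orderings (the two endpoints can each serve as the start), so we obtain $\O(2^k \cdot n^2)$ ordered candidate Hamiltonian paths. Enumerating all paths can be done within the stated time bound by a standard backtracking/branching procedure that follows tree edges greedily and branches only on the at most $k$ feedback edges, so the enumeration itself costs $\O(2^k \cdot n^{\O(1)})$ time.

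Next, for each ordered candidate $(v_1,\dots,v_n)$, I would check in $\O(n^2)$ time whether it is $\pi$-extending: for every pair $(v_i,v_j) \in \pi$ one verifies that $i \le j$. Equivalently, it is enough to check for each consecutive pair and rely on transitivity, but the naive quadratic check suffices and stays within the budget. Among all candidates that pass this test, I would keep track of the one of minimum total edge weight, which solves MinPOHPP. Maintaining the running minimum adds only a constant factor per candidate.

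The routine nature of the verification and the minimum-tracking means the main content is really the counting bound already supplied by \cref{lemma:fes-paths}; thus the only thing to argue carefully is that the enumeration of all paths can itself be carried out within $\O(2^k \cdot n^{\O(1)})$ time rather than merely that the \emph{number} of paths is bounded. The mild obstacle is therefore bookkeeping: one must be sure that producing the paths (not just counting them) stays within the time bound, which follows from branching on the $\O(k)$ feedback edges while traversing the spanning forest deterministically. Once that is established, multiplying the $\O(2^k \cdot n^2)$ candidates by the $\O(n^2)$ per-candidate check yields the claimed $\O(2^k \cdot n^{\O(1)})$ total running time.
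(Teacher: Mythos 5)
Your proposal is correct and takes essentially the same route as the paper, which likewise obtains the theorem directly from the Demaine et al.\ lemma by enumerating all paths and testing each candidate (the paper gives no detail beyond this one-sentence argument, so your write-up merely supplies the bookkeeping of the enumeration, the $\pi$-extension check, and the minimum tracking). One small caution: the cleanest justification that the enumeration itself stays within the time bound is that standard backtracking over simple-path prefixes uses one recursion node per distinct path, so the lemma's counting bound directly bounds the total work; your description of ``following tree edges greedily and branching only on feedback edges'' is not quite an accurate procedure, though the conclusion it is meant to support is true.
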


We can adapt this algorithm to present an \XP{} algorithm for \param{feedback vertex set number}.

\begin{theorem}
     MinPOHPP can be solved in $n^{\O(k)}$ time on an $n$-vertex graph with \param{feedback vertex set number}~$k$.
\end{theorem}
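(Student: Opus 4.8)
The plan is to adapt the path-enumeration idea behind \cref{thm:fes}. First I would fix a \param{feedback vertex set} $S$ with $|S| = k$; since we only aim for an \XP{} bound, this can be obtained by trying all $\binom{n}{k} = n^{\O(k)}$ vertex subsets and keeping one whose removal leaves a forest (an \FPT{} subroutine would also work). Write $F = G - S$ for the resulting forest. The key structural observation is that in any ordered Hamiltonian path $\cP = (v_1, \dots, v_n)$, deleting the $k$ occurrences of the vertices of $S$ splits $\cP$ into at most $k+1$ maximal $S$-free subpaths. Because $F$ is an induced subgraph, every edge of $G$ between two vertices of $V(G) \setminus S$ is an edge of $F$, so each such subpath is in fact a path of $F$, lying within a single tree of $F$.

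Next I would exploit that $F$ is a forest to enumerate these subpaths. Since the \param{feedback edge set number} of a forest is $0$, \cref{lemma:fes-paths} tells us that $F$ has at most $2^0 \cdot \binom{n}{2} = \binom{n}{2}$ distinct paths (in a forest a path is the unique path between its two endpoints, so this merely counts endpoint pairs). Hence a Hamiltonian path is fully described by a linear order $s_1, \dots, s_k$ of $S$ (there are $k!$ of these) together with an ordered assignment of paths $B_0, B_1, \dots, B_k$ of $F$ to the $k+1$ gaps, so that $\cP = B_0\, s_1\, B_1\, s_2 \cdots s_k\, B_k$, where each $B_i$ may be empty or a single vertex and is traversed in a chosen direction. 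The number of such descriptions is at most $k! \cdot \left(\binom{n}{2} + n + 1\right)^{k+1} = n^{\O(k)}$.

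I would then iterate over all these descriptions. For a fixed description the candidate path is completely determined (each $B_i$ is the unique $F$-path between its chosen endpoints in the chosen direction), so in polynomial time I can verify that the $B_i$ are pairwise vertex-disjoint and together cover $V(F)$, that every required adjacency holds (the end of $B_{i-1}$ is adjacent to $s_i$ and $s_i$ to the start of $B_i$, with the obvious modifications for empty blocks and for the global endpoints), and that the induced vertex order is a linear extension of $\pi$. For the minimisation variant I would simultaneously sum the edge weights and keep the smallest total over all descriptions passing every check. As each of the $n^{\O(k)}$ descriptions is processed in polynomial time, the overall running time is $n^{\O(k)}$.

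The main obstacle, and the step needing the most care in a full write-up, is the correctness bookkeeping: one must argue that every $\pi$-extending Hamiltonian path arises from exactly one such description, so that nothing is missed, and that the disjointness-and-coverage test correctly rejects descriptions whose chosen $F$-paths overlap or omit vertices. The structural facts that each $S$-free subpath lies in a single tree of $F$ and is pinned down by its endpoints are precisely what keep the enumeration polynomial per block; once these are established, the remaining verifications are routine.
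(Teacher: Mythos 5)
Your proof is correct, but it takes a genuinely different route from the paper's. The paper also begins by fixing a feedback vertex set $W$ of size $k$, but instead of guessing the global skeleton of the path it guesses \emph{local} information: for each $w \in W$, a predecessor and a successor on the Hamiltonian path (or that $w$ is a global endpoint). Maximal chains of such guesses running through $W$ are contracted into single edges joining their endpoints in $G-W$, yielding a graph $G'$ that is a forest plus at most $k$ edges, hence of \param{feedback edge set number} at most $k$, together with an updated order $\pi'$; the paper then invokes \cref{thm:fes} as a black box -- whose engine is the Demaine et al.\ path-counting bound of \cref{lemma:fes-paths} -- and filters the enumerated paths of $G'$ for those consistent with the guess. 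You instead guess the ordering $s_1,\dots,s_k$ of $S$ together with the ordered endpoint pairs of the $k+1$ blocks, and use the elementary fact that a path in a forest is uniquely determined by its endpoints to reconstruct each candidate and verify it directly. The guessed information is essentially the same in both cases (the interface of the path with the feedback vertex set), but your treatment of the forest part is self-contained: it needs neither \cref{thm:fes} nor \cref{lemma:fes-paths} beyond the trivial forest case, and each description is checked in plain polynomial time. What the paper's route buys is reuse -- soundness and completeness largely reduce to the already-proven \cref{thm:fes} -- whereas you must argue from scratch (as you outline) that every $\pi$-extending Hamiltonian path is captured by some description and that every description passing the disjointness, coverage, adjacency, and linear-extension checks yields a valid path; these arguments are straightforward, and your claim of a \emph{unique} description per path (not quite true for degenerate single-vertex or empty blocks) is not actually needed, since for both the decision and the minimization variant it suffices that each path is captured at least once. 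Both approaches give $n^{\O(k)}$ overall, and both extend to MinPOHPP by tracking edge weights.
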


\begin{proof}
    To find a feedback vertex set $W$ of size $k$ you can use your favorite \FPT{} algorithm (see, e.g., \cite{cao2015feedback,kociumaka2014faster}) or you find it in $n^{\O(k)}$ time by enumerating all vertex subsets of size $k$. Fix one of these sets $W$. For every vertex in $W$, we choose one predecessor and one successor (or we decide that the vertex is the first or the last vertex of our Hamiltonian path). We call these decisions a \emph{vertex choice}. There are $n^{\O(k)}$ many vertex choices. 
    
    For every of those vertex choices, we construct the graph $G'$ as follows. We delete all vertices of $W$ from $G$. Consider the pairs of predecessors and successors of the vertices in $W$. If they form a path starting and ending in vertices of $G - W$, we add an edge to $G'$ connecting the first vertex $s$ of this path with the last vertex $t$ of the path.
    As for every of the \(k\) vertices in \(W\) there is at most one edge in \(G'\), the resulting graph has \param{feedback edge number} \(k\).
    
    We update the partial order $\pi$ accordingly, i.e., for any inner vertex on this path, we make its predecessors in $\pi$ to predecessors of $s$ and its successors in $\pi$ to successors of $t$. If we have chosen some vertex of $W$ as start vertex of the Hamiltonian path, then we add constraints to the partial order that makes $t$ (the first successor of the start vertex in $G-W$) the start vertex. Equivalently, if a vertex of $W$ is chosen to be the end vertex of the Hamiltonian path, then we make the vertex $s$ (the last predecessor of the end vertex in $G-W$) the end vertex of the partial order. We call the resulting updated partial order $\pi'$.
    
   It is easy to see that a $\pi$-extending Hamiltonian path of $G$ that follows our vertex choice directly maps to a $\pi'$-extending Hamiltonian path of $G'$ that traverses all the added edges in the order that is implied by chosen predecessors and successors. We now apply the enumerating algorithm given in \cref{thm:fes} that takes $\O(2^k \cdot n^{\O(1)})$ time. For every of the enumerated paths, we check whether it uses the added edges in the way it is implied by our vertex choice. If this is the note case, we ignore this path. The total running time of this algorithm is $n^{\O(k)}$.
\end{proof}

It has been shown in \cite{beisegel2024computing} that  MinPOHPP can be solved in $\O(n^2)$ time on outerplanar graphs. We have to leave it open whether we can extend this result to an \XP{} algorithm for \param{distance to outerplanar}. Nevertheless, we are able to extend this result to graphs with an embedding with at most $k$ inner vertices. 
Note that a graph might have \param{distance to outerplanar} \(1\) but a large number of inner vertices.
Using ideas similar to those of \cite{deineko2006traveling}, we can give an \FPT{} algorithm for MinPOHPP parameterized by $k$. The dynamic programming approach is an adaptation of the algorithm for outerplanar graphs given in~\cite{beisegel2024computing}. The key ingredient of this algorithm is the following lemma, which shows that any prefix of a Hamiltonian path will always be made up of some interval on the outer face, as well as some of the extra vertices.

\begin{lemma*}[Beisegel et al.~\cite{beisegel2024computing}]\label{lemma:op-interval}
    Let $G$ be a planar graph and let $C$ be a face of a plane embedding of $G$. Furthermore, let $(v_0, \dots, v_k)$ be the cyclic ordering of the vertices on $C$ and let $P$ be a prefix of a Hamiltonian path of $G$. Then $V(C) \cap V(P) = \emptyset$ or there exist $q,r \in \{0,\dots,k\}$ with $q \leq r$ such that either $V(C) \cap V(P) = \{v_q,\dots,v_r\}$ or $V(C) \cap V(P) = \{v_r,\dots,v_k,v_0,\dots,v_q\}$.
\end{lemma*}

\begin{theorem}
    Given a planar graph $G$ with a planar embedding $P$ with $k$ vertices that are not on the outer face, we can solve  MinPOHPP on $G$ in $\O(2^k k n^3)$ time.
\end{theorem}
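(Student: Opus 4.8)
The plan is to adapt the dynamic-programming algorithm for outerplanar graphs from~\cite{beisegel2024computing}, using \cref{lemma:op-interval} applied to the \emph{outer} face $C$ as the structural backbone and handling the $k$ interior vertices by brute-force subset enumeration. Let $(v_0,\dots,v_{n-k-1})$ be the cyclic ordering of the vertices on $C$, and let $I$ be the set of the $k$ vertices not on $C$. By \cref{lemma:op-interval}, the outer-face vertices occurring in any prefix of a Hamiltonian path form a (possibly wrap-around) cyclic interval of $C$; hence the restriction of a prefix to $C$ is captured by only $\O(n^2)$ intervals. The interior part of the prefix, by contrast, can be an arbitrary subset $S \subseteq I$, and these $2^k$ subsets are precisely the source of the exponential factor.

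First I would define a table indexed by triples $(J,S,e)$, where $J$ is a cyclic interval of $C$, $S \subseteq I$, and $e$ is the \emph{active endpoint} of the prefix, i.e.\ its last vertex. A key observation keeps the state space small: if $e$ lies on $C$, then $e$ must be one of the two endpoints of $J$. Indeed, writing $\cP'$ for the prefix with $e$ removed, both $V(C)\cap V(\cP)$ and $V(C)\cap V(\cP')$ are intervals by \cref{lemma:op-interval}, and a set obtained from an interval by adjoining one element is again an interval only if that element sits at one of the two ends. Thus $e$ ranges over the (at most two) endpoints of $J$ together with the members of $S$, giving $\O(k)$ choices, so the number of states is $\O(n^2\,2^k\,k)$. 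The entry $D[J,S,e]$ stores the minimum weight of a path prefix that visits exactly $J\cup S$, ends in $e$, and respects $\pi$.

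The recurrence grows a prefix by one vertex at its active endpoint. To compute $D[J,S,e]$, I would first test \emph{legality} of $e$, namely that every $\pi$-predecessor of $e$ already lies in $J\cup S$; since membership in an interval is an index comparison and membership in $S$ is a bitmask lookup, this test costs $\O(n)$ and depends only on the state. This condition is exactly the characterisation of a linear extension---an ordering $(u_1,\dots,u_n)$ extends $\pi$ iff the $\pi$-predecessors of each $u_i$ all precede it---so a full path assembled by legal appends is automatically $\pi$-extending. If $e$ is legal, set $D[J,S,e]$ to the minimum of $D[J',S',e']+w(e',e)$ over all admissible previous active endpoints $e'$, where $(J',S')$ is $(J,S)$ with $e$ deleted and $e'$ is a neighbour of $e$ that is an endpoint of $J'$ or a member of $S'$; there are $\O(k)$ such $e'$. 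The base cases are single-vertex prefixes whose vertex is $\pi$-minimal, and the answer is $\min_e D[\,C,\,I,\,e\,]$ (the decision version of POHPP being the unit-weight special case). Each entry thus costs $\O(n)$, and the total running time is $\O(n^2\,2^k\,k)\cdot\O(n)=\O(2^k k n^3)$. I expect the main obstacle to be verifying that the triple $(J,S,e)$ is a sufficient summary of a prefix---in particular that \cref{lemma:op-interval} still forces the interval structure once interior vertices are interleaved, and that the active endpoint on $C$ is pinned to an interval endpoint---since this is what prevents the polynomial outer-face bookkeeping from blowing up when combined with the $2^k$ interior subsets.
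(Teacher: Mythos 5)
Your proposal is correct and takes essentially the same approach as the paper: the same dynamic-programming table indexed by (cyclic interval of the outer face, subset of the $k$ interior vertices, active endpoint pinned to an interval endpoint or an interior vertex via \cref{lemma:op-interval}), the same $\O(n)$ legality test against $\pi$, the same $\O(k)$-way recurrence over predecessors, and the same $\O(2^k k n^3)$ accounting. The only detail the paper adds that you omit is a preliminary reduction to 2-connected graphs (inherited from the cited prior work on outerplanar graphs), which guarantees that the outer face is a cycle and hence that the cyclic ordering of its vertices -- on which your interval bookkeeping relies -- is well-defined.
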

\begin{proof}
    We only sketch the idea of the algorithm as it is a quite straightforward adaption of Algorithm~2 in~\cite{beisegel2024computing}. For details, we refer to this publication. First note that considering induced subgraphs does not increase the number of inner vertices of an embedding. Hence, by \cite{beisegel2024computing}, it suffices to deal with 2-connected graphs.
    Let $C = (v_0, \dots, v_\ell)$ be the outer face of the planar embedding $P$ and $W$ be the set of vertices that are not on the outer face. Since $G$ is 2-connected, the face $C$ forms a cycle.

    The idea of Algorithm~2 in~\cite{beisegel2024computing} is to use a dynamic programming approach. To this end, we consider tuples $(a,b,\omega)$, which represent prefixes of Hamiltonian paths. Here, $a$ and $b$ are the indices of the endpoints of the interval on $C$ associated with that prefix (see~\cref{lemma:op-interval}) and $\omega$ is either 1 or 2 depending on whether the prefix ends in $v_a$ or $v_b$. For the problem considered here, we simply need to adjust these to tuples of the form $(a,b,X,t)$, where $a$ and $b$ again represent the indices of the endpoints of the interval on $C$ associated with the prefix, $X \subseteq W$ is the set of inner vertices in the prefix and $t$ forms the endpoint of the prefix, respectively. There are at most $n^2$ intervals and at most $2^k$ subsets of $W$. The endpoint $t$ either has to be a vertex of $X$ or one of the vertices $v_a$ and $v_b$. Therefore, there are at most $(k+2)$ endpoints and the total number of tuples is bounded by $2^k (k+2) n^2$. Similar as in Algorithm~2 of \cite{beisegel2024computing}, we compute for every tuple $(a,b,X,t)$ the minimal cost $M(a,b,X,t)$ of an ordered path $\cP$ of $G$ fulfilling the following properties (or $\infty$ if no such path exists):
\begin{enumerate}[(i)]
  \item $\cP$ consists of the vertices in the interval between $v_a$ and $v_b$ of $C$ and in the set $X$,\label{cond:outer1}
  \item $\cP$ is a prefix of a linear extension of $\pi$,\label{cond:outer2}
  \item $\cP$ ends in $t$.\label{cond:outer3}
\end{enumerate} 
This is done inductively by the size of $\cP$. Let $a'$, $b'$, and $X'$ be the updated values if $t$ is removed from the potential prefix. We first have to check whether $t$ is minimal in $\pi$ if all the vertices of $[a',b']$ and $X'$ have been visited. This costs $\O(n)$ time. If this is not the case, we set the $M$-value to $\infty$. Otherwise, we check the $M$-values for all possible vertices $t'$ that are before $t$ in the prefix. Note that $t'$ can only be an element of $X'$ or one of the two vertices $v_{a'}$ and $v_{b'}$, due to \cref{lemma:op-interval}. For each choice of $t'$ we compute the value $M(a',b',X',t') + c(tt')$ and set the $M$-value of $(a,b,X,t)$ to the minimum of these values. Note that this can be done in $\O(k) \subseteq \O(n)$ time using an adjacency matrix.

Summing up, for every of the $\O(2^kkn^2)$ tuples we need $\O(n)$ time which leads to the overall running time of $\O(2^kkn^3)$. The proof of the correctness of the algorithm follows along the lines of the proof of Theorem~5.3 in \cite{beisegel2024computing}.
\end{proof}

\section{Non-Sparse Width Parameters}\label{sec:d2c}

All parameters considered so far are sparse, i.e., the parameter is unbounded for cliques. As POHPP is trivial on cliques, it makes sense to consider also non-sparse parameters. Note that the MinPOHPP is \NP-hard on cliques as it forms a generalization of TSP. Therefore, we will only consider POHPP in this section.

\subsection{Hardness}

As mentioned in the introduction, \textsc{Hamiltonian Path} can be solved in polynomial time when the \param{independence number}, that is the size of the largest independent set, is fixed~\cite{fomin2024hamiltonicity,jedlickova2024hamiltonian}. Since this parameter is a lower bound on the \param{clique cover number}, the same holds for this graph parameter. Here, we show that this result cannot be extended to  POHPP unless $\P = \NP$.

\begin{theorem}\label{thm:cc}
     POHPP is \NP-complete on graphs of clique cover number~2.
\end{theorem}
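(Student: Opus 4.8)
The plan is to reduce from 3-SAT, reusing the two-phase idea behind \cref{thm:np-interval}. Membership in \NP{} is immediate: a $\pi$-extending Hamiltonian path is a polynomial-size certificate whose validity (consecutive vertices adjacent, order a linear extension of $\pi$) is checkable in polynomial time, so it suffices to prove \NP-hardness. The key observation I would exploit is a reformulation of the clique-cover-2 condition. Having \param{clique cover number}~$2$ means $V(G)=A\cup B$ with $A$ and $B$ cliques, so the only non-edges of $G$ run between $A$ and $B$. Consequently, an ordering $(v_1,\dots,v_n)$ is a Hamiltonian path of $G$ if and only if every \emph{clique switch} -- a consecutive pair $v_iv_{i+1}$ whose endpoints lie in different cliques -- is a crossing edge of $G$. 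Thus a $\pi$-extending Hamiltonian path is exactly a linear extension of $\pi$ all of whose clique switches lie on the (few) crossing edges we build into $G$. All hardness must therefore come from the scarcity of crossing edges together with the precedence constraints, since inside a single clique the path may be reordered arbitrarily.

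The construction would mirror the gadget structure of \cref{thm:np-interval}. I introduce a start vertex $s$ forced first by a constraint $s\prec_\pi v$ for all $v\neq s$, and a threshold vertex $\theta$ together with constraints that split any valid path into an \emph{early} part (up to $\theta$) and a \emph{late} part, playing the role of \pef{ui:p1}--\pef{ui:p3}. For each variable $x_i$ I create a true-vertex $t_i$ and a false-vertex $f_i$ in opposite cliques; the assignment is read off from which of the two is visited in the early part. For each clause $c_j$ I create literal vertices $\ell_j^1,\ell_j^2,\ell_j^3$ and add, in the spirit of \pef{ui:p4}--\pef{ui:p5}, the constraints $t_i\prec_\pi\ell_j^a$ (resp.\ $f_i\prec_\pi\ell_j^a$) whenever the $a$-th literal of $c_j$ is $x_i$ (resp.\ $\overline{x}_i$). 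The crossing edges are placed so as to realize a forced backbone: in the early part the path must switch cliques once per clause, and at clause $c_j$ the only available crossing edges are incident to its literal vertices. Hence the early part is obliged to route through at least one literal vertex of every clause, and by the corresponding precedence constraint that literal's variable vertex must already have been visited early, i.e.\ the literal is satisfied. The complementary variable vertices and the unused literal vertices are then swept up in the late part.

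For correctness, the forward direction starts from a satisfying assignment, lets the early part pick the corresponding variable vertices and one satisfying literal per clause along the backbone, and appends the late part in a $\pi$-respecting order; the threshold constraints guarantee the remaining literal constraints hold, since all variable vertices are visited by the end. The reverse direction reads the assignment off the early part and uses the backbone argument to certify that each clause is satisfied. The main obstacle -- and the part requiring the most care -- is the \emph{forcing} of the early part: because both sides are cliques, the path has complete freedom inside each clique, so one must prove, using only the clique-switch constraint together with the threshold and a ``jump'' mechanism akin to the near-universal vertices $\hat{t}^i$ of \cref{thm:w1-d2p}, that the early part cannot shortcut past the clause gadgets and is genuinely compelled to spend one literal vertex per clause before reaching $\theta$. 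A secondary point to verify is that the gadget graph has \param{clique cover number} exactly~$2$, which holds by construction since every vertex is placed in $A$ or $B$ and all intra-part pairs are made adjacent. Combining both directions with the polynomial-time construction then yields the claimed \NP-completeness, hence $\P=\NP$ would follow from a polynomial algorithm.
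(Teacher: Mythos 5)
There is a genuine gap. Your proposal correctly settles membership in \NP{}, correctly identifies the structural reformulation (with clique cover number~2, a Hamiltonian path is exactly a linear extension of $\pi$ whose ``clique switches'' all lie on crossing edges), and correctly locates where all the difficulty must sit. But you then stop exactly at that point: you never specify where the crossing edges go, what the threshold constraints are, or why the early part is forced to pass through a literal vertex of every clause --- you explicitly defer this as ``the part requiring the most care.'' Since inside each clique the path can be reordered arbitrarily, this forcing argument \emph{is} the proof; without it the reduction is only a template, and one cannot even check that your design choice of putting $t_i$ and $f_i$ in \emph{opposite} cliques admits a working forcing mechanism at all.

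The paper's proof closes this gap with a specific device that your sketch lacks: an interleaved precedence chain $a_1 \prec b_1 \prec a_2 \prec b_2 \prec \dots \prec a_m \prec b_m \prec y_1 \prec z_1 \prec \dots \prec y_n \prec z_n$, where the $a_j$'s and $y_i$'s form one clique and everything else (including both variable vertices $x_i,\overline{x}_i$, the guards $b_j, z_i$, and the literal vertices) forms the other, with crossing edges only inside the gadgets. The chain makes predecessors unique in the following sense: the immediate predecessor of $a_j$ in any $\pi$-extending Hamiltonian path cannot be another $a_{j'}$ (some $b$ must lie between them), cannot be any $y_i$ (they come after $b_m$), and cannot be $b_j$ (it comes after $a_j$), so it must be a literal vertex of $c_j$, whose precedence constraint certifies that the clause is satisfied; symmetrically, the immediate predecessor of $y_i$ must be $x_i$ or $\overline{x}_i$, which is what forces at most one of the pair to appear before $b_m$ and makes the extracted assignment well defined. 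This ``predecessor forcing'' via an interleaved chain is the missing idea; note also that it uses both variable vertices in the \emph{same} clique, contrary to your layout, with the single vertex $y_i$ in the opposite clique acting as the consistency gadget.
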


\begin{proof}
    We reduce 3-SAT to  POHPP on graphs of \param{clique cover number}~2. Let $\Phi$ be a formula with variables $x_1, \dots, x_n$ and clauses $c_1, \dots, c_m$. Let $\ell_i^1$, $\ell_i^2$, and $\ell_i^3$ be the (possibly negated) literals contained in $c_i$. The graph $G$ contains the following gadgets (see \cref{fig:cc} for an illustration):
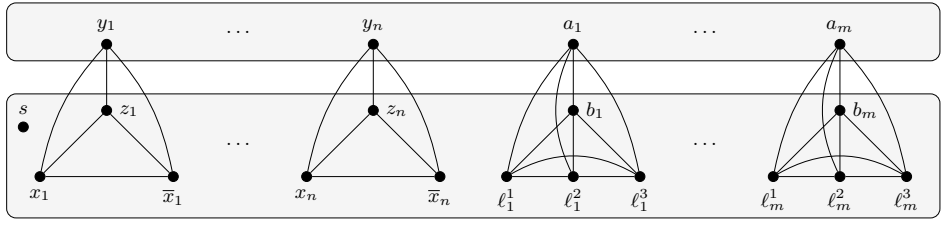
\begin{figure}
    \centering
    \resizebox{0.9\textwidth}{!}{
    \begin{tikzpicture}[yscale=1]
    \footnotesize
  
    \draw[rounded corners, fill=lightgray!15!white] (-0.5,-0.625) rectangle (13.5,1.25);
    \draw[rounded corners, fill=lightgray!15!white] (-0.5,1.75) rectangle (13.5,2.625);

    \node[vertex, label=90:$s$] at (-0.25,0.75) {};
    
    \node at (3,0.5) {$\dots$};
    \node at (3,2.1875) {$\dots$};
    \node at (10,0.5) {$\dots$};
    \node at (10,2.1875) {$\dots$};
    
    \node[vertex, label=-90:$x_1$] (xi) at (0,0) {};
    \node[vertex, label=-90:$\overline{x}_1$] (nxi) at (2,0) {};
    \node[vertex, label=0:$z_1$] (zi) at (1,1) {};
    \node[vertex, label=90:$y_1$] (yi) at (1,2) {};

    \draw (yi) -- (zi) -- (xi) -- (nxi) -- (zi);
    \draw (xi) to[bend angle=15, bend left] (yi);
    \draw (yi) to[bend angle=15, bend left] (nxi);

    \node[vertex, label=-90:$x_n$] (xi) at (4,0) {};
    \node[vertex, label=-90:$\overline{x}_n$] (nxi) at (6,0) {};
    \node[vertex, label=0:$z_n$] (zi) at (5,1) {};
    \node[vertex, label=90:$y_n$] (yi) at (5,2) {};

    \draw (yi) -- (zi) -- (xi) -- (nxi) -- (zi);
    \draw (xi) to[bend angle=15, bend left] (yi);
    \draw (yi) to[bend angle=15, bend left] (nxi);

    \node[vertex, label=-90:$\ell_1^1$] (l1) at (7,0) {};
    \node[vertex, label=-90:$\ell_1^2$] (l2) at (8,0) {};
    \node[vertex, label=-90:$\ell_1^3$] (l3) at (9,0) {};
    \node[vertex, label=0:$b_1$] (bj) at (8,1) {};
    \node[vertex, label=90:$a_1$] (aj) at (8,2) {};

    \draw (l1) -- (l2) -- (l3) -- (bj) -- (l2);
    \draw (l1) -- (bj) -- (aj);
    \draw[bend left] (l1) to (l3);
    \draw[bend angle=25, bend right] (aj) to (l2);
    \draw (aj) to[bend angle=15, bend right] (l1);
    \draw (aj) to[bend angle=15, bend left] (l3);

    \node[vertex, label=-90:$\ell_m^1$] (l1) at (11,0) {};
    \node[vertex, label=-90:$\ell_m^2$] (l2) at (12,0) {};
    \node[vertex, label=-90:$\ell_m^3$] (l3) at (13,0) {};
    \node[vertex, label=0:$b_m$] (bj) at (12,1) {};
    \node[vertex, label=90:$a_m$] (aj) at (12,2) {};

    \draw (l1) -- (l2) -- (l3) -- (bj) -- (l2);
    \draw (l1) -- (bj) -- (aj);
    \draw[bend left] (l1) to (l3);
    \draw[bend angle=25, bend right] (aj) to (l2);
    \draw (aj) to[bend angle=15, bend right] (l1);
    \draw (aj) to[bend angle=15, bend left] (l3);

    %    \draw[rounded corners] (-0.5,-0.5) rectangle (1.5,2.5);
 %   \draw[rounded corners] (2.5,-0.5) rectangle (5.5,2.5);

\end{tikzpicture}
    }
    \caption{Construction for \cref{thm:cc}. The gray boxes form cliques.}
    \label{fig:cc}
\end{figure}    
\begin{description}
        \item[Variable Gadget] For every variable $x_i$, we have a clique consisting of the vertices $x_i$, $\overline{x}_i$, $y_i$, and $z_i$.
        \item[Clause Gadget] For every clause $c_i$, we have a clique consisting of the vertices $\ell_i^1$, $\ell_i^2$, $\ell_i^3$, $a_i$, and $b_i$.
    \end{description}
    Besides these vertices, we have a vertex $s$. Furthermore, we add edges such that the two sets 
    \[\{x_i, \overline{x}_i, z_i \mid i \in [n]\} \cup \{b_i, \ell_i^1, \ell_i^2, \ell_i^3 \mid i \in [m]\} \cup \{s\} \text{ and } \{y_i \mid i \in [n]\} \cup \{a_i \mid i \in [m]\}\] form cliques and, hence, $G$ has \param{clique cover number}~2. The partial order $\pi$ is the reflexive and transitive closure of the the following constraints:
    
    \begin{enumerate}[(P1)]
        \item $s \prec v$ for all $v \in V(G) \setminus \{s\}$,\label{cc-p1}
        \item $a_1 \prec b_1 \prec a_2 \prec b_2 \dots \prec a_m \prec b_m \prec y_1 \prec z_1 \prec y_2 \prec z_2 \prec \dots \prec y_n \prec z_n$,\label{cc-p2}
        \item $x_i \prec \ell_j^k$ if $x_i$ is the $k$-th literal of clause $c_j$,\label{cc-p7}
        \item $\overline{x}_i \prec \ell_j^k$ if $\overline{x}_i$ is the $k$-th literal of clause $c_j$.\label{cc-p8}
    \end{enumerate}
    
    \begin{claim*}\label{claim:cc-1}
        If $\Phi$ has a fulfilling assignment $\A$, then there is a $\pi$-extending Hamiltonian path in $G'$. 
    \end{claim*}

    \begin{claimproof}
        We start in $s$. Afterwards, we visit for every $i$ the vertex $x_i$ if $x_i$ is set to true in $\A$, otherwise we visit $\overline{x}_i$. Since $\A$ is a fulfilling assignment, there is at least one $k \in [3]$ such that $\ell_1^k$ is now a minimal element. We visit one of these vertices. Afterwards, we visit $a_1$ and then $b_1$. We repeat this procedure for every $i \in [m]$. From $b_m$ we go to the vertex of $x_1$ or $\overline{x}_1$ that is not already visited. Then we visit $y_1$ and then $z_1$. We repeat this procedure for every $i \in [n]$. Now the only remaining vertices are two literal vertices $\ell_j^k$ per clause $c_j$. Since all these vertices are now minimal in $\pi$ and form a clique, we can visit them in an arbitrary order.
    \end{claimproof}

    \begin{claim*}\label{claim:cc-2}
        If there is a $\pi$-extending Hamiltonian path $\cP$ in $G'$, then for every $i \in [n]$ there is at most one of the two vertices $x_i$ and $\overline{x}_i$ that is to the left of $b_m$ in $\cP$.
    \end{claim*}

    \begin{claimproof}
        Consider vertex $y_i$. By \pef{cc-p2}, $y_i$ has to be to the right of $b_m$ and all the vertices $a_j$ with $j \in [m]$ have to be to the left of $b_m$. Therefore, no vertex $a_j$ can be the predecessor of $y_i$ in $\cP$. The only other neighbors of $y_i$ are $z_i$, $x_i$ and $\overline{x}_i$. By \pef{cc-p2}, vertex $z_i$ has to be to the right of $y_i$ in $\cP$. Therefore, one of $x_i$ and $\overline{x}_i$ is the predecessor of $y_i$ in $\cP$ and, thus, this vertex is to the right of $b_m$ in $\cP$.
    \end{claimproof}

    \begin{claim*}\label{claim:cc-3}
        If there is a $\pi$-extending Hamiltonian path $\cP$ in $G'$, then $\Phi$ has a fulfilling assignment.
    \end{claim*}

    \begin{claimproof}
        We set variable $x_i$ to true if and only if vertex $x_i$ is to the left of $b_m$ in $\cP$. We claim that this assignment is fulfilling. Consider the clause $c_j$. Vertex $a_j$ cannot be the first vertex of $\cP$, due to \pef{cc-p1}. Hence, $a_j$ has a predecessor in $\cP$. Due to \pef{cc-p2}, no vertex $y_i$ can be the predecessor since these vertices have to be to the right of $a_j$ in $\cP$. The constraints \pef{cc-p2} also imply that either vertex $b_j$ or vertex $b_i$ has to be between $a_i$ and $a_j$ in $\cP$. Hence, no vertex $a_i$ can be the predecessor of $a_j$. Furthermore, $b_j$ must be to the right of $a_j$. Therefore, the only neighbors of $a_j$ that can be the predecessor are the vertices $\ell_j^1$, $\ell_j^2$, and $\ell_j^3$. However, these vertices can only be taken if their corresponding variable vertex has already been taken, due to \pef{cc-p7} and \pef{cc-p8}. As $a_i$ is to the left of $b_m$, the variable vertex is also to the left of $b_m$. Since by \cref{claim:cc-2}, only one of the two vertices is to the left of $b_m$ in $\cP$, the variable value was set exactly in the way such that clause $c_j$ is fulfilled.
    \end{claimproof}
    
    \Cref{claim:cc-1} and \cref{claim:cc-3} prove the correctness of the reduction. It is easy to see that the graph $G'$ and the partial order $\pi$ can be computed in polynomial time.
\end{proof}

As we have mentioned in \cref{sec:clique-algorithms},  POHPP is \W-hard when parameterized by \param{distance to block}. Here, we strengthen this result by showing that it is even \W-hard when parameterized by \param{distance to clique}.
%\todo{Maybe mention that it is \NP-hard on complete split graphs.}

\begin{theorem}\label{thm:w1-d2c}
     POHPP is \W-hard parameterized by \param{distance to clique}.
\end{theorem}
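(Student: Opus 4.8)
The plan is to give a parameterized reduction from the \textsc{Multicolored Clique Problem}, reusing the high-level selection/verification philosophy of \cref{thm:w1-d2p} but engineering the bulk of the instance to be a single clique. Let $G$ be an MCP-instance with color classes $V_1,\dots,V_k$, each $V_i=\{v^i_1,\dots,v^i_q\}$. I would introduce a \emph{selection vertex} $x^i_p$ for every $v^i_p$ and a \emph{verification vertex} $w^{i,j}_{p,r}$ for every edge $v^i_pv^j_r\in E(G)$ with $i<j$, and declare the whole set $K=\{x^i_p\}\cup\{w^{i,j}_{p,r}\}$ to be a clique. On top of $K$ I would add only $\O(k^2)$ \emph{control vertices}: a global start $s$, one ``phase'' vertex for each color and for each pair of colors to guide the traversal, and a late marker $z$. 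Since exactly these $\O(k^2)$ control vertices have to be deleted to turn the graph into the clique $K$, the \param{distance to clique} of the constructed graph is $\O(k^2)$, so this is a valid \FPT{}-reduction as far as the parameter is concerned.

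For the partial order $\pi$ I would keep the two ingredients that already carried \cref{thm:w1-d2p}: the start constraint $s\prec v$ for all $v$, and the edge-consistency constraints $x^i_p\prec w^{i,j}_{p,r}$ and $x^j_r\prec w^{i,j}_{p,r}$, which force a verification vertex to be visited only after both of its endpoints have been selected. In addition I would put the control vertices into a chain in $\pi$ (start, then the color-phase vertices in order, then the pair-phase vertices, then $z$) and use $z\prec v$ for all ``cleanup'' material, so that a valid Hamiltonian path is forced into three stages: a selection stage that picks one $x^i$ per color, a verification stage that visits one $w^{i,j}$ per pair, and a final cleanup stage that mops up all remaining clique vertices in arbitrary order (this last stage being trivially completable because $K$ is a clique). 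The forward direction is then routine: given a multicolored clique $\{v^1_{p_1},\dots,v^k_{p_k}\}$, the path that visits $s,x^1_{p_1},\dots,x^k_{p_k}$, then the vertices $w^{i,j}_{p_i,p_j}$ interleaved with the pair-phase vertices, then $z$, and finally the rest of $K$ is $\pi$-extending, since every constraint $x^i_{p_i}\prec w^{i,j}_{p_i,p_j}$ holds because both selections precede verification.

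The delicate direction, and the main obstacle, is the converse: showing that every $\pi$-extending Hamiltonian path encodes a genuine multicolored clique. Consistency of the encoding requires that across all pairs $(i,\cdot)$ the same color-$i$ vertex is used, which in turn requires that \emph{exactly one} selection vertex per color is visited before the verification stage ends. In \cref{thm:w1-d2p} this ``one per color'' property was enforced for free by the sparse neighborhoods of the selection gadget: after taking a selection vertex the path had no legal continuation except the next control vertex, and taking a second same-color vertex left it stuck. On a clique this mechanism disappears entirely, since any two selection vertices are adjacent, so I must recreate the ``stuck'' phenomenon purely through $\pi$ and the restricted adjacencies of the $\O(k^2)$ control vertices. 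Concretely, I would give the color-phase and pair-phase vertices neighborhoods inside $K$ that are restricted to the relevant color (respectively the relevant pair), and design $\pi$ so that the control chain forces the phase vertices to occur in the prescribed order and so that any prefix which has already visited two vertices of the same color before $z$ can no longer reach the remaining control vertices, because the only clique vertices still usable as bridges to the next phase vertex have unmet $\pi$-predecessors while the leftover same-color vertices become unreachable. Making this implication hold \emph{exactly} when the selection is inconsistent is the technical heart of the proof; once it is in place, the argument of \cref{thm:w1-d2p} — reading the selected color-$i$ vertex off its unique early occurrence and using $x^i_p\prec w^{i,j}_{p,r}$ to certify the edges — yields the multicolored clique and, together with the forward direction, completes the $\W$-hardness proof.
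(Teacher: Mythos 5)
Your proposal has the right skeleton (reduction from MCP, one big clique containing all selection and verification vertices, $\O(k^2)$ control vertices, start constraint, a control chain, and a cleanup phase after a late marker $z$), and your forward direction and parameter bound are fine. But the converse direction -- the only hard part -- is not actually proved: you explicitly defer ``making this implication hold exactly when the selection is inconsistent,'' and the mechanism you sketch for it cannot work with the constraints you chose. Your encoding keeps the direct constraints $x^i_p \prec w^{i,j}_{p,r}$ and $x^j_r \prec w^{i,j}_{p,r}$ from \cref{thm:w1-d2p}. These constraints are monotone in the wrong direction for a clique: visiting \emph{more} selection vertices only satisfies more predecessor constraints and unlocks more verification vertices; it never creates an unmet predecessor and never makes any vertex unreachable, since all clique vertices are pairwise adjacent. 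So a $\pi$-extending path may greedily visit, say, both $x^1_{p_1}$ and $x^1_{p_1'}$ early, and then certify pair $(1,2)$ via $w^{1,2}_{p_1,r}$ but pair $(1,3)$ via $w^{1,3}_{p_1',r'}$; these verification vertices do not assemble into a multicolored clique, and no ``stuck'' phenomenon ever arises to exclude this. Recreating the sparse-graph blocking behaviour purely through $\pi$ is precisely what a direct encoding cannot do here.

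The paper escapes this with an idea absent from your proposal: it \emph{inverts} the encoding. Selection is by omission -- the path visits all representatives of color $i$ \emph{except} the selected one before $z$ -- and the constraints read $x^i_p \prec w^{i,j}_{r,s}$ for all $p \neq r$ (and symmetrically for color $j$). Then a single verification vertex of $W^{i,j}$ appearing before $z$ forces \emph{all but one} vertex of $X^i$ and of $X^j$ before it, and two verification vertices of the same gadget with different first coordinates would force \emph{all} of $X^i$ before $z$. This collides with a counting argument on the control vertices: each $\hat{s}^i$ lies after $z$ in the order and its only possible path-neighbors besides $\hat{t}^{i-1}$ lie in $X^i$, so each $X^i$ must retain at least one vertex after $z$. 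Together these give exactly one verification vertex per pair before $z$ and exactly one omitted selection vertex per color, which is the consistency you need. Without this inversion (or some genuinely new replacement for it), your reduction's backward direction fails, so the proposal has a real gap at its acknowledged ``technical heart.''
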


\begin{proof}
As for \cref{thm:w1-d2p}, we use a reduction from \textsc{Multicolored Clique}. Let $G$ be an instance of MCP with color classes $V_1 \cup \dots \cup V_k$ where $V_i = \{v^i_1, \dots, v^i_q\}$. The construction works similar as for \cref{thm:w1-d2p}. The main difference is how we encode the selection of a vertex from a color class. While in the proof of \cref{thm:w1-d2p} the representative of the selected vertex was visited, here we visit all representatives except from the one of the selected vertex. 

We construct a graph $G'$ using the following gadgets:

    \begin{description}
    \item[Selection Gadget] For every color class $i \in [k]$, we have a clique $X^i$ that contains for every $v^i_p \in V_i$ a vertex $x^i_p$.
    \item[Verification Gadget] For each pair $i,j \in [k]$ with $i < j$, we have a clique $W^{i,j}$ containing for every edge $v^i_pv^j_r \in E(G)$ a vertex $w^{i,j}_{p,r}$.
    \end{description}

    Next we describe how these gadgets are connected to each other (see \cref{fig:w-clique}). 
    The union of the selection gadgets and the verification gadgets forms one large clique. We order the subcliques described in the selection and verification gadgets as follows: \[X^1, X^2, \dots, X^k, W^{1,2}, W^{1,3}, \dots, W^{1,k}, W^{2,3}, W^{2,4}, \dots, W^{k-1,k}.\] We have the following additional vertices in $G'$. First, we have $s^1$, $\hat{s}_1$, and $\hat{t}_1$ that are adjacent to all vertices in $X^1$. For every $i$ with $2 \leq i \leq k-1$, we have vertices $s^i$, $\hat{s}^i$ and $\hat{t}^i$. For $i = k$, we have only $s^i$ and $\hat{s}^i$. Vertex $s^i$ is adjacent to all vertices in $X^i$ and $X^{i-1}$ and the vertices $\hat{s}^i$ and $\hat{t}^i$ are adjacent to all vertices in $X^{i}$. Furthermore, $\hat{s}^i$ is adjacent to $\hat{t}^{i-1}$. For all $i,j \in [k]$ with $i < j$, there are vertices $c^{i,j}$ and $\hat{c}^{i,j}$ that are adjacent to all vertices in $W^{i,j}$ and to all vertices in the clique to left of $W^{i,j}$ in the ordering described above. Finally, we have a vertex $z$ that is adjacent to all vertices in $W^{k-1,k}$ and to $\hat{s}^1$. Observe that the graph $G'$ without the selection and verification gadgets contains $3k + 2 \binom{k}{2} + 1$ vertices. Therefore, the \param{distance to clique} of $G'$ is $\O(k^2)$. 

\begin{figure}
    \centering
    \begin{tikzpicture}[xscale=1.1, clique/.style={draw, rounded corners, minimum size=1cm}]
        %\draw (0,0) circle (0.5cm);
        \node[clique] (X1) at (0,0) {$X^1$};
        \node[clique] (X2) at (1,0) {$X^2$};
        \node at (2,0) {$\dots$};
        \node[clique] (Xk) at (3,0) {$X^k$};
        \node[clique] (W12) at (4,0) {$W^{1,2}$};
        \node[clique] (W13) at (5,0) {$W^{1,3}$};
        \node at (6,0) {$\dots$};
        \node[clique] (W1k) at (7,0) {$W^{1,k}$};
        \node[clique] (W23) at (8,0) {$W^{2,3}$};
        \node at (8.95,0) {$\dots$};
        \node[clique] (Wk-1k) at (10,0) {$W^{k-1,k}$};

        \node[vertex, label=90:{$s^1$}] (s1) at (-0.5,1) {};
        \node[vertex, label=90:{$s^2$}] (s2) at (0.5,1) {};
        \node[vertex, label=180:{$\hat{t}^1$}] (t2) at (0,2) {};
        \node[vertex, label=0:{$\hat{s}^2$}] (u2) at (1,2) {};
        \node[vertex, label=90:{$c^{1,2}$}] (c12) at (3.5,1) {};
        \node[vertex, label=180:{$\hat{c}^{1,2}$}] (d12) at (3.5,2) {};
        \node[vertex, label=90:{$c^{1,3}$}] (c13) at (4.5,1) {};
        \node[vertex, label=0:{$\hat{c}^{1,3}$}] (d13) at (4.5,2) {};
        \node[vertex, label=90:{$c^{2,3}$}] (c23) at (7.5,1) {};
        \node[vertex, label=0:{$\hat{c}^{2,3}$}] (d23) at (7.5,2) {};
        \node[vertex, label=0:{$z$}] (t1) at (10,-0.825) {};
        \node[vertex, label=180:{$\hat{s}^1$}] (u1) at (0,-0.825) {};

        \draw (s1) -- (X1);
        \draw (s2) -- (X1);
        \draw (s2) -- (X2);
        \draw (t2) -- (X1);
        \draw (t2) -- (u2);
        \draw (u2) -- (X2);
        \draw (d12) -- (Xk.110);
        \draw (c12) -- (Xk);
        \draw (c12) -- (W12);
        \draw (d12) -- (W12.70);
        \draw (d13) -- (W12.110);
        \draw (c13) -- (W12);
        \draw (c13) -- (W13);
        \draw (d13) -- (W13.70);
        \draw (d23) -- (W1k.110);
        \draw (c23) -- (W1k);
        \draw (c23) -- (W23);
        \draw (d23) -- (W23.70);

        \draw (t1) -- (Wk-1k);
        \draw (t1) -- (u1);
        \draw (u1) -- (X1);
    \end{tikzpicture}
    \caption{Construction of the proofs of \cref{thm:w1-d2c,thm:w1-d2cm}. If a vertex is adjacent to a box, then the vertex is adjacent to all vertices in that box. In the proof of \cref{thm:w1-d2c}, all the boxes are pairwise adjacent while in the proof of \cref{thm:w1-d2cm} they all are pairwise non-adjacent.}
    \label{fig:w-clique}
\end{figure}
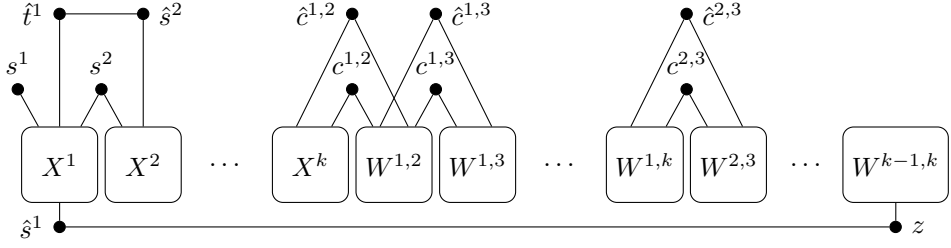

    The partial order $\pi$ is the reflexive and transitive closure of the following constraints.
    \begin{enumerate}[(P1)]
        \item $s^1 \prec v$ for all $v \in V(G') \setminus \{s^1\}$,\label{d2c:p1}
        \item $s^1 \prec s^2 \prec \dots \prec s^k$,\label{d2c:p2}
        \item $z \prec \hat{s}^1 \prec \hat{t}^1 \prec \dots \prec \hat{s}^k$,\label{d2c:p3}
        \item $c^{i,j} \prec c^{i',j'}$ for all $i,j,i',j' \in [k]$ with $i < i'$ or $i = i'$ and $j < j'$,\label{d2c:p4}
        \item $c^{k-1,k} \prec z$, \label{d2c:p5}
        \item $c^{i,j} \prec w^{i,j}_{p,r}$ for all $i,j \in [k]$ with $i < j$ and all $w^{i,j}_{p,r} \in W^{i,j}$,\label{d2c:p6}
        \item $x^i_p \prec w^{i,j}_{r,s}$ for all $i,j \in [k]$ with $i < j$ and all $w^{i,j}_{r,s} \in W^{i,j}$ with $p \neq r$,\label{d2c:p7}
        \item $x^j_p \prec w^{i,j}_{r,s}$ for all $i,j \in [k]$ with $i < j$ and all $w^{i,j}_{r,s} \in W^{i,j}$ with $p \neq s$.\label{d2c:p8}
    \end{enumerate}

    \begin{claim*}\label{claim:clique-direction1}
        If there is a multicolored clique $\{v^1_{p_1}, \dots, v^k_{p_k}\}$ in $G$, then there is a $\pi$-extending Hamiltonian path in $G'$.
    \end{claim*}

    \begin{claimproof}
        First, we visit $s^1$ and then all the vertices in $X^1$ except for the vertex $x^1_{p_1}$. We repeat this for all $i \in [k]$. 
        Next we visit $c^{1,2}$. Now we visit $w^{1,2}_{p_1, p_2}$. Note that this is possible since all the vertices that have to be to the left of $w^{1,2}_{p_1, p_2}$ by \pef{d2c:p7} and \pef{d2c:p8} are already visited. Next we visit $c^{1,3}$ and afterwards $w^{1,3}_{p_1, p_3}$ which is possible for the same reason as mentioned above. We repeat this procedure until we reach $w^{p_1,p_k}$. Next, we visit $c^{2,3}$. The same procedure works until we reach $w^{k-1,k}_{p_{k-1},p_k}$. Next we visit $z$. Now, starting with $i=1$, we visit for all $i \in [k]$ the vertices $\hat{s}^i$, followed by $x^i_{p_i}$ and $\hat{t}^i$ until we reach $x^k_{p_k}$. Then, we visit $\hat{c}^{1,2}$ followed by all remaining vertices $w^{1,2}_{p,r} \in W^{1,2}$ which is possible as all the left vertices in the constraints \pef{d2c:p7} and \pef{d2c:p8} have already been visited. We repeat this for all $i,j$ with $i<j$ in the lexicographic order and finally end with some vertex in $W^{k-1,k}$.
    \end{claimproof}

    It remains to show that a $\pi$-extending Hamiltonian path in $G'$ implies the existence of a multicolored clique in $G$. So assume there is a $\pi$-extending Hamiltonian path $\cP$ in $G'$.

    \begin{claim*}\label{claim:clique-selection1}
        For any $i \in [k]$, there is at least one vertex in $X^i$ that is to the right of $z$ in $\cP$.
    \end{claim*}

    \begin{claimproof}
        By \pef{d2c:p3}, $\hat{s}^i$ is to the right of $z$. One of the two neighbors of $\hat{s}^i$ in $\cP$ has to be in $X^i$. This vertex is to the right of $z$.
    \end{claimproof}

    \begin{claim*}\label{claim:clique-verification1}
        For any $i,j \in [k]$ with $i < j$, there is exactly one vertex of $W^{i,j}$ that is to the left of $z$ in $\cP$.
    \end{claim*}

    \begin{claimproof}
        First assume for contradiction that there are two vertices $w^{i,j}_{p,r}$ and $w^{i,j}_{p',r'}$ to the left of $z$ in $\cP$. It holds that $p \neq p'$ or $r \neq r'$. We assume that $p \neq p'$, the case that $r \neq r'$ works analogously. Due to \pef{d2c:p7}, $x^i_p$ has to be to the left of $w^{i,j}_{p',r'}$ and $x^i_{p'}$ has to be to the left of $w^{i,j}_{p,r}$. All other vertices of $X^i$ have to be to the left of both $w^{i,j}_{p,r}$ and $w^{i,j}_{p',r'}$. Therefore, all vertices of $X^i$ are to the left of $z$ in $\cP$, contradicting \cref{claim:clique-selection1}.

        Now we show that there is also at least one vertex of $W^{i,j}$ to the left of $z$ in $\cP$.
        Consider vertex $c^{a,b}$ with $c^{a,b} \neq c^{1,2}$. Due to \pef{d2c:p1}, \pef{d2c:p4}, and \pef{d2c:p5}, vertex $c^{a,b}$ is not the start vertex of $\cP$ and $c^{a,b}$ is to the left of \(z\) in $\cP$. Thus, it has two neighbors in $\cP$. These neighbors have to be in the two $W$-cliques to which $c^{a,b}$ is adjacent. As we have observed above, there is at most one vertex of any of those cliques to the left of $z$. Therefore, both cliques contain a neighbor of $c^{a,b}$ in $\cP$ and these neighbors are to the left of $z$. As every clique $W^{i,j}$ is adjacent to a vertex $c^{a,b} \neq c^{1,2}$, it follows that every of those cliques contains a vertex that is to the left of $z$.
    \end{claimproof}

    \begin{claim*}\label{claim:clique-selection2}
        For any $i \in [k]$, there is exactly one vertex $x^i_{p_i}$ that is to the right of $z$ in $\cP$.
    \end{claim*}

    \begin{claimproof}
        Due to \cref{claim:clique-selection1}, it remains to show that there is at most one such vertex. As we have seen in \cref{claim:clique-verification1}, there is a vertex in $w^{i,i+1}_{p_i,p_{i+1}}$ or a vertex $w^{i-1,i}_{r_{i-1},r_i}$ to the left of $z$ in $\cP$. The constraints \pef{d2c:p7} or \pef{d2c:p8} imply that all vertices but $x^i_{p_i}$ or $x^i_{r_i}$, respectively, are to the left of~$z$.
    \end{claimproof}
       
    \begin{claim*}\label{claim:clique-verification2}
        Let the $p_i$ be chosen as in \cref{claim:clique-selection2}. The set $\{v^1_{p_1}, \dots, v^k_{p_k}\}$ forms a multicolored clique in $G$.
    \end{claim*}

    \begin{claimproof}
        Let $i,j \in [k]$ with $i < j$. Due to \cref{claim:clique-verification1}, there is a vertex $w^{i,j}_{a,b}$ to the left of $z$ in $\cP$. By \pef{d2c:p7} and \pef{d2c:p8}, all vertices $x^i_p$ and $x^j_r$ with $p \neq a$ and $r \neq b$ have to be to the left of $w^{i,j}_{a,b}$ in $\cP$. Therefore, \cref{claim:clique-selection2} implies that $a = p_i$ and $b = p_j$. By construction of $G'$, this implies that the edge $v^i_{p_i}v^j_{p_j}$ exists in $G$. 
    \end{claimproof}

    \Cref{claim:clique-direction1} and \cref{claim:clique-verification2} prove that $(G', \pi)$ is a proper \FPT{} reduction from MCP to POHPP parameterized by the \param{distance to clique}. This finalizes the proof.
\end{proof}

Similar as for \cref{thm:w1-lfm}, we can adapt the proof of \cref{thm:w1-d2c} to show that  POHPP is also \W-hard for the \param{twin-cover number}, respectively for the \param{distance to cluster modules}. To this end, we delete all the edges in $G'$ between different gadgets. Observe that they have not been used in the proof.

\begin{theorem*}\label{thm:w1-d2cm}
     POHPP is \W-hard when parameterized by \param{twin-cover number}.
\end{theorem*}

\subsection{Algorithms}\label{sec:clique-algorithms}

Now we focus on  POHPP problem for the \param{distance to block}. First we observe that  POHPP can be solved in linear time if the graph is a block graph, i.e., its \param{distance to block} is~0. 

\begin{observation*}\label{obs:block}
     POHPP can be solved on block graphs in $\O(n + m + |\pi|)$ time, where $n$ is the number of vertices and $m$ is the number of edges of the given graph.
\end{observation*}

\begin{proof}
    If the given graph is a clique, then any linear extension of $\pi$ is a solution of  POHPP. As was shown in \cite[Theorem~5.1]{beisegel2024computing}, a linear-time algorithm that solves POHPP on the blocks of a graph implies a linear-time algorithm that solves the problem on the whole graph. Therefore, we can solve  POHPP in $\O(n + m + |\pi|)$ time on block graphs.
\end{proof}

Next, we consider the \param{edge distance to block}.

\begin{theorem}\label{thm:block-edge}
     POHPP can be solved in $k \cdot k! \cdot 2^k \cdot n^{\O(1)} + k^{2k}$ time on an $n$-vertex graph of \param{edge distance to block}~$k$.
\end{theorem}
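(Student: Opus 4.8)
The plan is to first compute a witnessing edge set and then, by a guessing step, reduce to polynomially many instances that can be handled through the block structure and \cref{obs:block}. As a preprocessing step I would compute a set $F$ of at most $k$ edges with $H := G - F$ a block graph. Since $G$ has edge distance to block $k$, such a set exists, and it can be found in $k^{2k}$ time by a branching procedure that repeatedly destroys the forbidden configurations of block graphs (induced diamonds and holes); I treat this as preprocessing contributing the additive $k^{2k}$ term. Let $S$ be the set of at most $2k$ endpoints of the edges in $F$.

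Main idea: any $\pi$-extending Hamiltonian path $\cP$ of $G$ uses some subset $F' \subseteq F$ of the extra edges. I would guess $F'$ together with the order in which its edges occur along $\cP$ and the orientation in which each is traversed, writing the guess as $e_1 = (a_1,b_1), \dots, e_j = (a_j,b_j)$, meaning that $\cP$ traverses $e_i$ from $a_i$ to $b_i$ and that $e_1,\dots,e_j$ occur in this order. The number of such ordered, oriented subsets is $\sum_{j=0}^{k}\binom{k}{j}\,j!\,2^j \le (k+1)\,k!\,2^k$, which yields the factor $k\cdot k!\cdot 2^k$. Deleting the unused edges $F\setminus F'$ and adding to $\pi$ the chain constraints $a_i \prec b_i$ and $b_i \prec a_{i+1}$ to obtain an order $\pi'$, it suffices to decide, for each guess, whether $H^+ := H + F'$ has a $\pi'$-extending Hamiltonian path that traverses each $e_i$ as prescribed: every such path is automatically a $\pi$-extending Hamiltonian path of $G$, and conversely every solution of $G$ is found under exactly one guess.

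For a fixed guess I would solve POHPP on $H^+$ via the block decomposition of \cite[Theorem~5.1]{beisegel2024computing}, which reduces the problem to solving POHPP on each block separately. The blocks that are cliques are trivial, since every linear extension of the induced order works. The only nontrivial blocks are the at most $j$ blocks into which the edges of $F'$ merge several cliques of $H$; each such block is $2$-connected, and deleting the chords of $F'$ it contains leaves a path (or tree) of cliques, i.e.\ a block graph. Crucially, the global guess fixes, for each such block, which of its chords are used and in which order and orientation. With this pattern pinned down, a Hamiltonian subpath of the block that respects the prescribed chords, the attachment endpoints handed down at its cut vertices, and the partial order can be computed by a dynamic program along the clique backbone in $n^{\O(1)}$ time. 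Summing the $n^{\O(1)}$ cost over the $\O(k\cdot k!\cdot 2^k)$ guesses and adding the $k^{2k}$ preprocessing gives the claimed bound.

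The main obstacle is exactly this last step. Because the endpoints $a_i,b_i$ of a used edge are in general already connected inside $H$, adding the edge genuinely creates a cycle, so one cannot simply splice the segments together while staying inside a block graph — no gadget that reinserts an $a_i$–$b_i$ connection can avoid producing a non-clique $2$-connected block. The technical heart is therefore to show that, once the chord usage inside such a $2$-connected block is fixed by the guess, the remaining routing problem on the underlying tree of cliques is polynomially solvable; this is what lets me use \cref{obs:block} for the clique blocks and dispatch the few augmented blocks by the separate polynomial routine.
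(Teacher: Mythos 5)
Your preprocessing and guessing phase matches the paper's: compute $F$ with $G-F$ a block graph (the paper cites Dumas et al.\ for the $k^{2k}+n^{\O(1)}$ bound rather than a branching argument), and enumerate ordered, oriented subsets of $F$ --- although the paper additionally guesses the \emph{start vertex} of the Hamiltonian path, which its subsequent routine needs. The gap is in the second phase, and you have named it yourself: everything hinges on the claim that, once the chord pattern is fixed, each augmented $2$-connected block of $H^+ = H + F'$ can be handled ``by a dynamic program along the clique backbone in $n^{\O(1)}$ time.'' No such dynamic program is described and nothing is proved about it; this is not a routine verification but the entire technical content of the theorem. Moreover, the reduction you invoke --- the block decomposition of \cite[Theorem~5.1]{beisegel2024computing} underlying \cref{obs:block} --- is stated for plain POHPP on graphs whose blocks are cliques; you would be applying it to a graph whose blocks are not cliques, and to a \emph{constrained} variant in which the path must traverse the guessed edges consecutively in a given order and orientation. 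The added constraints $a_i \prec b_i$ and $b_i \prec a_{i+1}$ in $\pi'$ do not enforce this, since a partial order restricts only relative positions, never adjacency on the path; so even the statement of the subproblem you hand to the block decomposition is not an instance of POHPP, and its polynomial solvability is exactly what remains open in your write-up.

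The paper closes this gap by a different route that never forms $H^+$: after normalizing the edge choice into a vertex sequence $u_1,\dots,u_\ell$ (merging incident guessed edges and updating $\pi$ accordingly), it works inside the block-cut tree of $G-F$, labels each cut vertex by the unique projection (the tree path from $u_i$ to $u_{i+1}$, $i$ even) it must serve --- rejecting the edge choice when two projections would relabel the same cut vertex --- and then runs a greedy traversal that, in each visited block, takes every vertex currently minimal in the remainder of $\pi$ except those whose label exceeds $i$, taking label-$i$ vertices only when forced. Correctness is established by a leftmost-counterexample argument: if the greedy gets stuck although a $\pi$-extending Hamiltonian path $\cP$ consistent with the edge choice existed, then the leftmost vertex of $\cP$ visited in a block where the greedy skipped it yields a contradiction. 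Some argument of this kind --- an explicit routing procedure together with a proof that local greedy choices cannot destroy a global solution --- is what your proposal still needs before it constitutes a proof.
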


\begin{proof}
   Due to Dumas et al.~\cite{dumas2024polynomial}, there is a $k^{2k} + n^{\O(1)}$ time algorithm that computes a set $F \subseteq E(G)$ of size $k$ such that $G - F$ is a block graph. Fix such a set $F$.

   Assume there is a $\pi$-extending Hamiltonian path of $G$. Then this path possibly uses some or all of the edges in $F$. We encode the way the path uses these edges by so-called edge choices. An \emph{edge choice} consists of a start vertex $u_0$ of the Hamiltonian path and an ordered subset $\hat{F}$ of $F$ such that for all edges in $\hat{F}$ one of the two directions in which the edge can be traverse is fixed. There are $\leq k \cdot k!$ ordered subsets of $F$, at most $2^k$ choices of the directions and $n$ choices for the start vertex.
   
   Therefore, there are at most $k \cdot k! \cdot 2^k \cdot n$ many edge choices.  In the following, we will call an edge choice \emph{valid} if there is a $\pi$-extending Hamiltonian path starting in $u_0$ such that all edges of $F$ that are part of this path are in $\hat{F}$ and are visited following the order that is fixed in that edge choice. 
   
   We first check whether the ordering of the vertices incident to $\hat{F}$ that is implied by our edge choice is a valid subordering of a linear extension of $\pi$. If this is not the case, we directly reject the edge choice. Otherwise, observe that vertices can be incident to more than one edge in an edge choice $\hat{F}$. However, in every valid edge choice, there are at most two edges of $\hat{F}$ incident to a particular vertex and these edges are consecutive in the ordering of $\hat{F}$. If this is the case, then we normalize the edge choice as follows. A sequence of edges in $\hat{F}$ where consecutive edges share an endpoint is replaced by a new edge going from the first vertex $u$ of this sequence to the last vertex $v$ of the sequence. The partial order is updated accordingly, i.e., all predecessors of some vertex in the sequence that are not part of the sequence become predecessors of $u$ and all successors that are not part of the sequence become successors of $v$.

    This normalization step produces a sequence of pairwise different vertices $(u_1, \dots, u_\ell)$ where for every \emph{even} number $i \in [\ell]$ vertices $u_{i-1}$ and $u_{i}$ are connected via an edge of $\hat{F}$ and $u_{i}$ and $u_{i+1}$ are not connected via an edge of $\hat{F}$. Note that it is possible that the start vertex $u_0$ is equal to $u_1$.
    
    Now, the task of the algorithm is to fill the remaining vertices into the gaps between the $u_i$ or into the gap after $u_\ell$. Let $i \in [\ell] \cup \{0\}$ be even. Consider the block-cut tree $\T$ of $G - F$. For any path $P$ in $G - F$, we define the projection of $P$ into $\T$ as the path in $\T$ that contains for every vertex in $P$ either the block in which it lies if there is a unique one or the vertex itself if the vertex is a cut vertex. As $\T$ is a tree, any path between $u_i$ and $u_{i+1}$ in $G - F$ has the same projection into $\T$. We first compute these projections for every pair $(u_i, u_{i+1})$ where $i$ is even. If the path projection contains cut vertices, then we label these cut vertices with label $i$. Note that we can reject the edge choice if we have to relabel a cut vertex since a cut vertex cannot be used in two different projections. After this process we label all the unlabeled vertices -- in particular all the vertices that are not cut vertices -- with label $-1$. 

    We now traverse the vertices $u_i$ in increasing order. If $i$ is odd, then we directly go to $u_{i+1}$. Otherwise, we follow the projection between $u_i$ and $u_{i+1}$. Whenever we visit a block, we take all vertices that are possible due to $\pi$ except for those vertices that have a label that is larger than $i$. Note that this might include cut vertices to blocks that we do not enter directly afterwards. Vertices that have label exactly $i$ are taken only if there is no other unvisited vertex in that block that can be taken. This is because we have to leave the block when we visit a vertex with label $i$. If we get stuck during this process, we reject that edge choice. Otherwise, we eventually reach vertex $u_\ell$. If the remaining vertices do not induce a connected graph, then we again reject the edge choice. Otherwise, we use the algorithm of \cref{obs:block} to check whether we can traverse these vertices starting in $u_\ell$. 
    
    It is obvious that a Hamiltonian path constructed by the algorithm is $\pi$-extending. It remains to show that the algorithm only fails to find a Hamiltonian path for some edge choice if there is no valid Hamiltonian path for that choice. To this end, assume that we reach a point where our algorithm gets stuck. This means there is no vertex minimal in the remainder of $\pi$ that is adjacent to the last visited vertex. In particular, either $u_{i+1}$ or the next cut vertex on the way to $u_{i+1}$ cannot be taken. Assume for contradiction that there is a $\pi$-extending Hamiltonian path $\cP$ in $G$ following our edge choice. Obviously, $\cP$ traverses the blocks of $G$ and the $u_i$'s in the same order as we have done. However, there must be a vertex $x$ that is traversed in $\cP$ during a block visit where our algorithm did not visit vertex $x$. Let $x$ be the leftmost vertex in $\cP$ that fulfills this property. There are two options why vertex $x$ was not visited by our algorithm. If its label did not fit, then $x$ would also not have been visited in this block in $\cP$  (since otherwise the path $\cP$ would not fit to our edge choice). The only other option is an unvisited vertex that is forced to be to the left of $x$ by $\pi$. However, this vertex would have been visited in $\cP$ contradicting the choice of $x$. Hence, $\cP$ cannot exist.
\end{proof}

We can use the algorithm given in \cref{thm:block-edge} to develop an \XP{} algorithm for the parameter \param{(vertex) distance to block}.

\begin{theorem}
    If the \param{distance to block} of a graph $G$ with $n$ vertices is $k$, then  POHPP can be solved in $n^{\O(k)}$ time.
\end{theorem}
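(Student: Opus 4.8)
The plan is to reuse, almost verbatim, the reduction employed above for the \param{feedback vertex set number}, replacing the feedback-edge-set algorithm by the edge-distance-to-block algorithm of \cref{thm:block-edge}. First I would obtain a set $W \subseteq V(G)$ with $|W| = k$ such that $G - W$ is a block graph, either by an \FPT{} algorithm for \param{distance to block} or simply by enumerating all $\binom{n}{k} = n^{\O(k)}$ candidate sets and testing each in polynomial time. Fix one such set $W$.

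For each vertex of $W$ I would then guess its immediate predecessor and its immediate successor along the sought Hamiltonian path, where either choice may instead be the decision that the vertex is the global start or the global end. Call such a guess a \emph{vertex choice}; there are $n^{\O(k)}$ of them. A vertex choice strings the vertices of $W$ together into runs whose interior consists solely of $W$-vertices and whose two ends lie in $G - W$ (unless the run is incident to the global start or end). I build a graph $G'$ from $G - W$ by contracting each such run to a single new edge joining its first and last $(G-W)$-vertex, and I update $\pi$ to a partial order $\pi'$ by turning the $\pi$-predecessors of the interior vertices of a run into predecessors of its start vertex and the $\pi$-successors into successors of its end vertex (and, for a run realizing the global start or end, by forcing the adjacent $(G-W)$-vertex to be minimal or maximal). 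Since $|W| = k$, at most $k$ new edges are added, and writing $F$ for this set of edges we have $G' - F = G - W$, a block graph, so $G'$ has \param{edge distance to block} at most $k$.

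I would then run the algorithm of \cref{thm:block-edge} on $(G', \pi')$, restricted to those edge choices whose underlying set is all of $F$ and whose traversal directions agree with the vertex choice; this is legitimate because $G' - F$ is a block graph, and the inner routine (which internally invokes \cref{obs:block}) either produces a $\pi'$-extending Hamiltonian path using every edge of $F$ in the prescribed way, or reports failure. Re-expanding each contracted edge into its run of $W$-vertices turns such a path into a $\pi$-extending Hamiltonian path of $G$ realizing the vertex choice; conversely every $\pi$-extending Hamiltonian path of $G$ induces a vertex choice whose lift the procedure finds. Hence $G$ is a yes-instance if and only if some choice of $W$ together with some vertex choice succeeds. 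The overall running time is $n^{\O(k)}$ choices of $W$, times $n^{\O(k)}$ vertex choices, times the cost $k \cdot k! \cdot 2^k \cdot n^{\O(1)} + k^{2k}$ of one call to \cref{thm:block-edge}, which is $n^{\O(k)}$ in total.

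The main obstacle I expect lies in the bookkeeping that makes this forcing faithful: I must ensure that the invocation of \cref{thm:block-edge} really uses all of $F$ (so that no connector through $W$ is silently skipped), that several consecutive $W$-vertices are correctly merged into a single run with their constraints relocated onto the run's endpoints in $\pi'$, and that the degenerate runs incident to the global start or end are handled so that no constraint of $\pi$ is lost. Verifying that $\pi'$ encodes exactly the constraints of $\pi$ after these relocations is the delicate, though ultimately routine, part of the argument.
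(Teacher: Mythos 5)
Your proposal is correct and follows essentially the same route as the paper: enumerate a deletion set $W$ of size $k$, guess for each $W$-vertex its predecessor/successor (or global start/end), contract the resulting runs through $W$ into at most $k$ new edges while relocating the $\pi$-constraints of interior vertices onto the run endpoints, and then invoke the edge-choice subroutine of \cref{thm:block-edge} on the resulting instance of \param{edge distance to block} at most $k$. The only cosmetic difference is that the paper folds the ordering of $W$ into the vertex choice and calls the subroutine for that single edge choice, whereas you let the subroutine iterate over the orderings of the contracted edges; both yield the same $n^{\O(k)}$ bound.
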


\begin{proof}
    First note that we can find a vertex set $W$ such that $G - W$ is a block graph and $|W| = k$ in time $n^{\O(k)}$ by enumerating all vertex subsets of size $k$. Fix one of these sets $W$. For every vertex in $W$, we choose one predecessor and one successor (or we decide that the vertex is the first or the last vertex of our Hamiltonian path). Furthermore, we fix an ordering of the vertices in $W$. We call these decisions a \emph{vertex choice}. There are $n^{\O(k)}$ many vertex choices. 
    
    For every of those vertex choices, we construct the graph $G'$ as follows. We delete all vertices of $W$ from $G$. Consider the pairs of predecessors and successors of the vertices in $W$. If they form a path starting and ending in vertices of $G - W$, we add an edge to $G'$ connecting the first vertex $s$ of this path with the last vertex $t$ of the path. We update the partial order $\pi$ accordingly, i.e., for any inner vertex on this path, we make its predecessors in $\pi$ to predecessors of $s$ and its successors in $\pi$ to successors of $t$. If we have chosen some vertex of $W$ as start vertex of the Hamiltonian path, then we add constraints to the partial order that makes $t$ (the first successor in $G-W$) the start vertex. Equivalently, if a vertex of $W$ is chosen to be the end vertex of the Hamiltonian path, then we make the vertex $s$ (the last predecessor in $G-W$) the end vertex of the partial order. We call the resulting updated partial order $\pi'$.
    
   It is easy to see that a $\pi$-extending Hamiltonian path of $G$ that follows our vertex choice directly maps to a $\pi'$-extending Hamiltonian path of $G'$ that traverses all the added edges in the order that is implied by the ordering of the vertices in $W$ and by the chosen predecessors and successors. Therefore, we can use the subroutine of \cref{thm:block-edge} that checks for the validity of an edge choice to decide the validity of our vertex choice here. This takes $n^{\O(1)}$ time. Thus, checking all the vertex choices needs $n^{\O(k)}$ time in total.
\end{proof}

As we have seen in \cref{thm:w1-d2c}, there is no \FPT{} algorithm for \param{distance to cluster modules}. However, we can give such an algorithm for \param{distance to clique modules}.

\begin{theorem}
    Given an $n$-vertex graph $G$ with \param{distance to clique module}~$k$,  POHPP can be solved in time $k! \cdot n^{\O(1)}$.
\end{theorem}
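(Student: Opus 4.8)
The plan is to exploit the promised structure to brute-force only over $W$ while handling the clique $K=G-W$ analytically. The first step is to pin down what a clique module is: a set $K\subseteq V(G)$ induces a clique that is a module if and only if all vertices of $K$ are pairwise \emph{true twins} in $G$, i.e. they share one common closed neighbourhood. Indeed, for $u,v\in K$ the module property gives $N(u)\setminus K=N(v)\setminus K$ and the clique property gives $K\setminus\{u\}\subseteq N(u)$, so $N[u]=K\cup(N(u)\setminus K)=N[v]$; conversely a whole true-twin class is a clique and a module. Hence the \param{distance to clique module} equals $n$ minus the size of the largest true-twin class, and a witnessing $W$ with $K=G-W$ can be computed in $\O(n^2)$ time by bucketing vertices by closed neighbourhood. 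Two facts then drive everything: $K$ is a clique, so any vertices placed consecutively inside $K$ may be ordered arbitrarily; and $K$ is a module, so each $w\in W$ is adjacent either to \emph{all} of $K$ (call such $w$ \emph{full}) or to none.

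Next I would enumerate the $k!$ relative orders of $W$. In any Hamiltonian path the vertices of $W$ occur in some order $w_1,\dots,w_k$, and we try all $k!$ of them. Fixing the order splits the path into $k+1$ gaps $G_0,\dots,G_k$, where $G_0$ precedes $w_1$, the internal gap $G_i$ lies between $w_i$ and $w_{i+1}$, and $G_k$ follows $w_k$; the remaining task is to distribute the vertices of $K$ among these gaps. Connectivity is controlled entirely by fullness: an internal gap $G_i$ may be nonempty only if both $w_i$ and $w_{i+1}$ are full, and may be left empty only if $w_iw_{i+1}\in E(G)$; an end gap may be nonempty only if its incident $W$-vertex is full and may always be left empty. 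For the fixed order I first reject it if $\pi|_W$ is violated, then precompute for each gap whether it is \emph{fillable} and whether it is \emph{forced} (fillable but not allowed to be empty).

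The core reduction uses the module property: since every full $w$ sees all of $K$, any $K$-vertex may sit at a gap boundary, so adjacency never distinguishes $K$-vertices and only $\pi$ matters. A constraint $w_j\prec_\pi x$ forces $x$ into a gap $G_i$ with $i\ge j$, while $x\prec_\pi w_j$ forces $i\le j-1$; intersecting these and restricting to fillable gaps yields for each $x\in K$ an admissible interval $[lo_x,hi_x]$ of gap indices. A constraint $x\prec_\pi y$ inside $K$ additionally requires the chosen assignment $g$ to satisfy $g(x)\le g(y)$. Thus the fixed order is realizable exactly when we can pick $g(x)$ in each admissible set, monotone along $\prec_\pi$, hitting every forced gap at least once; we then order each gap by any linear extension of $\pi|_K$, and the concatenation $G_0,w_1,G_1,\dots,w_k,G_k$ is a $\pi$-extending Hamiltonian path (all four constraint types --- $W$--$W$, $K$--$W$, $W$--$K$, $K$--$K$ --- are satisfied by construction).

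It remains to solve this assignment in $n^{\O(1)}$ time. The monotone-interval part is feasible-checkable by the folklore greedy earliest assignment: process $K$ in a linear extension of $\pi|_K$ and set $g(x)$ to the smallest admissible gap at least $\max\bigl(lo_x,\max_{u\prec_\pi x}g(u)\bigr)$, reporting infeasibility if this exceeds $hi_x$. Feasibility of the full instance, including coverage of forced gaps, I would cast as a network-flow problem (gaps as sinks, with lower bound $1$ on forced gaps and capacity $0$ on non-fillable ones, $K$-vertices as sources with their admissible intervals, and precedence enforced by the sink ordering) or settle by an exchange argument between the earliest and latest feasible assignments. I expect the main obstacle to be precisely this coverage requirement: a forced gap insists on holding a $K$-vertex, and simultaneously meeting all forced gaps while respecting every precedence $g(x)\le g(y)$ is the one place where a naive greedy can fail and where care is needed to prove the polynomial-time check correct. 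Since $W$ is found in $n^{\O(1)}$ time and each of the $k!$ orders is processed in $n^{\O(1)}$ time, the total running time is $k!\cdot n^{\O(1)}$.
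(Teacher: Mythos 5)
Your setup is sound and is, in essence, a clean reformulation of the paper's approach: you find $W$ via true-twin classes, enumerate the $k!$ orderings of $W$, reject those violating $\pi$ restricted to $W$, and reduce each surviving ordering to an assignment problem for the clique vertices with (i) admissible gap intervals $[lo_x,hi_x]$ coming from constraints between $K$ and $W$, (ii) monotonicity $g(x)\le g(y)$ for $x\prec_\pi y$, and (iii) coverage of forced gaps. Your fillable/forced conditions correctly absorb the paper's special treatment of vertices of $W$ that are non-adjacent to the clique (its ``secluded'' vertices and contraction step), and your equivalence between realizability of an ordering and feasibility of this assignment problem is argued correctly in both directions.

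The genuine gap is that you never actually solve the assignment problem, and that is where the paper spends essentially all of its effort. Your flow formulation does not work as stated: flows with lower bounds can express intervals and coverage, but the precedence constraints couple the source-side decisions of different $K$-vertices, and ``precedence enforced by the sink ordering'' is not a mechanism that exists in a flow network. Your alternative, an ``exchange argument between the earliest and latest feasible assignments,'' is also not the right statement: if every vertex has interval $[0,k]$ and a single middle gap is forced, both the earliest and the latest monotone assignments miss the forced gap even though the instance is feasible, so feasibility is not decided by comparing these two extremes; moreover, coverage is a matching condition (two forced gaps coverable only by the same vertex make the instance infeasible), so no per-vertex range comparison suffices. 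What would close the gap in your framework is a decoupling lemma: since $x\prec_\pi y$ implies $lo_x\le lo_y$ and $hi_x\le hi_y$ (this is exactly the paper's claim about its functions $\ell$ and $r$), any interval-respecting assignment that covers the forced gaps can be made monotone by repeatedly swapping the values of a violating pair --- the swap stays inside both intervals, preserves the multiset of used gaps, and a counting argument gives termination. With precedence thus eliminated, intervals plus coverage is a bipartite matching with interval structure, solvable by an earliest-deadline-first greedy. The paper instead proves correctness of a lazy greedy directly (place each clique vertex right before its deadline vertex of $W$; when a forced gap would otherwise stay empty, insert the available vertex with the smallest deadline), via a multi-case exchange argument over Hamiltonian paths. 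Either route requires a real proof, which your proposal, by its own admission, leaves open at its crux.
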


\begin{proof}
     First note that \param{distance to clique module} and a corresponding set $W$ can be computed in polynomial time since the largest clique module is equivalent to the largest set of vertices with the same closed neighborhood. Let $W$ be a set of $k$ vertices such that $C = G - W$ is a clique module of $G$. The algorithm considers all \(k!\) orderings $\rho = (v_1, \dots, v_k)$ of $W$ that fulfill the constraints of $\pi$. We call these orderings \emph{choices}.
    
    \subparagraph*{Removing secluded vertices} We say that a vertex of $W$ is \emph{secluded} if it is not adjacent to the vertices in $C$. We observe that the predecessor and the successor of a secluded vertex in an Hamiltonian path has to be an element of $W$. Let $(v_i, v_{i+1}, \dots, v_j)$ be a subsequence of $\rho$ such that $v_i$ and $v_j$ are not secluded but $v_{i+1}, \dots, v_{j-1}$ are secluded. We call $v_i$ and $v_j$ \emph{frontier vertices}. 
    Note that secluded vertices are defined with regard to \(W\), while a vertex is a frontier vertex for a fixed vertex choice \(\rho\).
    If some consecutive vertices in that subsequence are not adjacent in $G$, then a Hamiltonian path cannot follow that ordering of $W$. Thus, we can reject that choice. Otherwise, we contract this sequence to an edge $v_iv_j$. If there is a constraint $x \prec_\pi v_q$ with $i \leq q \leq j$, then the vertex $x$ has to be visited before $v_i$. Hence, we add the constraint $x \prec v_i$ to $\pi$. Similarly, if there is a constraint $v_q \prec_\pi y$ with $i \leq q \leq j$, then $y$ has to be visited after $v_j$ and we add the constraint $v_j \prec y$ to $\pi$. Let $\sigma = (w_1, \dots, w_{k'})$ be the ordering of the subset of non-secluded vertices $W' \subseteq W$ that results from that normalization. We define $G' :=  G[W' \cup C]$ and we call the updated partial order $\pi'$. The following claim is a direct consequence of the explanations above.

    \begin{claim*}\label{claim:secluded}
        If there is a $\pi$-extending path of $G$ following the ordering $\rho$ of $W$, then there is a $\pip$-extending path of $G'$ following the ordering $\sigma$ of $W'$.
    \end{claim*}

    \subparagraph{Adding clique vertices to $\sigma$} We now describe how the algorithm constructs an Hamiltonian path following the ordering $\sigma$ of $W'$. To this end, we define for every vertex $x \in C$ the values $\ell(x) = \max (\{i \mid w_i \prec_\pip x\} \cup \{0\})$ and $r(x) = \min (\{i \mid x \prec_\pip w_i \} \cup \{k'+1\})$. That is $\ell(x)$ is the last \(w_i\in \sigma\) such that \(w_i\) has to be left of \(x\), while \(r(x)\) is the first \(w_i\in \sigma\) that has to be right of \(x\).

    \begin{claim*}\label{claim:lr}
        Let $x,y \in C$. It holds that $\ell(x) < r(x)$ and $\ell(y) < r(y)$. If $x \prec_\pip y$, then $\ell(x) \leq \ell(y)$ and $r(x) \leq r(y)$.
    \end{claim*}

    \begin{claimproof}
        Let $(a,b) = (\ell(x), r(x))$. Assume that $a \neq 0$ and $b \neq k'+1$ (otherwise, $\ell(x) < r(x)$ trivially holds ). Since $w_a \prec_\pip x \prec_\pip w_b$, it holds that $w_a \prec_\pip w_b$. As $\rho$ fulfills the constraints of $\pi$, $a < b$ is true. 
        
        Now assume that $x \prec_\pip y$ and let $(c,d) = (\ell(y), r(y))$. If $a=0$ or $d=k' + 1$, then $a \leq c$ and $b \leq d$ trivially holds. Otherwise, we know that $w_a\prec_\pip x \prec_\pip y$ and $x \prec_\pip y \prec_\pip w_d$. By the definition of the functions $\ell$ and $r$, it holds that $a \leq c$ and $b \leq d$.
    \end{claimproof}

    For every pair $(i,j)$ with $0 \leq i < j \leq k' + 1$, we form a bucket $B_{i,j}$ containing those vertices $x$ with $(\ell(x), r(x)) = (i,j)$. The vertices in $B_{i,j}$ are ordered according to $\pip$, i.e., if there are two vertices $x,y \in B_{i,j}$ with $x \prec_\pip y$ then $x$ is to the left of $y$ in $B_{i,j}$. 

    The algorithm tries to fill the vertices of $C$ into the gaps between the $w_i$'s.
    We traverse $\sigma$ starting in $w_1$. Whenever we reach a vertex $w_i$, then we visit all the unvisited vertices $x \in C$ with $r(x) = i$ directly before $w_i$ following their ordering in $\pip$. Note that we can do this by visiting them in increasing order of $\ell(x)$, due to \cref{claim:lr}. We call these vertices \emph{forced vertices} since the partial order $\pip$ force them to be to the left of $w_i$.  If $i \neq 1$ and $w_iw_{i-1} \notin E(G)$, then we have to visit some vertex of $C$ between $w_{i-1}$ and $w_i$. Hence, if no unvisited vertex with $r(x) = i$ exists, we have to choose another vertex for the gap. Let $(a,b)$ be the tuple with minimal $b$ such that $a < i$ and $B_{a,b}$ contains an unvisited vertex. Note that $b > i$ because all vertices with $r(x) \leq i$ have already been visited. If no such tuple $(a,b)$ exists, then we reject the choice $\rho$. Otherwise, we choose $x$ to be the leftmost unvisited vertex in $B_{a,b}$ and visit it between $w_{i-1}$ and $w_i$. We call $x$ an \emph{unforced vertex} since the partial order $\pip$ did not force it to be to the left of $w_i$. We repeat this process until we have visited $w_k$ or we have rejected the choice. We add all the unvisited vertices after $w_k$ following their ordering in $\pip$. We call the resulting ordering $\tau$.

     \begin{claim*}\label{claim:taui}
        Let $\tau_i$ be the subordering that has been produced after $w_i$ was traversed. The ordering $\tau_i$ induces a path in $G$ and forms a prefix of a linear extension of $\pi$.
    \end{claim*}

    \begin{claimproof}
        First, we show that $\tau_i$ induces a path. If consecutive vertices are both in $C$, then they are adjacent by definition. If both of them are in $W'$, then they are adjacent since otherwise the algorithm would have added some vertex of $C$ between them. If one of them is in $C$ and the other is in $W'$, then they are also adjacent since there is no secluded vertex in $W'$. Hence, $\tau_i$ induces a path.

        It remains to show that $\tau_i$ is a prefix of a linear extension of $\pip$. The constraints on vertices of $W'$ are all fulfilled since we have checked this right at the beginning for $\rho$. If a vertex $x \in C$ was added left of some vertex $w_j \in  W'$, then $\ell(x) \leq j$ and, thus, $w_j \not \prec_\pip x$. If it has been added to the right of some vertex $w_j$, then $r(x) > j$ and, thus, $x \not\prec_\pip w_j$. Finally, consider the case that $x \prec_\pip y$ for some $x,y \in C$. Due to \cref{claim:lr}, it holds that $\ell(x) \leq \ell(y)$ and $r(x) \leq r(y)$. Consider the step in the algorithm where $y$ has been added to $\tau_i$. If $y$ is an forced vertex, then $x$ either was already visited (if $r(x) < r(y)$ or $x$ is an unforced vertex) or it has been added in the same step to the left of $y$. If $y$ is an unforced vertex, then $x$ has already been visited at this point since otherwise $x$ would have been chosen instead of $y$. Hence, $x$ is to the left of $y$ in $\tau_i$.
    \end{claimproof}

    This proves that the algorithm works correctly if it returns an ordering $\tau$. It remains to show that it also works correctly if it rejects the choice $\sigma$ of $W'$.

    \begin{claim*}
        If the algorithm rejects an ordering $\rho$ of $W$, then there is no $\pi$-extending Hamiltonian path of $G$ following that ordering of $W$.
    \end{claim*}

    \begin{claimproof}
        Due to \cref{claim:secluded}, it is sufficient to show that there is no $\pip$-extending Hamiltonian path of $G'$ following $\sigma$. Assume for contradiction that such a Hamiltonian path exists. Let $w_i$ be the vertex where we have rejected and let $\tau_i$ be the ordering that was constructed till that step. We will show in the following that there is a $\pip$-extending Hamiltonian path following the decisions of our algorithm, contradicting the fact that our algorithm rejected $\rho$.
        
        First observe that the only option for a rejection is that we had to take an unforced vertex but we did not find a suitable one in the unvisited vertices. Let $\cP$ be a $\pip$-extending Hamiltonian path of $G'$ following $\sigma$ whose common prefix with $\tau_i$ is as long as possible. 
        Let \(x\) be the first vertex on \(\tau_i\) after the common prefix. 
        If there are multiple \(\pip\)-extending paths with this prefix, we choose one where \(x\) is leftmost. %and where the next vertex of $\tau_i$ after that prefix is leftmost in $\cP$.
        Note that this path is well-defined as $\tau_i$ cannot be a prefix of $\cP$ because otherwise the algorithm would have found a suitable unforced vertex. 
        %Let $x$ be the vertex after the common prefix in $\tau_i$ 
        Let $y$ be the vertex after the common prefix in $\cP$.

        Now we consider the different cases for $x$. If $x$ is a vertex $w_j$, then $y$ and all other vertices between $y$ and $w_j$ in $\cP$ have to be unforced vertices, i.e., $r(z) > j$ for all these vertices $z$. We put all these vertices to the right of $w_j$ following their order in $\cP$. We claim that this results in another $\pip$-extending Hamiltonian path $\cP'$ of $G'$. It is obvious that the result is still a Hamiltonian path as the moved vertices are adjacent to all other vertices in $G'$. Furthermore, as observed above, none of the moved vertices is forced by $\pi'$ to be to the left of $w_j$. Therefore, $\cP'$ is a $\pip$-extending Hamiltonian path with a longer common prefix with $\tau_i$; a contradiction to the choice of $\cP$.

        Next, assume $x$ is a forced vertex, i.e., $r(x) = j$ where $w_j$ is the leftmost vertex of $W'$ to the right of $x$ in $\tau_i$. We know that $x$ has to be between $y$ and $w_j$ in $\cP$. We just move $x$ to the front of $y$. Since $x$ and $y$ are not in $W'$, they form universal vertices in $G'$ and, thus, this results in another Hamiltonian path. By \cref{claim:taui}, the ordering $\tau_i$ is a prefix of a linear extension of $\pip$; hence this Hamiltonian path is also $\pip$-extending. Again this contradicts the choice of $\cP$ as the new Hamiltonian path has a longer common prefix with $\tau_i$.

        Finally, assume that $x$ is an unforced vertex, i.e., $r(x) > j$ where $w_j$ is the leftmost vertex of $W'$ to the right of $x$ in $\tau_i$. This implies that $x$ is the single vertex between the non-adjacent vertices $w_{j-1}$ and $w_j$ in $\tau_i$. This further implies that $y$ is also an unforced vertex since there must be a vertex of $C$ between $w_{j-1}$ and $w_j$ and all vertices $z$ with $r(z) = j$ are to the left of $x$ in $\tau_i$ (otherwise $x$ would not have been chosen by the algorithm). Due to the choice of the algorithm, we know that $r(y) \geq r(x)$. Let now $z \in C$ be the rightmost vertex in $\cP$ that is to the left of $x$ and fulfills the condition $r(z) \geq r(x)$. As observed above, \(y\) is such a vertex, and thus \(z\) exists. We swap $x$ and $z$ in $\cP$. It is obvious that the result is still a Hamiltonian path as $x$ and $z$ are universal vertices in $G'$. It remains to show that the path is still $\pip$-extending. Let $z'$ be a vertex between $z$ and $x$ in $\cP$. Vertex $z'$ is not part of $\tau_i$.  Therefore, as $\tau_i$ is a prefix of a linear extension of $\pip$, the constraint $z' \prec x$ is not part of $\pip$. By the choice of $z$, it must hold that $r(z') < r(x) \leq r(z)$. Due to \cref{claim:lr}, the constraint $z \prec z' $ is also not part of $\pip$. Hence, the swap is allowed and results in another $\pip$-extending Hamiltonian path $\cP'$ of $G'$. However, either the common prefix of $\tau_i$ with $\cP'$ is longer than the common prefix with $\cP$ (if $y = z$) or $x$ is further left in \(\cP'\) than in $\cP$. In both cases, this contradicts the choice of $\cP$.
    \end{claimproof}

    \subparagraph*{Adding back the secluded vertices}
    Assume that the algorithm has not rejected a choice $\rho$ and has produced an ordering $\tau$ of $G'$. 
    To get an ordering of $G$, we have to add the deleted secluded vertices again. To do this, we need the following observation.

    \begin{claim*}\label{claim:frontier}
        No vertex of $C$ is put between two consecutive frontier vertices in $\sigma$.
    \end{claim*}

    \begin{claimproof}
        As consecutive frontier vertices $w_i$ and $w_{i+1}$ are adjacent in $G'$, a vertex $x \in C$ would have been added between them only if $r(x) = i+1$. However, all vertices in $C$ that are to the left of $w_{i+1}$ in $\pip$ are also to the left of $w_i$ in $\pip$, due to the normalization step described above. Thus, $r(x) \neq i+1$ for all vertices $x \in C$.
    \end{claimproof}

    Due to this claim, we can add the removed secluded vertices back between their corresponding frontier vertices and get the ordering $\phi$ of $G$. The predecessors of secluded vertices in $\pi$ have been forced by $\pip$ to be to the left of the left frontier vertex. Similarly, their successors in $\pi$ were forced to be to the right of the right frontier vertex. Hence, the ordering $\phi$ induces a $\pi$-extending Hamiltonian path of $G$.

    As the algorithm considers at most $k!$ orderings of $W$ and for every of these orderings it needs polynomial time, the claimed running time holds.
\end{proof}

\section{Open Problems}

The complexity results presented in \cref{sec:treewidth} leave two main open questions. What is the complexity of  POHPP for graphs of \param{treewidth}~3 and for grid graphs of height $h \in \{4,5,6\}$. \Cref{fig:parameters} presents two parameters where the complexity status is not completely clarified: \param{distance to outerplanar} and \param{edge clique cover number}. For both parameters,  POHPP is \W-hard but it is open whether it is also para-\NP-hard. While we are quite optimistic that there is an \XP{} algorithm for \param{distance to outerplanar}, the status for \param{edge clique cover number} is more obscure to us. 

For the \param{vertex cover number}, we have presented a polynomial kernel for  MinPOHPP. One may ask whether we can do the same for the other \FPT{} results given here. We are quite sure that we have a polynomial kernel for  MinPOHPP parameterized by the \param{feedback edge set number}. However, this kernel has turned out to be surprisingly complex. Therefore, we have decided to omit it here. Instead, we will present it in the journal version of the paper.

Finally, one may ask how the complexity of  POHPP behaves if we restrict both the graph and the partial order. It has been shown in \cite{beisegel2024computing} that  POHPP is in \XP{} and \W-hard when parameterized by the partial order's \param{width}. It is open what happens if we parameterize the problem by both the \param{width} of the partial order and some graph width parameter such as \param{treewidth} or \param{distance to clique}.

\bibliography{lit}
\bibliographystyle{plainurl}

\end{document}